\documentclass[12pt,a4paper]{article}
\PassOptionsToPackage{unicode}{hyperref}
\PassOptionsToPackage{hyphens}{url}
\PassOptionsToPackage{dvipsnames,svgnames,x11names}{xcolor}

\usepackage{amsmath}
\usepackage{graphicx, psfrag, epsf, orcidlink, setspace, enumitem}

\setlist[itemize]{
    itemsep=0pt,
    parsep=0pt
}

\usepackage[font=small]{caption}
\usepackage[round, authoryear]{natbib}
\usepackage{url} 
\usepackage[top=1.75in, bottom=1.75in, left=1.05in, right=1.05in]{geometry}
\usepackage{float,bm,booktabs,makecell,amsthm,amssymb,amsfonts,dsfont,mathtools,authblk,subcaption}

\newcommand{\const}{\mathsf{A}}
\newcommand{\esttype}{\mathcal E}

\DeclareMathOperator*{\argmin}{arg\,min}

\definecolor{darkblue}{rgb}{0,0,.6}
\hypersetup{citecolor=darkblue,linkcolor=darkblue,urlcolor=darkblue}

\usepackage{algorithm}
\usepackage{algorithmic}
\allowdisplaybreaks

\usepackage{mathrsfs}
\usepackage{titlesec}

\titlespacing*{\section}{0pt}{8pt plus 4pt minus 2pt}{6pt plus 2pt}
\titlespacing*{\subsection}{0pt}{6pt plus 3pt minus 2pt}{4pt plus 2pt}
\titlespacing*{\subsubsection}{0pt}{6pt plus 2pt minus 2pt}{4pt plus 1pt}

\usepackage[font=small]{caption}
\graphicspath{{plots/}}

\usepackage{amssymb}
\usepackage{hyperref}
\usepackage{multirow}
\usepackage{booktabs}
\usepackage{mathrsfs, xcolor} 
\definecolor{darkblue}{rgb}{0,0,.6}
\hypersetup{colorlinks = true, linkcolor=darkblue, urlcolor=darkblue, citecolor=darkblue}
\usepackage[patch=none]{microtype}
\usepackage{xr-hyper}
\usepackage{iftex}
\ifPDFTeX
  \usepackage[T1]{fontenc}
  \usepackage[utf8]{inputenc}
  \usepackage{textcomp} 
\else 
  \usepackage{unicode-math}
  \defaultfontfeatures{Scale=MatchLowercase}
  \defaultfontfeatures[\rmfamily]{Ligatures=TeX,Scale=1}
\fi
\usepackage{lmodern}
\ifPDFTeX\else  
\fi
\IfFileExists{upquote.sty}{\usepackage{upquote}}{}
\IfFileExists{microtype.sty}{
  \usepackage[]{microtype}
  \UseMicrotypeSet[protrusion]{basicmath} 
}{}
\makeatletter
\@ifundefined{KOMAClassName}{
  \IfFileExists{parskip.sty}{%
    \usepackage{parskip}
  }{
    \setlength{\parindent}{0pt}
    \setlength{\parskip}{6pt plus 2pt minus 1pt}}
}{
  \KOMAoptions{parskip=half}}
\makeatother
\usepackage{xcolor}
\makeatletter
\ifx\paragraph\undefined\else
  \let\oldparagraph\paragraph
  \renewcommand{\paragraph}{
    \@ifstar
      \xxxParagraphStar
      \xxxParagraphNoStar
  }
  \newcommand{\xxxParagraphStar}[1]{\oldparagraph*{#1}\mbox{}}
  \newcommand{\xxxParagraphNoStar}[1]{\oldparagraph{#1}\mbox{}}
\fi
\ifx\subparagraph\undefined\else
  \let\oldsubparagraph\subparagraph
  \renewcommand{\subparagraph}{
    \@ifstar
      \xxxSubParagraphStar
      \xxxSubParagraphNoStar
  }
  \newcommand{\xxxSubParagraphStar}[1]{\oldsubparagraph*{#1}\mbox{}}
  \newcommand{\xxxSubParagraphNoStar}[1]{\oldsubparagraph{#1}\mbox{}}
\fi
\makeatother

\usepackage{longtable,booktabs,array}
\usepackage{calc} 
\usepackage{etoolbox}
\makeatletter
\patchcmd\longtable{\par}{\if@noskipsec\mbox{}\fi\par}{}{}
\makeatother
\IfFileExists{footnotehyper.sty}{\usepackage{footnotehyper}}{\usepackage{footnote}}
\makesavenoteenv{longtable}
\usepackage{graphicx}
\makeatletter
\def\maxwidth{\ifdim\Gin@nat@width>\linewidth\linewidth\else\Gin@nat@width\fi}
\def\maxheight{\ifdim\Gin@nat@height>\textheight\textheight\else\Gin@nat@height\fi}
\makeatother
\setkeys{Gin}{width=\maxwidth,height=\maxheight,keepaspectratio}
\makeatletter
\def\fps@figure{htbp}
\makeatother

\makeatletter
\@ifpackageloaded{caption}{}{\usepackage{caption}}
\AtBeginDocument{%
\ifdefined\contentsname
  \renewcommand*\contentsname{Table of contents}
\else
  \newcommand\contentsname{Table of contents}
\fi
\ifdefined\listfigurename
  \renewcommand*\listfigurename{List of Figures}
\else
  \newcommand\listfigurename{List of Figures}
\fi
\ifdefined\listtablename
  \renewcommand*\listtablename{List of Tables}
\else
  \newcommand\listtablename{List of Tables}
\fi
\ifdefined\figurename
  \renewcommand*\figurename{Figure}
\else
  \newcommand\figurename{Figure}
\fi
\ifdefined\tablename
  \renewcommand*\tablename{Table}
\else
  \newcommand\tablename{Table}
\fi
}
\@ifpackageloaded{float}{}{\usepackage{float}}
\floatstyle{ruled}
\@ifundefined{c@chapter}{\newfloat{codelisting}{h}{lop}}{\newfloat{codelisting}{h}{lop}[chapter]}
\floatname{codelisting}{Listing}

\makeatother
\makeatletter
\@ifpackageloaded{caption}{}{\usepackage{caption}}
\@ifpackageloaded{subcaption}{}{\usepackage{subcaption}}
\makeatother

\ifLuaTeX
  \usepackage{selnolig}  
\fi
\usepackage[]{natbib}
\usepackage{bookmark}

\IfFileExists{xurl.sty}{\usepackage{xurl}}{} 
\hypersetup{
  pdftitle={Title},
  pdfauthor={Author 1; Author 2},
  pdfkeywords={3 to 6 keywords, that do not appear in the title},
  colorlinks=true,
  linkcolor={blue},
  filecolor={Maroon},
  citecolor={Blue},
  urlcolor={Blue},
  pdfcreator={LaTeX via pandoc}}

\newcommand{\anon}{1}

\begin{document}

\def\spacingset#1{\renewcommand{\baselinestretch}%
{#1}\small\normalsize} \spacingset{1}


\def\be{\begin{equation}}
\def\ee{\end{equation}} 
\def\ben{\begin{equation*}}
\def\een{\end{equation*}}
\def\bea{\begin{eqnarray}}
\def\eea{\end{eqnarray}}
\def\bda{\begin{eqnarray*}}
\def\eda{\end{eqnarray*}}
\numberwithin{equation}{section}

\newtheoremstyle{spacedplain}
  {8pt plus 2pt minus 1pt} 
  {8pt plus 2pt minus 1pt} 
  {\itshape}
  {0pt}
  {\bfseries}
  {.}
  {5pt plus 1pt minus 1pt}
  {}
\theoremstyle{spacedplain}
\newtheorem{definition}{Definition}
\newtheorem{theorem}{Theorem}
\newtheorem{proposition}{Proposition}
\newtheorem{corollary}{Corollary}
\newtheorem{assumption}{Assumption}
\renewcommand\theassumption{A\arabic{assumption}}
\newtheorem{lemma}{Lemma}
\newtheorem{remark}{Remark}
\newtheorem{example}{Example}

\newtheorem{exmp}{Example}[section]
\AtEndDocument{\refstepcounter{theorem}\label{finalthm}}
\AtEndDocument{\refstepcounter{proposition}\label{finalprop}}
\newcommand{\pkg}[1]{{\normalfont\fontseries{b}\selectfont #1}} \let\proglang=\textsf \let\code=\texttt
\setlength{\abovedisplayskip}{4pt}
\setlength{\belowdisplayskip}{4pt}

\newcommand{\dist}{\mathscr D}

\if1\anon
{
  \title{\bf Quantifying Long-Range Dependence in Object-Valued Time Series}
  \author{\normalsize Won-Ki Seo\thanks{Corresponding author: School of Economics, Level 5, Social Sciences Building, University of Sydney, Sydney, NSW 2050, Australia; Email: won-ki.seo@sydney.edu.au} \orcidlink{0000-0001-6629-3027} \\
  \vspace{-.14in}
    School of Economics \\
    University of Sydney\\
    \vspace{.14in}
    Jiazhen Xu \orcidlink{0009-0006-7870-0340} \ and \ Han Lin Shang\thanks{Shang is grateful for financial support from an Australian Research Council Future Fellowship (FT240100338).} \orcidlink{0000-0003-1769-6430}\\
    Department of Actuarial Studies and Business Analytics \\
    Macquarie University}
    \date{}
  \maketitle
} \fi

\if0\anon
{
  \bigskip
  \bigskip
  \bigskip
  \begin{center}
    {\LARGE\bf Quantifying Long-Range Dependence in Object-Valued Time Series}
\end{center}
  \medskip
} \fi

\bigskip
\begin{abstract}
Object-valued time series, including distributions, covariance matrices, networks, and compositions, lack the linear structure required by conventional autocovariance-based definitions of long memory. We develop an intrinsic framework for defining and estimating long memory in metric spaces of negative type. An isometric embedding yields a centered Hilbert-valued process, and the trace of its lag-covariance operator provides a signed, additive measure of temporal dependence. Crucially, this trace equals the difference between the marginal mean pairwise distance and expected lagged distance, so the framework and its estimators use only distances between the original objects. We define the memory parameter through the nonsummable decay of this trace and show that it coincides with the usual parameter in Hilbert-valued settings. Estimation uses Bartlett aggregates of distance-based lag measures. Estimating the common marginal distance from the same dependent sample induces a common-centering bias in these aggregates. We derive its finite-sample form and propose iterated block-difference corrections, log-ratio and multi-bandwidth log-slope estimators, and a localized self-consistency refinement. We establish consistency, document substantial bias reduction in simulations, and find long-memory evidence in foreign-exchange return distributions and U.S. electricity-generation compositions.
\end{abstract}

\noindent%
{\it Keywords:} Bartlett’s tapers, bias correction, long-memory parameter, iterated block-difference adjustment, non-Euclidean random objects
\vfill

\newpage
\spacingset{1.7} 

\section{Introduction}



The analysis of modern time series increasingly involves observations that take the form of random objects, such as probability distributions, covariance matrices, networks, and geometric shapes \citep{marron2021object}. Unlike Euclidean-valued observations, these objects generally reside in metric spaces without linear operations such as addition and subtraction. Recent work has developed autoregressive models for distributional, network, compositional, and spherical time series \citep[see, e.g.,][]{zhang2022wasserstein, chen2023wasserstein, zhu2023autoregressive, jiang2023autoregressive, zhu2024spherical, xu2026spherically}. Because these models are typically formulated under short-memory dynamics, assessing their adequacy requires determining whether the observed series has short or long memory. Ignoring long memory can distort variance estimation, invalidate standard inference, and impair long-horizon forecasting. Long memory has been documented in various applications, including age-specific fertility and mortality rates \citep[see, e.g.,][]{peters2021statistical,HNZ26}, intraday volatility curves \citep{KSZ26}, and volatility network time series \citep{boetti2025long}. These analyses, however, rely on a linear space structure, either intrinsic or induced by a suitable transformation, and do not extend to general object-valued time series.   

In scalar time series, long memory is characterized by nonsummable autocovariance decay or low-frequency behavior of the spectral density \citep[see, e.g.,][]{granger1980introduction,geweke1983estimation,robinson2003time}. For functional time series in Hilbert spaces, \citet{salish2019moment} develop a similar moment-based notion through autocovariance operators. These formulations rely on linear and inner-product structure for centering by subtraction, orthonormal basis expansions, and covariance operator analysis, none of which is available in a general metric space. A purely metric-based alternative is the autodistance covariance of \citet{zhou2012measuring}, recently extended to object-valued time series by \citet{jiang2024testing}. It characterizes lagwise independence but is quadratic in the embedded lag covariance operator, and the resulting notion of memory differs from the conventional one (Section~\ref{sect::compare with ADCV} of the supplementary material). The comparison illustrates what is required here: a metric-based dependence measure whose decay across lags yields a memory parameter compatible with the conventional one. To our knowledge, no general framework based on such a measure has been established.

We address this by introducing a canonical definition of long memory for object-valued time series taking values in metric spaces of negative type, a broad class that includes many object spaces used in practice \citep{lyons2013distance}. Such a space admits a natural Hilbert-space representation, under which the embedded process can be centered and its lag covariance operators are well defined. We show that the trace of the resulting lag covariance operator provides a signed measure of temporal dependence, and define long memory through the nonsummable decay of this trace as the lag increases. Importantly, the trace can be computed directly from pairwise distances among the original objects. Thus, although the Hilbert-space embedding provides the theoretical foundation, the definition itself is intrinsic: neither the embedding map nor the embedded observations need to be explicitly constructed. Under suitable operator-decay conditions, the resulting parameter recovers the conventional memory parameter for Hilbert-valued time series, thereby providing a common notion of memory across this broad class of object-valued time series. It also agrees with the natural memory parameter in the representative examples considered below.

Building on this definition, we develop an estimation framework based entirely on pairwise distances among the observed objects. The trace autocovariance at each lag can be estimated directly from distance averages, and the resulting lagwise estimates can then be combined through Bartlett-weighted aggregates whose growth across bandwidths identifies the memory parameter. This leads naturally to log-ratio and multi-bandwidth log-slope estimators that can be implemented directly in the original metric space, without imposing a linear structure on the object space, explicitly constructing the Hilbert-space embedding, or estimating projection directions. While the use of Bartlett-weighted aggregates is related to the variance-type approach of \citet{giraitis1999variance}, estimation in the present setting introduces a significant difficulty that does not arise in the same way for conventional covariance-based methods. The marginal distance level needed to center the lagwise dependence measures is unknown and must be estimated from the same dependent observations. The resulting centering error is repeated across all lags and accumulates when the lagwise estimates are aggregated, leading to potentially substantial finite-sample distortion even though the basic estimators remain consistent.

To address this problem, we first characterize the finite-sample effect of the common centering error and then exploit how this effect changes with sample size to target its leading component. This motivates a data-driven block-difference adjustment based on pairwise-distance averages computed over blocks of different sizes. Iterating the adjustment yields bias-corrected versions of the log-ratio and log-slope estimators, which are further improved through a localized self-consistency refinement. The resulting procedures admit zero as a value of the memory parameter, and cover both long- and short-memory regimes within a unified estimation framework.

Together, these developments provide, to our knowledge, a general framework for defining and estimating long memory in object-valued time series. Its main contributions are as follows:
\begin{itemize}
\item We define long memory intrinsically through pairwise distances in metric spaces of negative type, without requiring an explicit Hilbert embedding. The parameter agrees with its conventional counterpart for familiar examples of Hilbert-valued functional time series.
\item We develop estimators from the bandwidth growth of Bartlett aggregates and derive the exact effect of common centering on these aggregates. Iterated block-difference adjustment targets the leading component of this effect, followed by localized self-consistency refinement. All estimators admit zero, accommodating short memory.
\item We prove consistency under primitive dependence and rate conditions, distinguish the main finite-sample bias sources, and assess performance in extensive numerical studies.
\end{itemize}

The remainder of the paper is organized as follows. Section~\ref{sect::memory def} introduces the long-memory parameter for object-valued time series. Section~\ref{sect::estimation} develops the estimators, together with bias-corrected versions and a localized self-consistency refinement. Section~\ref{sec_sim} evaluates the finite-sample performance of the proposed methods through simulations, while the real data analysis on intraday log return distributional time series and the U.S. energy generation compositional time series can be found in Section~\ref{sect::real data}. Finally, Section~\ref{sec:conclu} synthesizes our primary contributions. To enhance readability, certain technical details and additional numerical results are deferred to the supplementary material.

\section{Long-memory parameter for object-valued time series}\label{sect::memory def}
\subsection{Preliminaries}
Let $(\Omega,\dist)$ be a separable metric space and fix $x_0\in\Omega$. Let $M_1(\Omega)$ denote the set of Borel probability measures $\mu$ satisfying $\int_\Omega \dist(x,x_0)\,d\mu(x)<\infty$. Let $\{X_t\}_{t\in\mathbb Z}$ be a strictly stationary $\Omega$-valued time series with marginal law $\nu\in M_1(\Omega)$. For independent $X,X'\sim\nu$, define the mean pairwise distance by $D_\nu=\mathbb E \dist(X,X')$. The triangle inequality gives $D_\nu\leq2\mathbb E \dist(X,x_0)<\infty$. Additional distance-moment conditions are imposed where needed. 

Throughout, we reserve $\dist$ for the metric and the upright $\mathrm d$ for the memory parameter introduced in Definition~\ref{def_trace}. Let $\mathcal H$ be a separable Hilbert space with inner product $\langle\cdot,\cdot\rangle_{\mathcal H}$ and norm $\|\cdot\|_{\mathcal H}$. For $h_1,h_2,h\in\mathcal H$, we use the tensor convention $(h_1\otimes h_2)h=\langle h_1,h\rangle_{\mathcal H}h_2$. For a trace-class operator $A$, write $\operatorname{tr}(A)$ for its trace and $\|A\|_1$ for its trace norm; $\|A\|_{\mathcal S_2}$ denotes the Hilbert--Schmidt norm. A measurable function $L:(0,\infty)\to(0,\infty)$ is slowly varying if $L(cx)/L(x)\to1$ for every fixed $c>0$. For real sequences $a_n$ and $b_n$ with $b_n\neq0$ eventually, $a_n\sim b_n$ means $a_n/b_n\to1$. We write $\to_p$ for convergence in probability and use $O_p(\cdot)$ and $o_p(\cdot)$ in their standard senses.

Establishing a canonical long-memory parameter requires isolating the second-order temporal decay rate, and we achieve this by focusing on metric spaces of negative type \citep{schoenberg1937certain,schoenberg1938metric}, which admit a natural Hilbert-space embedding. A metric space $(\Omega,\dist)$ is of negative type if, for arbitrary $x_1,\ldots,x_m\in\Omega$ and real numbers $r_1,\ldots,r_m$ satisfying $\sum_{i=1}^m r_i=0$, we have $\sum_{i=1}^m\sum_{j=1}^m r_i r_j \dist(x_i,x_j)\leq ~0$. Examples include $L^p$ spaces for $1\leq p\leq 2$, the unit sphere with great circle distance, and the hyperbolic space \citep[Theorem 3.6]{meckes2013positive}.

\begin{lemma}\label{lem_embed}
Suppose that $(\Omega,\dist)$ is of negative type, and let $\nu\in M_1(\Omega)$. Then the negative centered-distance kernel $K_\nu(x,x')=\int_\Omega \dist(x,u)\,d\nu(u)+\int_\Omega \dist(x',u)\,d\nu(u)-D_\nu-\dist(x,x')$ is positive semidefinite. More precisely, there exist a Hilbert space $\mathcal H$ and an isometric embedding $\phi:(\Omega,\dist^{1/2})\to\mathcal H$ such that $K_\nu(x,x')=2\langle\phi(x)-m_\phi(\nu),\phi(x')-m_\phi(\nu)\rangle_{\mathcal H}$, where $m_\phi(\nu)=\int_\Omega\phi(x)\,d\nu(x)$. Hence, with $\Phi_\nu(x)=\sqrt{2}\{\phi(x)-m_\phi(\nu)\}$, $K_\nu(x,x')=\langle\Phi_\nu(x),\Phi_\nu(x')\rangle_{\mathcal H}$.
\end{lemma}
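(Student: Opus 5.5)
We will rely on Schoenberg's classical characterization. A metric space $(\Omega,\dist)$ is of negative type exactly when $(\Omega,\dist^{1/2})$ embeds isometrically into a Hilbert space.

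\textbf{Constructing the embedding.} Fix $x_0\in\Omega$ and set
\[
k(x,x')=\tfrac12\{\dist(x,x_0)+\dist(x',x_0)-\dist(x,x')\}.
\]
Take reals $c_1,\ldots,c_m$ and points $x_1,\ldots,x_m$, and append the point $x_0$ with weight $r_0=-\sum_i c_i$. The resulting weights sum to zero, so the negative-type inequality applies. Expanding the quadratic form gives $\sum_{i,j}c_ic_jk(x_i,x_j)\ge0$, so $k$ is positive semidefinite.

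By Moore--Aronszajn, $k$ has a reproducing kernel Hilbert space $\mathcal H$. Define $\phi(x)=k(x,\cdot)$. Then
\[
\|\phi(x)-\phi(x')\|_{\mathcal H}^2=k(x,x)+k(x',x')-2k(x,x')=\dist(x,x').
\]
Since $k(x,x)=\dist(x,x_0)$, this is the required isometry from $(\Omega,\dist^{1/2})$ into $\mathcal H$. Separability of $\Omega$ together with continuity of $\phi$ makes $\overline{\operatorname{span}}\,\phi(\Omega)$ separable, so we may restrict $\mathcal H$ to it.

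\textbf{The mean embedding.} We need $m_\phi(\nu)=\int\phi\,d\nu$ to exist as a Bochner integral. The map $\phi$ is continuous, hence measurable, and $\|\phi(x)\|_{\mathcal H}=\dist(x,x_0)^{1/2}\le 1+\dist(x,x_0)$. The latter is $\nu$-integrable because $\nu\in M_1(\Omega)$, so $\phi$ is Bochner integrable.

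\textbf{The kernel identity.} We use polarization. For all $x,u$,
\[
\dist(x,u)=\|\phi(x)\|^2+\|\phi(u)\|^2-2\langle\phi(x),\phi(u)\rangle.
\]
Integrating this in $u$ against $\nu$, and using that the Bochner integral commutes with the continuous linear functional $\langle\phi(x),\cdot\rangle$, gives
\[
\int\dist(x,u)\,d\nu(u)=\|\phi(x)\|^2+\int\|\phi\|^2\,d\nu-2\langle\phi(x),m_\phi(\nu)\rangle.
\]
Integrating once more gives
\[
D_\nu=2\int\|\phi\|^2\,d\nu-2\|m_\phi(\nu)\|^2.
\]
Here $\int\|\phi\|^2\,d\nu=\int\dist(\cdot,x_0)\,d\nu<\infty$.

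Substitute these three expressions into $K_\nu(x,x')$. The $\|\phi(x)\|^2$, $\|\phi(x')\|^2$ and $\int\|\phi\|^2\,d\nu$ terms cancel, leaving
\[
K_\nu(x,x')=2\bigl\{\langle\phi(x),\phi(x')\rangle-\langle\phi(x),m\rangle-\langle\phi(x'),m\rangle+\|m\|^2\bigr\}
=2\langle\phi(x)-m,\phi(x')-m\rangle,
\]
with $m=m_\phi(\nu)$. Positive semidefiniteness of $K_\nu$ follows immediately, since it is a Gram kernel. The statement about $\Phi_\nu$ is just a rescaling by $\sqrt2$.

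\textbf{Main obstacle.} The only delicate points are integrability. The first is Bochner integrability of $\phi$, which needs only first moments of $\dist$ because the embedding carries $\dist^{1/2}$. The second is justifying the interchange of inner product and integral. Both follow from the bound $\|\phi(x)\|^2=\dist(x,x_0)$ together with $\nu\in M_1(\Omega)$.
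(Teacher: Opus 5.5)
Your proof is correct. It has the same structure as the paper's proof: a Schoenberg embedding, then the centered Gram identity, then positive semidefiniteness because $K_\nu$ is a Gram kernel. The difference is that the paper obtains both ingredients by citation. It takes the embedding from Schoenberg's theorem, as stated in \citet[p.~3292]{lyonsb2013distance}, and the identity $\dist_\nu(x,x')=-2\langle\phi(x)-m_\phi(\nu),\phi(x')-m_\phi(\nu)\rangle_{\mathcal H}$ from Proposition~3.5 of \citet{lyons2013distance}.

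You prove both pieces yourself. Your construction of $\phi$ as the feature map of the reproducing kernel Hilbert space of $k(x,x')=\tfrac12\{\dist(x,x_0)+\dist(x',x_0)-\dist(x,x')\}$ is exactly the proof of the needed direction of Schoenberg's theorem. Your appended-point argument for positive semidefiniteness of $k$ checks out, including the cross term $-2S\sum_i c_i\dist(x_i,x_0)$ and $\dist(x_0,x_0)=0$. You then derive the centered identity by polarization and Bochner integration, and every term is integrable because $\|\phi(u)\|_{\mathcal H}^2=\dist(u,x_0)$.

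What your route gains is transparency about the hypotheses.
\begin{itemize}
\item It shows that $\nu\in M_1(\Omega)$ alone suffices, both for the existence of $m_\phi(\nu)$ and for the interchange of inner product and integral. The paper leaves these checks to the cited source.
\item It produces a separable $\mathcal H$ from separability of $\Omega$. The paper's preliminaries assume separability, and the proof of Proposition~\ref{prop_cv} uses it tacitly when it expands the trace in an orthonormal basis indexed by $\ell\geq1$. The lemma as stated only asserts existence of some Hilbert space, so your separability remark closes a small loose end.
\end{itemize}
The paper's route gains brevity.

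(Correction to the citation key above: it should read \citet[p.~3292]{lyons2013distance}.)
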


Lemma~\ref{lem_embed} serves as a mathematical bridge. Because general metric spaces lack standard vector operations such as addition and inner products, classical covariance operators cannot be directly formulated. However, the isometric embedding guarantees that the negative centered-distance kernel $K_\nu$ in the original metric space is represented as an inner product in the Hilbert space $\mathcal H$. Furthermore, $\Phi_\nu(x)$ centers the embedded data. This motivates us to consider the mean-zero Hilbert-valued process $Y_t=\Phi_\nu(X_t)$, whose lag covariance operators capture temporal dependence after removing the nondecaying contribution of the embedded marginal mean.

The embedding identity and the standing condition $\nu\in M_1(\Omega)$ imply that $\mathbb E\|Y_t\|_{\mathcal H}^2=\mathbb E K_\nu(X_t,X_t)=D_\nu<\infty$. Because $\{Y_t\}$ is a well-defined, square-integrable, mean-zero process in a Hilbert space, we can leverage the standard functional time series machinery. We therefore define its lag-$k$ covariance operator as $\Gamma_k=\mathbb E(Y_t\otimes Y_{t-k})$. The following proposition establishes that this operator is trace-class and directly connects its trace to the pairwise distances in the original metric space.

\begin{proposition}\label{prop_cv}
Let $(\Omega,\dist)$ be of negative type and $Y_t=\Phi_\nu(X_t)$ be the centered Hilbert-valued process defined above. Then, for every $k\in\mathbb Z$, $\Gamma_k=\mathbb E(Y_t\otimes Y_{t-k})$ is trace-class and satisfies
\begin{equation}\label{eq_c}
C(k):=\operatorname{tr}(\Gamma_k)=\mathbb E\langle Y_t,Y_{t-k}\rangle_{\mathcal H}=D_\nu-\mathbb E \dist(X_t,X_{t-k}).
\end{equation}
\end{proposition}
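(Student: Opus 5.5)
The plan is to work entirely with the centered embedded process $Y_t=\Phi_\nu(X_t)$ from Lemma~\ref{lem_embed}. I first establish the integrability needed for $\Gamma_k$ to exist as a Bochner integral in the trace-class operators, then compute its trace through an orthonormal basis, and finally convert the inner product back into distances using the kernel identity $K_\nu(x,x')=\langle\Phi_\nu(x),\Phi_\nu(x')\rangle_{\mathcal H}$.

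\textbf{Step 1 (trace-class).} For $h_1,h_2\in\mathcal H$ the rank-one operator $h_1\otimes h_2$ is trace-class with $\|h_1\otimes h_2\|_1=\|h_1\|_{\mathcal H}\|h_2\|_{\mathcal H}$. By Cauchy--Schwarz and stationarity,
\[
\mathbb E\|Y_t\otimes Y_{t-k}\|_1\le\bigl(\mathbb E\|Y_t\|_{\mathcal H}^2\bigr)^{1/2}\bigl(\mathbb E\|Y_{t-k}\|_{\mathcal H}^2\bigr)^{1/2}=D_\nu<\infty,
\]
using $\mathbb E\|Y_t\|^2_{\mathcal H}=\mathbb E K_\nu(X_t,X_t)=D_\nu$ as noted before the proposition. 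Measurability needs a short check. The map $\phi$ is continuous from $(\Omega,\dist)$, since $\|\phi(x)-\phi(y)\|=\dist(x,y)^{1/2}$. Hence $Y_t$ is a measurable random element with separable range, because $\Omega$ is separable. Since the trace class $\mathcal S_1$ is a separable Banach space on the closed span of $\phi(\Omega)$, the Bochner expectation $\Gamma_k\in\mathcal S_1$ exists.

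\textbf{Step 2 (trace).} The trace is a bounded linear functional on $\mathcal S_1$, so it commutes with the Bochner integral. This gives $\operatorname{tr}(\Gamma_k)=\mathbb E\operatorname{tr}(Y_t\otimes Y_{t-k})=\mathbb E\langle Y_t,Y_{t-k}\rangle_{\mathcal H}$, because $\operatorname{tr}(h_1\otimes h_2)=\sum_j\langle h_1,e_j\rangle\langle h_2,e_j\rangle=\langle h_1,h_2\rangle$.

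\textbf{Step 3 (distance form).} By Lemma~\ref{lem_embed},
\[
\langle Y_t,Y_{t-k}\rangle_{\mathcal H}=K_\nu(X_t,X_{t-k})=g(X_t)+g(X_{t-k})-D_\nu-\dist(X_t,X_{t-k}),
\qquad g(x)=\int_\Omega\dist(x,u)\,d\nu(u).
\]
Each term is integrable: $g(x)\le\dist(x,x_0)+\mathbb E\dist(X,x_0)$, so $\mathbb E g(X_t)<\infty$, and $\dist(X_t,X_{t-k})\le\dist(X_t,x_0)+\dist(X_{t-k},x_0)$. Since $X_t\sim\nu$, Fubini gives $\mathbb E g(X_t)=D_\nu$, and likewise $\mathbb E g(X_{t-k})=D_\nu$. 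Taking expectations therefore yields $C(k)=2D_\nu-D_\nu-\mathbb E\dist(X_t,X_{t-k})=D_\nu-\mathbb E\dist(X_t,X_{t-k})$.

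The main obstacle is the technical part of Step 1: ensuring that $\Gamma_k$ is a genuine trace-class Bochner integral and that trace and expectation may be interchanged. An alternative that avoids Bochner integration in $\mathcal S_1$ is to define $\Gamma_k$ weakly through $\langle\Gamma_k h,h'\rangle=\mathbb E\langle Y_t,h\rangle\langle Y_{t-k},h'\rangle$. One then bounds $\sum_j|\langle\Gamma_k e_j,f_j\rangle|\le D_\nu$ for arbitrary orthonormal systems $(e_j)$, $(f_j)$ via Cauchy--Schwarz and Parseval, which gives the trace-class property. Dominated convergence, with dominating function $\|Y_t\|\|Y_{t-k}\|$, then justifies $\sum_j\langle\Gamma_k e_j,e_j\rangle=\mathbb E\langle Y_t,Y_{t-k}\rangle$.
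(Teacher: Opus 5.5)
Your proposal is correct and follows essentially the same route as the paper. Both arguments bound $\mathbb E\|Y_t\otimes Y_{t-k}\|_1$ by Cauchy--Schwarz, use the rank-one identity $\operatorname{tr}(h_1\otimes h_2)=\langle h_1,h_2\rangle_{\mathcal H}$, interchange trace and expectation, and then apply the kernel identity from Lemma~\ref{lem_embed} with $\mathbb E g(X_t)=\mathbb E g(X_{t-k})=D_\nu$. Your measurability remarks, the observation that $\operatorname{tr}$ is a bounded linear functional on $\mathcal S_1$ (which justifies the interchange), and the termwise integrability check supply details the paper leaves implicit.
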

We call $C(k)$ the \emph{trace autocovariance function} at lag $k$. As shown next, this signed, additive measure of second-order temporal dependence yields a canonical notion of long-range dependence.

\subsection{A canonical notion of long memory}

For the embedded process $Y_t=\Phi_\nu(X_t)$, the trace autocovariance function plays the same additive role in the variance of its partial sums as the conventional autocovariance  does in Euclidean time series: $\mathbb E\left\|\sum_{t=1}^nY_t\right\|_{\mathcal H}^2=nC(0)+2\sum_{k=1}^{n-1}(n-k)C(k)$. This identity motivates the use of $C(k)$ as the canonical covariance measure for object-valued time series. Because $\nu\in M_1(\Omega)$ already ensures that $C(k)$ is finite for every $k$, the following definition requires no stronger marginal moment conditions.

\begin{definition}\label{def_trace}
The object-valued time series $\{X_t\}$ has short memory if $\sum_{k=-\infty}^{\infty}|C(k)|<\infty$, in which case we assign the boundary value $\mathrm d=0$. It has long memory with parameter $\mathrm d\in(0,1/2)$ if there exist a constant $\const_C>0$ and a positive slowly varying function $L_C$ such that
\begin{equation}\label{eq_trace}
C(k)\sim\const_C k^{2\mathrm d-1}L_C(k),\quad k\to\infty.
\end{equation}
\end{definition}
\begin{remark}[Sign of the trace-memory coefficient]\label{rem_trace}
The positivity of $\const_C$ in Definition~\ref{def_trace} is forced by the covariance structure. To see this, consider the asymptotic relation \eqref{eq_trace} with a nonzero coefficient $\const_C$ of either sign. For $S_n=\sum_{t=1}^nY_t$, $\mathbb E\|S_n\|_{\mathcal H}^2=nC(0)+2\sum_{k=1}^{n-1}(n-k)C(k)\geq0$. Karamata's theorem (Lemma~\ref{lem_sum}) gives $2\sum_{k=1}^{n-1}(n-k)C(k)\sim\const_C\{\mathrm d(2\mathrm d+1)\}^{-1}n^{2\mathrm d+1}L_C(n)$, which dominates $nC(0)$ in absolute value. Hence $\const_C<0$ would make $\mathbb E\|S_n\|_{\mathcal H}^2$ negative for all large~$n$. Thus, positivity is inherent to the nonoscillatory regular-variation specification; oscillatory persistence would require a different tail or spectral formulation.
\end{remark}

The parameter $\mathrm d\in (0,1/2)$ is a genuine decay parameter only in the long-memory case. To place the two regimes within a common estimation framework, we adopt the boundary convention $\mathrm d=0$ under short memory. This convention does not impose a common decay rate on absolutely summable trace autocovariances; rather, it reflects the fact, established below, that the proposed estimators converge to zero under the short-memory condition.

The proposed definition is compatible with a natural operator notion of long memory for Hilbert-valued time series. Specifically, suppose that the lag covariance operators of the embedded process decay in a fixed operator direction at a nonsummable regularly varying rate:
\begin{equation}\label{eq_model}
\Gamma_k = k^{2\mathrm d_0-1}L(k)A+R_k, \qquad \mathrm d_0\in(0,1/2),
\end{equation}
where $A$ is a trace-class operator and $L$ is positive and slowly varying. Under this representation, $\mathrm d_0$ determines the temporal decay rate of the covariance operators, while $A$ describes its operator direction. The following proposition shows that the trace autocovariance function inherits the same decay rate whenever the leading operator component has a nonvanishing trace.

\begin{proposition}\label{prop_common}
Suppose that~\eqref{eq_model} holds and $\|R_k\|_1=o\bigl(k^{2\mathrm d_0-1}L(k)\bigr)$. If, in addition, $\operatorname{tr}(A)\neq0$, then $C(k)\sim \operatorname{tr}(A)k^{2\mathrm d_0-1}L(k)$, and hence the long-memory parameter in Definition~\ref{def_trace} is $\mathrm d=\mathrm d_0$.
\end{proposition}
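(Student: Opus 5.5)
The proof is direct. It rests on two facts: the trace is linear on trace-class operators, and $|\operatorname{tr}(B)|\le\|B\|_1$ for every trace-class $B$. Proposition~\ref{prop_cv} guarantees that each $\Gamma_k$ is trace-class, and $A$ is trace-class by assumption. Hence $R_k=\Gamma_k-k^{2\mathrm d_0-1}L(k)A$ is trace-class, and taking traces in \eqref{eq_model} gives
\begin{equation*}
C(k)=\operatorname{tr}(\Gamma_k)=k^{2\mathrm d_0-1}L(k)\operatorname{tr}(A)+\operatorname{tr}(R_k).
\end{equation*}

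Next I divide by $\operatorname{tr}(A)k^{2\mathrm d_0-1}L(k)$. This is nonzero for all $k\ge1$, because $L$ is positive and $\operatorname{tr}(A)\neq0$. The ratio is
\begin{equation*}
\frac{C(k)}{\operatorname{tr}(A)k^{2\mathrm d_0-1}L(k)}=1+\frac{\operatorname{tr}(R_k)}{\operatorname{tr}(A)k^{2\mathrm d_0-1}L(k)}.
\end{equation*}
The second term is bounded in absolute value by $\|R_k\|_1/\{|\operatorname{tr}(A)|k^{2\mathrm d_0-1}L(k)\}$. This tends to zero by the hypothesis $\|R_k\|_1=o(k^{2\mathrm d_0-1}L(k))$. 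Therefore $C(k)\sim\operatorname{tr}(A)k^{2\mathrm d_0-1}L(k)$.

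It remains to match Definition~\ref{def_trace}. I take $\const_C=\operatorname{tr}(A)$ and $L_C=L$, which is positive and slowly varying. This requires $\operatorname{tr}(A)>0$, whereas the hypothesis only gives $\operatorname{tr}(A)\neq0$. This sign question is the only point that needs care, and Remark~\ref{rem_trace} settles it. Because $2\mathrm d_0-1\in(-1,0)$, the asymptotic relation is a nonoscillatory regular-variation specification with nonzero coefficient. Karamata's theorem (Lemma~\ref{lem_sum}) then makes $2\sum_{k=1}^{n-1}(n-k)C(k)$ grow like $\operatorname{tr}(A)n^{2\mathrm d_0+1}L(n)/\{\mathrm d_0(2\mathrm d_0+1)\}$. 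This term dominates $nC(0)$, while $\mathbb E\|S_n\|_{\mathcal H}^2\ge0$ for every $n$. A negative $\operatorname{tr}(A)$ would therefore make $\mathbb E\|S_n\|_{\mathcal H}^2$ negative for large $n$, which is impossible. Hence $\operatorname{tr}(A)>0$, and the relation is exactly \eqref{eq_trace} with $\mathrm d=\mathrm d_0\in(0,1/2)$.

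Finally, $\mathrm d$ is uniquely determined by \eqref{eq_trace}. If two exponents both satisfied it, the ratio of the two right-hand sides would be a power $k^{2(\mathrm d-\mathrm d')}$ times a slowly varying function. Such a ratio cannot converge to a positive constant unless $\mathrm d=\mathrm d'$. This identifies $\mathrm d=\mathrm d_0$.
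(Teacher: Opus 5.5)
Your proof is correct and follows the paper's argument: take traces in \eqref{eq_model}, bound $|\operatorname{tr}(R_k)|\le\|R_k\|_1$ to get $C(k)\sim\operatorname{tr}(A)k^{2\mathrm d_0-1}L(k)$, and invoke Remark~\ref{rem_trace} to force $\operatorname{tr}(A)>0$. Your uniqueness-of-exponent step is a harmless detail the paper leaves implicit. The paper's extra Hilbert--Schmidt computation is only reused later for the ADCV comparison and is not needed here.
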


Thus, when regularly decaying covariance operators characterize long memory of the (embedded) Hilbert-valued process, Definition~\ref{def_trace} recovers its memory parameter from their traces.

We next illustrate Definition~\ref{def_trace} in three examples, each built from an underlying process with memory parameter $\mathrm d_{\rm E}\in(0,1/2)$. All three have long memory in the sense of Definition~\ref{def_trace}, with $\mathrm d=\mathrm d_{\rm E}$, as established in Proposition~\ref{prop::example memory parameter}. Throughout, $c>0$ and $L$ is positive and slowly varying.

\begin{example}[Real-valued time series]\label{ex_gaussian}
Consider $\Omega=\mathbb R$ with the Euclidean metric $\dist(x,y)=|x-y|$, and let $\{X_t\}$ be a standard stationary Gaussian process with $\gamma(k)\sim ck^{2\mathrm d_{\rm E}-1}L(k)$.
\end{example}

\begin{example}[Matrix-valued time series]\label{ex_psd}
Consider $\Omega=S_p^+$ with the Frobenius metric $\dist(x,y)=\|x-y\|_{\mathrm F}$, where $S_p^+$ is the cone of $p\times p$ positive-semidefinite (PSD) matrices. Let $X_t=M_0+(u+\vartheta G_tv)(u+\vartheta G_tv)^\top$, where $M_0\in S_p^+$ is fixed, $\vartheta>0$, $u,v\in\mathbb R^p$ are orthonormal, and $\{G_t\}$ is a standard stationary Gaussian process with $\gamma(k)\sim ck^{2\mathrm d_{\rm E}-1}L(k)$.
\end{example}

\begin{example}[Distributional time series]\label{ex_wass}
Consider $\Omega=\mathcal P_2(\mathbb R)$, the space of probability distributions on $\mathbb R$ with finite second moments, equipped with the $\mathcal L^2$-Wasserstein metric $\dist=W_2$. Let $X_t=N(\mu_t,\sigma^2)$, where $\{\mu_t\}$ is a standard stationary Gaussian process with $\gamma(k)\sim ck^{2\mathrm d_{\rm E}-1}L(k)$, or $X_t=N(\mu_t,\sigma_t^2)$ with $\sigma_t=\sigma\exp(\lambda\mu_t)$ for $\sigma>0$ and $\lambda\geq0$.
\end{example}

The following result states $\mathrm d=\mathrm d_{\rm E}$ formally; the proof is in the supplementary material.
\begin{proposition}\label{prop::example memory parameter}
For the metric spaces and the time series in Examples~\ref{ex_gaussian}--\ref{ex_wass}, $\mathrm d=\mathrm d_{\rm E}$ for $\mathrm d$ in Definition~\ref{def_trace}.
\end{proposition}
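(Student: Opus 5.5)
By Proposition~\ref{prop_cv}, it suffices in each example to compute $C(k)=D_\nu-\mathbb E\,\dist(X_t,X_{t-k})$ and show that $C(k)\sim \const_C k^{2\mathrm d_{\rm E}-1}L(k)$ for some $\const_C>0$. All three examples reduce to a functional of a bivariate Gaussian pair $(Z,Z')$ with unit variances and correlation $\rho=\gamma(k)\to0$. The common strategy is therefore to write $\mathbb E\,\dist(X_t,X_{t-k})=g(\gamma(k))$ for an explicit function $g$ on $[-1,1]$, note that $D_\nu=g(0)$ because independent copies correspond to $\rho=0$, and expand $g$ at $0$. If $g'(0)<0$, then $C(k)=-g'(0)\gamma(k)+O(\gamma(k)^2)\sim -g'(0)c\,k^{2\mathrm d_{\rm E}-1}L(k)$, which gives $\mathrm d=\mathrm d_{\rm E}$.

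\textbf{Example~\ref{ex_gaussian}.} Here $X_t-X_{t-k}\sim N(0,2(1-\rho))$, so $g(\rho)=2\sqrt{(1-\rho)/\pi}$. This gives $g'(0)=-1/\sqrt{\pi}$ and $C(k)\sim \pi^{-1/2}\gamma(k)$.

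\textbf{Example~\ref{ex_wass}, first case.} The quantile functions are $\mu_t+\sigma\Phi^{-1}$, so $W_2(X_t,X_{t-k})=|\mu_t-\mu_{t-k}|$ and the computation is identical to Example~\ref{ex_gaussian}.

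\textbf{Example~\ref{ex_wass}, second case.} We have $W_2^2=(\mu_t-\mu_{t-k})^2+(\sigma_t-\sigma_{t-k})^2$, hence $g(\rho)=\mathbb E\,F(Z,Z')$ with $F(z,z')=\{(z-z')^2+\sigma^2(e^{\lambda z}-e^{\lambda z'})^2\}^{1/2}$. To compute $g'(0)$ we use the Gaussian interpolation (Price/Stein) identity $\frac{d}{d\rho}\mathbb E F(Z,Z')=\mathbb E\,\partial_z\partial_{z'}F(Z,Z')$, understood in the distributional sense because $F$ is Lipschitz but not smooth on the diagonal. An alternative is the Hermite expansion $F=\sum a_{jl}H_j(z)H_l(z')$, which gives $g(\rho)-g(0)=\sum_{j\ge1} j!\,a_{jj}\rho^j$. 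This requires $\mathbb E F^2<\infty$, which holds because the exponentials have all moments. The first coefficient is $a_{11}=\mathbb E[ZZ'F(Z,Z')]$ for independent $Z,Z'$. We must show $a_{11}<0$. Writing $F(z,z')=\psi(z,z')$ with $\psi$ a function of the difference vector $(z-z',e^{\lambda z}-e^{\lambda z'})$, the coefficient $a_{11}$ equals $-\tfrac12\mathbb E[(Z-Z')^2F]+\tfrac12\mathbb E[(Z^2+Z'^2)F]$. Showing the sign from this form is the delicate point. A cleaner route is positive definiteness: $-\dist$ is conditionally negative definite, and $F$ is a metric of negative type since it is the Euclidean distance of the embedding $z\mapsto(z,\sigma e^{\lambda z})$ in $\mathbb R^2$. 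Applying this to the signed measure $z\,\varphi(z)\,dz$, which has total mass zero, yields $\iint zz'F\,\varphi\varphi\le0$. Strictness follows from the embedding being injective. The same argument gives the sign in Example~\ref{ex_gaussian} uniformly.

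\textbf{Example~\ref{ex_psd}.} Expanding, $X_t=M_0+uu^\top+\vartheta G_t(uv^\top+vu^\top)+\vartheta^2G_t^2vv^\top$. By orthonormality of $u,v$, $\|X_t-X_s\|_{\mathrm F}^2=2\vartheta^2(G_t-G_s)^2+\vartheta^4(G_t^2-G_s^2)^2$. This is again the Euclidean distance under the injective map $z\mapsto(\sqrt2\vartheta z,\vartheta^2z^2)$, so the same negative-type argument gives $g'(0)=a_{11}<0$.

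\textbf{Remainder control.} We need $\sum_{j\ge2}j!\,|a_{jj}|\,|\rho|^j=O(\rho^2)$ for small $|\rho|$. By Cauchy--Schwarz, $j!\,|a_{jj}|\le \tfrac12\big(j!\,\|a_{j\cdot}\|^2+\cdots\big)$; more simply, $\sum_j j!\,|a_{jj}|\le\mathbb E F^2<\infty$ by a Parseval-type bound, which gives the remainder bound for $|\rho|\le1$. Since $\gamma(k)^2=o(\gamma(k))$, the linear term dominates.

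\textbf{Main obstacle.} The main difficulty is establishing the strict negativity of the first Hermite coefficient, $a_{11}<0$. Strictness needs a short argument: the quadratic form $\iint\mu(dz)\mu(dz')F$ vanishes only for $\mu=0$ when the metric is a Euclidean distance on an injective image. This follows from the strict negative type of Euclidean spaces (Lyons), applied to $\mu=z\varphi(z)\,dz\neq0$.
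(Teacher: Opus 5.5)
Your proof is correct. It shares the paper's skeleton: write $\mathbb E\,\dist(X_t,X_{t-k})=g(\gamma(k))$, note $D_\nu=g(0)$, and show $g'(0)<0$. Where it differs is in how you compute and sign $g'(0)$.

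The paper works example by example:
\begin{itemize}
\item In Example~\ref{ex_gaussian} it uses a closed form.
\item In Example~\ref{ex_psd}, independence of $G-H$ and $G+H$ gives $\psi_\vartheta$ explicitly, with $\psi_\vartheta'(0)=-\vartheta\,\mathbb E|Z|\,\mathbb E(1+\vartheta^2Z^2)^{-1/2}$.
\item In the location--scale case, differentiating the bivariate density at $\rho=0$ gives $\psi'(0)=\mathbb E[\widetilde G\widetilde H R(\widetilde G,\widetilde H)]$, which is exactly your $a_{11}$. The paper then signs it by a quadrant symmetrization, showing $R(g,h)<R(g,-h)$ and $R(-g,-h)<R(-g,h)$ on $\{g>h>0\}$.
\end{itemize}
You replace all three computations with one structural fact. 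You write $a_{11}=\iint F\,d\mu\,d\mu$ for the mass-zero measure $\mu(dz)=z\varphi(z)\,dz$, which is strictly negative by the strong negative type of Euclidean space and injectivity of the embedding.

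Equivalently, with $X$ the object generated by $Z$, $a_{11}=-\|\mathbb E[Z\,\Phi_\nu(X)]\|_{\mathcal H}^2$. So the leading Hermite term of $\Gamma_k$ is the rank-one operator $\gamma(k)\,c_1\otimes c_1$ with $c_1=\mathbb E[Z\,\Phi_\nu(X)]\neq0$, and your argument is in effect an instance of Proposition~\ref{prop_common}. Your route buys generality: any square-integrable injective image of a scalar long-memory Gaussian driver in a space of strong negative type has $\mathrm d=\mathrm d_{\rm E}$. It also gives a clean justification of the expansion through Mehler's formula. The paper's route is elementary and yields explicit constants $\const_C$.

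One step needs repair. The bound $\sum_j j!\,|a_{jj}|\leq\mathbb E F^2$ is false as stated: replace $F$ by $\epsilon F$ and let $\epsilon\to0$. Parseval gives only $\sum_j(j!\,a_{jj})^2\leq\mathbb E F^2$. Cauchy--Schwarz then gives $\sum_{j\geq2}j!\,|a_{jj}|\,|\rho|^j\leq(\mathbb E F^2)^{1/2}\rho^2(1-\rho^2)^{-1/2}$, which is $O(\rho^2)$ for $|\rho|\leq1/2$ and suffices since $\gamma(k)\to0$.

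Alternatively, your own negative-type identity gives $j!\,a_{jj}=-\|\mathbb E[H_j(Z)\Phi_\nu(X)]\|_{\mathcal H}^2/j!\leq0$ for $j\geq1$. Hence $\sum_{j\geq1}j!\,|a_{jj}|=\mathbb E\|\Phi_\nu(X)\|_{\mathcal H}^2=D_\nu$, which is the $\ell^1$ bound you were after. In any case, differentiability of $g$ at $0$ is all you need for $C(k)\sim-g'(0)\gamma(k)$. Also, invoke Mehler's expansion only for $|\rho|<1$, not $|\rho|\leq1$.
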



A natural alternative is to define long memory through the autodistance covariance (ADCV) \citep{zhou2012measuring,lyons2013distance,jiang2024testing}, which characterizes lagwise independence in metric spaces of strong negative type. Although well suited to detecting general dependence, ADCV is quadratic in the embedded lag covariance operator and therefore generally yields a different decay rate and summability threshold from those of conventional autocovariance. An ADCV-based definition thus does not reproduce the correspondence established in Proposition~\ref{prop_common}, under which the trace autocovariance inherits the decay rate of the lag covariance operators. Section~\ref{sect::compare with ADCV} of the supplementary material compares the two notions in further detail.


\section{Estimation of long-memory parameter}\label{sect::estimation}

We develop estimators of the long-memory parameter from pairwise distances among the observed objects. Section~\ref{subsect::bias} constructs a distance-based estimator of the trace autocovariance and identifies the common centering bias induced by estimating the marginal distance level $D_\nu$. Section~\ref{subsect::raw est} combines the lagwise estimates through punctured Bartlett aggregates to obtain raw log-ratio and log-slope estimators. Section~\ref{subsect::bias correction} then develops an iterated block-difference correction for this bias and a localized self-consistency refinement.

\subsection{Distance averages and the common centering bias}\label{subsect::bias}

Let the object-valued time series $X_1,\ldots, X_{n}$ be observed. For $1\leq k<n$, define the global sample pairwise-distance average and the lag-$k$ sample mean distance by
\begin{equation}\label{eq_dhat} 
\widehat D_n=\frac{1}{n(n-1)}\sum_{1\leq i\neq j\leq n}\dist(X_i,X_j),\qquad \widehat\delta_n(k)=\frac{1}{n-k}\sum_{t=k+1}^n \dist(X_t,X_{t-k}). 
\end{equation}
Here, $\widehat D_n$ is the sample analog of the population mean pairwise distance $D_\nu$. Writing $\delta(k)=\mathbb E \dist(X_t,X_{t-k})$ for the population lag-$k$ mean distance, $\widehat\delta_n(k)$ is its sample analog. Accordingly, a natural estimator of $C(k)$ is $\widehat C_n(k)=\widehat D_n-\widehat\delta_n(k)$. The next lemma gives an exact finite-sample expression for the bias of $\widehat D_n$ and characterizes its asymptotic behavior under long and short memory.
\begin{lemma}[Bias of the global pairwise-distance average]\label{lem_dhat}
The estimator $\widehat D_n$ satisfies 
\begin{equation}\label{eq_dhat_1} 
\mathbb E\widehat D_n-D_\nu=-\frac{2}{n(n-1)}\sum_{k=1}^{n-1}(n-k)C(k). 
\end{equation}
If the long memory behavior in \eqref{eq_trace} holds, then
\begin{equation}\label{eq_dhat_2} 
\mathbb E\widehat D_n-D_\nu\sim-\frac{\const_C}{\mathrm d(2\mathrm d+1)}n^{2\mathrm d-1}L_C(n). \end{equation}
If $\sum_{k\geq1}|C(k)|<\infty$, then $\mathbb E\widehat D_n-D_\nu=O(n^{-1})$. If, in addition, $\sum_{k\geq1}k|C(k)|<\infty$ and $\mathsf B_\infty=\sum_{k\neq0}C(k)$, then
\begin{equation}\label{eq_dhat_3} 
\mathbb E\widehat D_n-D_\nu=-{n^{-1}}{\mathsf B_\infty}+O(n^{-2}). 
\end{equation}
\end{lemma}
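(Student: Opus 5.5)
The exact identity \eqref{eq_dhat_1} follows from a direct computation of $\mathbb E\widehat D_n$ using stationarity and Proposition~\ref{prop_cv}. By strict stationarity, $\mathbb E\dist(X_i,X_j)=\delta(|i-j|)$. For each lag $k\in\{1,\ldots,n-1\}$ there are exactly $2(n-k)$ ordered pairs $(i,j)$ with $|i-j|=k$. Hence
\[
\mathbb E\widehat D_n=\frac{2}{n(n-1)}\sum_{k=1}^{n-1}(n-k)\delta(k).
\]
Proposition~\ref{prop_cv} gives $\delta(k)=D_\nu-C(k)$. We also have $\sum_{k=1}^{n-1}2(n-k)=n(n-1)$, so the $D_\nu$ terms reproduce $D_\nu$ exactly. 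Subtracting $D_\nu$ then leaves \eqref{eq_dhat_1}. The moment condition $\nu\in M_1(\Omega)$ makes every expectation finite, so the argument is purely algebraic.

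For the long-memory asymptotics \eqref{eq_dhat_2}, we invoke the Karamata-type result Lemma~\ref{lem_sum}, exactly as in Remark~\ref{rem_trace}:
\[
2\sum_{k=1}^{n-1}(n-k)C(k)\sim \const_C\{\mathrm d(2\mathrm d+1)\}^{-1}n^{2\mathrm d+1}L_C(n).
\]
If one wants this self-contained, write $\sum_{k<n}(n-k)C(k)=\sum_{m=1}^{n-1}\sum_{k=1}^{m}C(k)$. Karamata gives $\sum_{k\le m}C(k)\sim \const_C m^{2\mathrm d}L_C(m)/(2\mathrm d)$, since $2\mathrm d-1>-1$. Applying Karamata once more gives the outer sum $\sim \const_C n^{2\mathrm d+1}L_C(n)/\{2\mathrm d(2\mathrm d+1)\}$. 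The finitely many early terms where the asymptotic relation is inaccurate are negligible because the sums diverge. Dividing by $n(n-1)\sim n^2$ yields \eqref{eq_dhat_2}. The main care point is this regular-variation bookkeeping: uniformity of the slowly varying function is handled by the standard Karamata theorem, and the leading constant is nonzero because $\const_C>0$.

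In the short-memory case, $|\sum_{k<n}(n-k)C(k)|\le n\sum_{k\ge1}|C(k)|$, so the bias is at most $\frac{2}{n-1}\sum_{k\geq 1}|C(k)|=O(n^{-1})$.

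For \eqref{eq_dhat_3}, use $C(-k)=C(k)$, which holds because $C(k)=D_\nu-\delta(k)$ and $\delta$ is symmetric, so that $\mathsf B_\infty=2\sum_{k\ge1}C(k)$. Decompose
\[
2\sum_{k=1}^{n-1}(n-k)C(k)=n\mathsf B_\infty-2n\sum_{k\ge n}C(k)-2\sum_{k=1}^{n-1}kC(k).
\]
Under $\sum_{k\geq 1} k|C(k)|<\infty$:
\begin{itemize}
\item the last term is $O(1)$;
\item $n\sum_{k\ge n}|C(k)|\le\sum_{k\ge n}k|C(k)|=o(1)$.
\end{itemize}
Thus $2\sum_{k<n}(n-k)C(k)=n\mathsf B_\infty+O(1)$. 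Dividing by $n(n-1)$ and using $1/(n-1)=1/n+O(n^{-2})$ gives $-n^{-1}\mathsf B_\infty+O(n^{-2})$, which is \eqref{eq_dhat_3}.
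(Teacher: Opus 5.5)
Your proof is correct and follows the paper's argument essentially step by step: the same lag-counting and $2\sum_{k=1}^{n-1}(n-k)=n(n-1)$ for the exact identity, the same $2/(n-1)$ bound under absolute summability, and the same tail bound $n\sum_{k\ge n}|C(k)|\le\sum_{k\ge n}k|C(k)|$ for the second-order expansion. The only variation is in the long-memory step. There your self-contained route applies Karamata in nested form through $\sum_{m=1}^{n-1}\sum_{k\le m}C(k)$, whereas the paper splits the sum as $n\sum_{k<n}C(k)-\sum_{k<n}kC(k)$ and applies Lemma~\ref{lem_sum} with $\beta=0$ and $\beta=1$. Both routes are valid; yours merely avoids subtracting two asymptotic equivalences, which is harmless here because $1/(2\mathrm d)\neq1/(2\mathrm d+1)$.
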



These expansions of the bias of $\widehat D_n$ serve as input to the aggregation and correction developed below.

\subsection{Raw estimation with punctured Bartlett aggregates}\label{subsect::raw est}

We first introduce two easily computed estimators based on Bartlett-weighted aggregates of $\widehat C_n(k)$ and establish their asymptotic properties. Although consistent, they may exhibit substantial finite-sample bias due to the common centering error in $\widehat D_n$. We therefore use either as a preliminary input (\emph{raw pilot}) for the block-difference correction in the next section.


Fix a bandwidth multiplier $q>1$ and let $m=m_n$ be the base Bartlett bandwidth, with $m\to\infty$ and $m/n\to0$. Define the integer bandwidth grid $\mathcal G_{m}=\{m,m+1,\ldots,\lfloor qm\rfloor\}$. We define
\begin{equation}
\mathsf B_C(r)=\sum_{0<|k|<r}\left(1-\frac{|k|}{r}\right)C(k)=2\sum_{k=1}^{r-1}\left(1-\frac{k}{r}\right)C(k), \quad  r\in\mathcal G_{m}=\{m,\ldots,\lfloor qm\rfloor\}, \label{eq_bartlett_1}
\end{equation}
where the second equality follows from $C(-k)=C(k)$. We call \eqref{eq_bartlett_1} the \emph{punctured Bartlett aggregate}, with ``punctured'' referring to the omission of lag zero. Under \eqref{eq_trace}, Karamata's theorem yields
\begin{equation}\label{eq_bartlett_growth}
\mathsf B_C(r)\sim\frac{\const_C}{\mathrm d(2\mathrm d+1)}r^{2\mathrm d}L_C(r),\qquad r\to\infty,
\end{equation}
so $\mathrm d$ is identified from its growth across bandwidths. Under short memory, $\mathsf B_C(r)\to\mathsf B_\infty=\sum_{k\neq0}C(k)=2\sum_{k\geq 1}C(k)$, the \emph{punctured long-run covariance}. Since $C(0)=D_\nu$, including lag zero replaces $\mathsf B_C(r)$ by $D_\nu+\mathsf B_C(r)$. This leaves the memory exponent unchanged but flattens the finite-bandwidth slope and shifts the short-memory limit from $\mathsf B_\infty$ to $D_\nu+\mathsf B_\infty$.

Under short memory, absolute summability of $C(k)$ is part of Definition~\ref{def_trace}. For estimation, we additionally impose the \emph{one-summability} condition $\sum_{k\geq1}k|C(k)|<\infty$, which yields $\mathsf B_C(r)-\mathsf B_\infty=O(r^{-1})$. The estimators proposed below take logarithms of the absolute aggregate before differencing across bandwidths. If $\mathsf B_\infty=0$, then $|\mathsf B_C(r)|=O(m^{-1})$ uniformly over $m\leq r\leq\lfloor qm\rfloor$, making these logarithms unstable. We therefore define the vanishing stabilization coefficient $a_{n,m}=c_a(m/n)^\eta$, where $c_a,\eta>0$, and require $m^{-1}=o(a_{n,m})$ in this case. Provided $D_\nu>0$, the term $2a_{n,m}D_\nu$ then dominates the zero-limit short-memory remainder but vanishes relative to the diverging punctured aggregate under long memory. We call this ridge term the \emph{vanishing stabilizer}. The factor of two places it under the same two-sided normalization as \eqref{eq_bartlett_1}. The proposed estimators are based on the absolute values of the stabilized Bartlett aggregates:
\begin{equation*}
S_{B,n}(r;m)=|\mathsf B_{C,n}(r;m)|,\qquad
\widehat S_{B,n}(r;m)=|\widehat{\mathsf B}_{C,n}(r;m)|,
\end{equation*}
where $\mathsf B_{C,n}(r;m)=2a_{n,m}D_\nu+\mathsf B_C(r)$ and $\widehat{\mathsf B}_{C,n}(r;m)=2a_{n,m}\widehat D_n+2\sum_{k=1}^{r-1}(1-k/r)\widehat C_n(k)$ are the population and sample stabilized signed aggregates, respectively. We refer to $S_{B,n}(r;m)$ and $\widehat S_{B,n}(r;m)$ as the population and sample stabilized Bartlett scales. The following proposition gives their scaling under the long- and short-memory regimes.
\begin{proposition}[Scaling of the stabilized aggregate]\label{prop_scale}
The following hold uniformly in $\ell\in[1,q]$:
\begin{enumerate}[label=(\roman*)]
\item\label{prop_scale1} Under long-memory condition \eqref{eq_trace}, ${S_{B,n}(\lfloor\ell m\rfloor;m)}/\{m^{2\mathrm d}L_C(m)\}\rightarrow {\const_C\ell^{2\mathrm d}}/\{\mathrm d(2\mathrm d+1)\}$.
\item\label{prop_scale2} Under short memory and $\sum_{k\geq1}k|C(k)|<\infty$, if $\mathsf B_\infty\neq0$, then $S_{B,n}(\lfloor\ell m\rfloor;m)\to|\mathsf B_\infty|$.
\item\label{prop_scale3} Under short memory and $\sum_{k\geq1}k|C(k)|<\infty$, if $\mathsf B_\infty=0$, $m^{-1}=o(a_{n,m})$, and $D_\nu>0$, then $S_{B,n}(\lfloor\ell m\rfloor;m)\sim2a_{n,m}D_\nu$.
\end{enumerate}
Consequently, $S_{B,n}(\lfloor\ell m\rfloor;m)/S_{B,n}(m;m)$ converges uniformly to $\ell^{2\mathrm d}$ under long memory and to one under short memory.
\end{proposition}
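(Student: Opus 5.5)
The plan is to handle each regime separately by writing $\mathsf B_{C,n}(r;m)=2a_{n,m}D_\nu+\mathsf B_C(r)$ and comparing the two summands. The ratio statement then follows by dividing the uniform limits.

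\emph{Long memory (part \ref{prop_scale1}).} The plan is to prove uniform Karamata asymptotics for $\mathsf B_C$ over $r=\lfloor \ell m\rfloor$, $\ell\in[1,q]$. Write
\[
\mathsf B_C(r)=2\sum_{k=1}^{r-1}C(k)-\frac{2}{r}\sum_{k=1}^{r-1}kC(k).
\]
By \eqref{eq_trace} and Karamata's theorem (Lemma~\ref{lem_sum}), with $2\mathrm d-1>-1$:
\[
\sum_{k<r}C(k)\sim \const_C r^{2\mathrm d}L_C(r)/(2\mathrm d), \qquad \sum_{k<r}kC(k)\sim \const_C r^{2\mathrm d+1}L_C(r)/(2\mathrm d+1).
\]
Combining gives \eqref{eq_bartlett_growth}, since $1/\mathrm d-2/(2\mathrm d+1)=1/\{\mathrm d(2\mathrm d+1)\}$.

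Uniformity in $\ell$ comes from the uniform convergence theorem for slowly varying functions. On compact $\ell$-sets, $L_C(\lfloor\ell m\rfloor)/L_C(m)\to1$ and $(\lfloor \ell m\rfloor/m)^{2\mathrm d}\to\ell^{2\mathrm d}$ uniformly. Also, the relative error in the Karamata asymptotics at $r$ is $o(1)$ as $r\to\infty$, and $r\ge m$ throughout.

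It remains to show the stabilizer is negligible. Since $a_{n,m}=c_a(m/n)^\eta\to0$ and $m^{2\mathrm d}L_C(m)\to\infty$ (because $\mathrm d>0$), we get $2a_{n,m}D_\nu/\{m^{2\mathrm d}L_C(m)\}\to0$. The limit is positive by Remark~\ref{rem_trace}, so the absolute value is harmless.

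\emph{Short memory (parts \ref{prop_scale2}--\ref{prop_scale3}).} First I show that one-summability gives $\sup_{r\ge m}|\mathsf B_C(r)-\mathsf B_\infty|=O(m^{-1})$. Indeed,
\[
\mathsf B_C(r)-\mathsf B_\infty=-2\sum_{k\ge r}C(k)-\frac{2}{r}\sum_{k<r}kC(k).
\]
Here $\sum_{k\ge r}|C(k)|\le r^{-1}\sum_{k\ge r}k|C(k)|$, and the second term is bounded by $2r^{-1}\sum_{k}k|C(k)|$.

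If $\mathsf B_\infty\ne0$, then since $a_{n,m}\to0$, we get $\mathsf B_{C,n}\to\mathsf B_\infty$ uniformly, and hence $S_{B,n}\to|\mathsf B_\infty|$.

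If $\mathsf B_\infty=0$, then $\mathsf B_{C,n}=2a_{n,m}D_\nu+O(m^{-1})$ uniformly. Because $m^{-1}=o(a_{n,m})$ and $D_\nu>0$, this gives $S_{B,n}/(2a_{n,m}D_\nu)\to1$ uniformly.

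\emph{Ratio claim.} In every case the denominator limit (after the same normalization) is nonzero: $\const_C/\{\mathrm d(2\mathrm d+1)\}$, $|\mathsf B_\infty|$, or $1$ after scaling by $2a_{n,m}D_\nu$. Uniform convergence of numerator and denominator therefore yields uniform convergence of the ratio to $\ell^{2\mathrm d}$ under long memory and to $1$ under short memory.

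The main obstacle is the uniformity in $\ell$ in part \ref{prop_scale1}. This includes the floor function and the need for the Karamata approximation error to be uniform over $r\in[m,qm]$. I would handle it by stating the Karamata relations as $\sup_{r\ge m}|\text{ratio}-1|\to0$ and then invoking the uniform convergence theorem for $L_C$.
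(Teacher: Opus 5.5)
Your proposal is correct and follows essentially the same route as the paper's proof. Under long memory you use the same decomposition $\mathsf B_C(r)=2\sum_{k<r}C(k)-\frac{2}{r}\sum_{k<r}kC(k)$ with Karamata's theorem, $r\geq m$, and the uniform convergence theorem for $L_C$; under short memory you use the same one-summability tail bound $\mathsf B_C(r)-\mathsf B_\infty=O(r^{-1})$. One small point: positivity of the limit in part (i) is already built into Definition~\ref{def_trace} through $\const_C>0$, so the appeal to Remark~\ref{rem_trace} is unnecessary, though harmless.
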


We reserve the subscripts $\mathrm R$ and $\mathrm S$ for the log-ratio and log-slope constructions introduced below, and write $\esttype\in\{\mathrm R,\mathrm S\}$ when a statement covers both. For a closed interval $\mathcal I=[a,b]$, let $\Pi_{\mathcal I}(x)=\min\{b,\max(a,x)\}$ denote the projection of $x\in\mathbb R$ onto $\mathcal I$, that is, the point in $\mathcal I$ closest to $x$. Let $\mathcal I_{\mathrm O}$ be a prespecified output interval chosen to reflect the admissible or plausible range of $\mathrm d$. For example, $\mathcal I_{\mathrm O}=[0,1/2]$ enforces the theoretical memory range, while a wider interval allows finite-sample deviations. The two-bandwidth log-ratio estimator is
\begin{equation}\label{eq_raw_r}
\widehat{\mathrm d}_{\mathrm R}(q,m)=\Pi_{\mathcal I_{\mathrm O}}\!\left[\frac{\log\{\widehat S_{B,n}(\lfloor qm\rfloor;m)/\widehat S_{B,n}(m;m)\}}{2\log\{\lfloor qm\rfloor/m\}}\right].
\end{equation}
For the multi-bandwidth ordinary-least-squares (OLS) log-slope estimator, let $\bar\ell_m=|\mathcal G_m|^{-1}\sum_{r\in\mathcal G_m}\log r$ and define
\begin{equation}\label{eq_raw_s}
\widehat{\mathrm d}_{\mathrm S}(q,m)=\Pi_{\mathcal I_{\mathrm O}}\!\left[\frac{\sum_{r\in\mathcal G_m}(\log r-\bar\ell_m)\log\widehat S_{B,n}(r;m)}{2\sum_{r\in\mathcal G_m}(\log r-\bar\ell_m)^2}\right].
\end{equation}
The projection onto $\mathcal I_{\mathrm O}$ in \eqref{eq_raw_r}--\eqref{eq_raw_s} is part of each raw estimator, not a reporting-only truncation. If any scale entering a logarithm is zero, we set the corresponding unprojected estimator to zero; the consistency conditions below imply that this convention is used with probability tending to zero. The factor $1/2$ in both estimators converts the Bartlett growth exponent $2\mathrm d$ into $\mathrm d$ and is unrelated to the one- or two-sided normalization.

The bias of $\widehat D_n$ enters the aggregate at every lag. The following lemma makes its cumulative effect exact, and this expression underlies the bias-corrected estimators developed in Section~\ref{subsect::bias correction}.
\begin{lemma}[Exact effect of the common centering bias]\label{lem_raw_bias}
Let $b_n=\mathbb E\widehat D_n-D_\nu$. For every $r\in\mathcal G_m$,
\begin{equation}\label{eq_raw_bias}
\mathbb E\widehat{\mathsf B}_{C,n}(r;m)=\mathsf B_{C,n}(r;m)+\{2a_{n,m}+r-1\}b_n.
\end{equation}
\end{lemma}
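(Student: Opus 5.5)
The identity follows from linearity of expectation, once we note that $\widehat\delta_n(k)$ is unbiased for $\delta(k)$; the only bookkeeping is counting how many times $b_n$ appears.

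First, by strict stationarity each summand $\dist(X_t,X_{t-k})$ in $\widehat\delta_n(k)$ has mean $\delta(k)$, so $\mathbb E\widehat\delta_n(k)=\delta(k)$. Then
\[
\mathbb E\widehat C_n(k)=\mathbb E\widehat D_n-\delta(k)=D_\nu-\delta(k)+b_n=C(k)+b_n
\]
by Proposition~\ref{prop_cv}. All expectations are finite because $\nu\in M_1(\Omega)$ and the triangle inequality give integrability of every distance involved.

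Next, take expectations in the definition of $\widehat{\mathsf B}_{C,n}(r;m)$. This yields
\[
2a_{n,m}(D_\nu+b_n)+2\sum_{k=1}^{r-1}(1-k/r)\{C(k)+b_n\}.
\]
The terms without $b_n$ reproduce $\mathsf B_{C,n}(r;m)$ by \eqref{eq_bartlett_1}. The remaining coefficient of $b_n$ is $2a_{n,m}+2\sum_{k=1}^{r-1}(1-k/r)$. Using $\sum_{k=1}^{r-1}k=r(r-1)/2$, we get
\[
2\sum_{k=1}^{r-1}(1-k/r)=2(r-1)-(r-1)=r-1,
\]
which gives \eqref{eq_raw_bias}.

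There is no real obstacle here. The only points needing care are that $\widehat\delta_n(k)$ is exactly unbiased, which is where stationarity enters, and that the Bartlett weights sum to $(r-1)/2$. Note that the exact form of $b_n$ from Lemma~\ref{lem_dhat} is not needed for this identity.
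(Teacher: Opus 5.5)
Your proposal is correct and matches the paper's proof. Both combine $\mathbb E\widehat C_n(k)=C(k)+b_n$ (from unbiasedness of $\widehat\delta_n(k)$ and Proposition~\ref{prop_cv}) with the Bartlett mass identity $2\sum_{k=1}^{r-1}(1-k/r)=r-1$, and take the $2a_{n,m}b_n$ term directly from the expectation of the stabilizer $2a_{n,m}\widehat D_n$.
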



\begin{remark}[Role of the vanishing stabilizer]\label{rem_ridge}
As discussed above, under the stated conditions the vanishing stabilizer dominates the $O(m^{-1})$ remainder when $\mathsf B_\infty=0$, while remaining negligible relative to the diverging aggregate under long memory. Scaling by $\widehat D_n$ gives the stabilizer the same units as $\widehat C_n(k)$ and preserves equivariance under rescaling of the metric. Any positive scale-equivariant statistic could be used instead. Such a replacement affects only the stabilizing term of order $a_{n,m}$, whereas the repeated centering error from the lagwise estimates enters with coefficient $r-1$. Since $a_{n,m}=o(r)$ for $r\in\mathcal G_m$, changing the scaling statistic cannot replace the block-difference correction developed below.
\end{remark}

We first establish consistency under a high-level condition that controls the signed Bartlett aggregates uniformly over the integer bandwidths used by the log-ratio and log-slope estimators. Section~\ref{sec_conditions} then translates it into primitive moment, dependence, and rate conditions.
\begin{assumption}[Accuracy of the raw signed punctured aggregates]\label{ass_raw}
Under long memory, assume $\max_{r\in\mathcal G_m}|\widehat{\mathsf B}_{C,n}(r;m)-\mathsf B_{C,n}(r;m)|=o_p\{m^{2\mathrm d}L_C(m)\}$.
Under short memory, assume $\max_{r\in\mathcal G_m}|\widehat{\mathsf B}_{C,n}(r;m)-\mathsf B_{C,n}(r;m)|=o_p\{|\mathsf B_\infty|+2a_{n,m}D_\nu\}$.
\end{assumption}

\begin{theorem}[Consistency of the raw estimators]\label{thm_raw}
Let $q>1$ be fixed, let $m=m_n\to\infty$ with $m/n\to0$, and let $\mathcal I_{\mathrm O}$ be a fixed closed interval containing $\mathrm d$, where $\mathrm d=0$ denotes the short-memory convention. Suppose that Assumption~\ref{ass_raw} holds together with the conditions in Proposition~\ref{prop_scale}\ref{prop_scale1}, \ref{prop_scale2}, or \ref{prop_scale3}, as appropriate. Then all logarithms in \eqref{eq_raw_r}--\eqref{eq_raw_s} are well defined with probability tending to one, and $\widehat{\mathrm d}_{\esttype}(q,m)\to_p\mathrm d$ for $\esttype\in\{\mathrm R,\mathrm S\}$.
\end{theorem}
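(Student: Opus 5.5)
The plan is to reduce everything to a uniform ratio statement and then apply the continuous mapping theorem. Let $s_n$ be the normalizer appropriate to the regime: $s_n=m^{2\mathrm d}L_C(m)$ under long memory, $s_n=|\mathsf B_\infty|$ in case \ref{prop_scale2}, and $s_n=2a_{n,m}D_\nu$ in case \ref{prop_scale3}. In case \ref{prop_scale3}, Assumption~\ref{ass_raw} gives $o_p\{|\mathsf B_\infty|+2a_{n,m}D_\nu\}=o_p(s_n)$ because $\mathsf B_\infty=0$. Every $r\in\mathcal G_m$ can be written as $r=\lfloor\ell m\rfloor$ with $\ell=r/m\in[1,\lfloor qm\rfloor/m]\subseteq[1,q]$. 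Proposition~\ref{prop_scale} therefore yields $S_{B,n}(r;m)/s_n\to g(r/m)$ uniformly over $r\in\mathcal G_m$. Here $g(\ell)=\const_C\ell^{2\mathrm d}/\{\mathrm d(2\mathrm d+1)\}$ under long memory, where $\const_C>0$ by Remark~\ref{rem_trace}, and $g\equiv1$ in the short-memory cases. In all cases $g$ is bounded below by a constant $g_*>0$ on $[1,q]$.

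The first step is to combine this with Assumption~\ref{ass_raw}. By the reverse triangle inequality, $\bigl|\widehat S_{B,n}(r;m)-S_{B,n}(r;m)\bigr|\le|\widehat{\mathsf B}_{C,n}(r;m)-\mathsf B_{C,n}(r;m)|$. Hence $\max_{r\in\mathcal G_m}|\widehat S_{B,n}(r;m)/s_n-g(r/m)|=o_p(1)$. Since $g\ge g_*>0$, the event that every $\widehat S_{B,n}(r;m)$ with $r\in\mathcal G_m$ exceeds $g_*s_n/2>0$ has probability tending to one. On this event all logarithms are well defined, the zero convention is never invoked, and
\[
\max_{r\in\mathcal G_m}\Bigl|\log\widehat S_{B,n}(r;m)-\log s_n-\log g(r/m)\Bigr|=o_p(1),
\]
by uniform continuity of $\log$ on $[g_*/2,\infty)$ intersected with a bounded set.

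For the log-ratio estimator, the $\log s_n$ terms cancel. The numerator becomes $\log g(\lfloor qm\rfloor/m)-\log g(1)+o_p(1)$, which equals $2\mathrm d\log(\lfloor qm\rfloor/m)+o_p(1)$ in both regimes, with $\mathrm d=0$ under short memory. Because $\lfloor qm\rfloor/m\to q>1$, the denominator $2\log(\lfloor qm\rfloor/m)$ is bounded away from zero, so the unprojected ratio converges to $\mathrm d$ in probability.

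For the log-slope estimator, the weights $w_r=(\log r-\bar\ell_m)$ sum to zero, so the $\log s_n$ term drops out. Using $\log g(r/m)=\log g(1)+2\mathrm d\log(r/m)$ and $\log(r/m)-\overline{\log(r/m)}=\log r-\bar\ell_m$, the exact part of the quotient is exactly $\mathrm d$. The error part is bounded by
\[
\frac{\sum_r|w_r|}{2\sum_r w_r^2}\,o_p(1).
\]
The remaining task is to show that this prefactor is $O(1)$. Since $\log r-\bar\ell_m=\log(r/m)-|\mathcal G_m|^{-1}\sum_{r'\in\mathcal G_m}\log(r'/m)$ and the points $r/m$ fill $[1,q]$ with mesh $1/m$, Riemann-sum arguments give $|\mathcal G_m|^{-1}\sum_r w_r^2\to\mathrm{Var}(\log U)>0$ with $U$ uniform on $[1,q]$, while $|\mathcal G_m|^{-1}\sum_r|w_r|$ stays bounded. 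I expect this nondegeneracy of the design to be the only mildly delicate step. Finally, $\Pi_{\mathcal I_{\mathrm O}}$ is $1$-Lipschitz and fixes $\mathrm d\in\mathcal I_{\mathrm O}$, so projecting preserves convergence in probability.
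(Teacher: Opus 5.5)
Your proposal is correct and follows essentially the same route as the paper. You combine the reverse triangle inequality with Assumption~\ref{ass_raw} and the scaling from Proposition~\ref{prop_scale} to get uniform relative accuracy of the sample scales, hence well-defined logarithms with probability tending to one; then the common level cancels in both the ratio and the OLS slope, the limit of $|\mathcal G_m|^{-1}\sum_{r}(\log r-\bar\ell_m)^2$ is positive, and the $1$-Lipschitz projection finishes. The differences are cosmetic: you normalize by a deterministic $s_n$ with explicit limit shape $g(r/m)$ instead of dividing by $S_{B,n}(r;m)$, and you spell out the bound on $\sum_r|w_r|/\sum_r w_r^2$ that the paper leaves implicit.
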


\subsection{Block-difference bias correction and local refinement}\label{subsect::bias correction}
\subsubsection{An iterated block-difference adjustment}
Although the raw estimators are consistent under the conditions of Theorem~\ref{thm_raw}, the common centering bias $b_n=\mathbb E\widehat D_n-D_\nu$ can cause substantial finite-sample distortion through its accumulation across lags; see Lemma~\ref{lem_raw_bias} and the simulation results in Section~\ref{sec_sim}. We therefore target this bias through block differences and adjust $\widehat D_n$ upward by the estimated magnitude of its negative part.
Fix $\esttype\in\{\mathrm R,\mathrm S\}$ and let $\widehat{\mathrm d}_{\mathrm P}$ denote the projected pilot used in a block-difference update. Initially, $\widehat{\mathrm d}_{\mathrm P}=\Pi_{\mathcal I_{\mathrm P}}(\widehat{\mathrm d}_{\esttype})$, where $\mathcal I_{\mathrm P}=[-\varepsilon_{\mathrm P},\overline{\mathrm d}_{\mathrm P}]$, $\varepsilon_{\mathrm P}>0$, and $0<\overline{\mathrm d}_{\mathrm P}<1/2$. Subsequent updates use the latest estimate in place of $\widehat{\mathrm d}_{\esttype}$. The upper bound keeps the block regression away from degeneracy at $1/2$; see Remark~\ref{rem_block}. For the initial pilot, Theorem~\ref{thm_raw} gives $\widehat{\mathrm d}_{\mathrm P}\to_p\mathrm d$ whenever $\mathrm d\in\mathcal I_{\mathrm P}$. 

Let $\mathcal S=\{1=s_1<s_2<\cdots<s_J\}$ be a fixed set of integer block counts with $J\geq2$. For $s\in\mathcal S$, set $\ell_{n,s}=\lfloor n/s\rfloor$ and form the consecutive blocks $I_{j,s}=\{(j-1)\ell_{n,s}+1,\ldots,j\ell_{n,s}\}$, $j=1,\ldots,s$. Define the within-block pairwise-distance averages and their average across blocks by
\begin{equation*}
\widehat D_{j,s}=\frac{1}{\ell_{n,s}(\ell_{n,s}-1)}\sum_{\substack{u,v\in I_{j,s}\\u\neq v}}\dist(X_u,X_v),\qquad \widehat D_{n,s}=\frac{1}{s}\sum_{j=1}^s\widehat D_{j,s}.
\end{equation*}
In particular, $\widehat D_{n,1}=\widehat D_n$. By stationarity, applying Lemma~\ref{lem_dhat} at the block length $\ell_{n,s}$ gives the following expansions.

\begin{lemma}[Bias expansion for $\widehat D_{n,s}$]\label{lem_block_bias}
Uniformly over $s\in\mathcal S$, $\mathbb E\widehat D_{n,s}=D_\nu+b_n(\ell_{n,s}/n)^{2\mathrm d-1}+o(|b_n|)$ under \eqref{eq_trace}, and $\mathbb E\widehat D_{n,s}=D_\nu- \ell_{n,s}^{-1}\mathsf B_\infty+O(\ell_{n,s}^{-2})$ under $\sum_{k\geq1}k|C(k)|<\infty$.
\end{lemma}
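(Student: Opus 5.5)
By stationarity, each block $\{X_u:u\in I_{j,s}\}$ has the same joint law as $(X_1,\ldots,X_{\ell_{n,s}})$. Hence $\mathbb E\widehat D_{j,s}=\mathbb E\widehat D_{\ell_{n,s}}$ for every $j$, and so $\mathbb E\widehat D_{n,s}=\mathbb E\widehat D_{\ell_{n,s}}$, where $\widehat D_{\ell}$ denotes the global pairwise-distance average computed from a sample of size $\ell$. The statement therefore reduces to applying Lemma~\ref{lem_dhat} at sample size $\ell=\ell_{n,s}$ and comparing the result with sample size $n$. Since $\mathcal S$ is a fixed finite set, uniformity over $s\in\mathcal S$ is automatic once each $s$ is handled. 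Throughout, $\ell_{n,s}\to\infty$ and $\ell_{n,s}/n\to1/s$.

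\emph{Long-memory case.} By \eqref{eq_dhat_2},
\[
\mathbb E\widehat D_{\ell}-D_\nu=-\frac{\const_C}{\mathrm d(2\mathrm d+1)}\ell^{2\mathrm d-1}L_C(\ell)\{1+o(1)\},
\]
and likewise at $n$, with $b_n\sim-\const_C\{\mathrm d(2\mathrm d+1)\}^{-1}n^{2\mathrm d-1}L_C(n)$. Taking the ratio gives
\[
\frac{\mathbb E\widehat D_{\ell_{n,s}}-D_\nu}{b_n}=\Bigl(\frac{\ell_{n,s}}{n}\Bigr)^{2\mathrm d-1}\frac{L_C(\ell_{n,s})}{L_C(n)}\{1+o(1)\}.
\]
The main technical point is showing that $L_C(\ell_{n,s})/L_C(n)\to1$. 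The index $\ell_{n,s}=\lfloor n/s\rfloor$ is not exactly $n/s$, so I will use the uniform convergence theorem for slowly varying functions on compact sets of $c$: for $c_n=\ell_{n,s}/n\in[1/(2s),1]$, $L_C(c_nn)/L_C(n)\to1$. This requires measurability of $L_C$, which is part of its definition. Since $(\ell_{n,s}/n)^{2\mathrm d-1}$ is bounded, this gives
\[
\mathbb E\widehat D_{n,s}-D_\nu=b_n(\ell_{n,s}/n)^{2\mathrm d-1}+o(|b_n|).
\]

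\emph{Short-memory case.} Under $\sum_{k\ge1}k|C(k)|<\infty$, \eqref{eq_dhat_3} applied at sample size $\ell_{n,s}$ gives directly
\[
\mathbb E\widehat D_{n,s}=D_\nu-\ell_{n,s}^{-1}\mathsf B_\infty+O(\ell_{n,s}^{-2}).
\]
For uniformity of the $O$ constant, I note that the remainder in \eqref{eq_dhat_3} is explicit:
\[
\frac{2}{\ell(\ell-1)}\sum_{k=1}^{\ell-1}(\ell-k)C(k)=\frac{\mathsf B_\infty}{\ell}+O(\ell^{-2}),
\]
with a constant depending only on $\sum_{k\ge1}k|C(k)|$ and $\mathsf B_\infty$. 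The constant is therefore the same for every $s\in\mathcal S$.
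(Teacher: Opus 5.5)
Your proposal is correct and follows essentially the same route as the paper. Both arguments use stationarity to reduce $\mathbb E\widehat D_{n,s}$ to the expression \eqref{eq_dhat_1} at block length $\ell_{n,s}$, apply Lemma~\ref{lem_dhat} there, use $L_C(\ell_{n,s})/L_C(n)\to1$ to restate the long-memory case in terms of $b_n$, and get uniformity from the finiteness of $\mathcal S$. Your appeal to the uniform convergence theorem, and your remark that the bounded factor $(\ell_{n,s}/n)^{2\mathrm d-1}$ turns the multiplicative $1+o(1)$ into an additive $o(|b_n|)$, only spell out steps the paper states without detail.
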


For a trial memory value $\eth<1/2$, define the block-difference regressor
\begin{equation}\label{eq_block_x}
x_{n,s}(\eth)=\left({\ell_{n,s}}/{n}\right)^{2\eth-1}-1,\qquad s\in\mathcal S=\{1=s_1<\cdots<s_J\}.
\end{equation}
Under \eqref{eq_trace}, subtracting $\mathbb E\widehat D_n=D_\nu+b_n$ from the expansion in Lemma~\ref{lem_block_bias} eliminates $D_\nu$ and gives $\mathbb E(\widehat D_{n,s}-\widehat D_n)=b_nx_{n,s}(\mathrm d)+o(|b_n|)$.
To account for finite block lengths, define, for integers $\ell\geq2$,
\begin{equation}\label{eq_block_finite}
g_\ell(\eth)=\frac{2}{\ell(\ell-1)}\sum_{k=1}^{\ell-1}(\ell-k)k^{2\eth-1},\qquad
x_{n,s}^{\circ}(\eth)=\frac{g_{\ell_{n,s}}(\eth)}{g_n(\eth)}-1.
\end{equation}
Under the exact power-law benchmark $C(k)=\const_C k^{2\mathrm d-1}$ for every $k\geq1$, with $0<\mathrm d<1/2$, the identity \eqref{eq_block_bias_4} in the supplementary material gives $b_n=-\const_C g_n(\mathrm d)$ and $\mathbb E(\widehat D_{n,s}-\widehat D_n)=b_nx_{n,s}^{\circ}(\mathrm d)$ exactly. Moreover, $g_\ell(\eth)\sim\ell^{2\eth-1}/\{\eth(2\eth+1)\}$ and $\max_{s\in\mathcal S}|x_{n,s}^{\circ}(\eth)-x_{n,s}(\eth)|\to0$, uniformly for~$\eth$ in compact subsets of $(0,1/2)$. Thus, the refined regressor matches the exact block-bias ratio under this benchmark and retains the asymptotic scaling in \eqref{eq_block_x}.
Using this regressor, define the no-intercept least-squares coefficient as a function of the trial value:
\begin{equation}\label{eq_block_est}
\widehat b_n(\eth)=\frac{\sum_{s\in\mathcal S\setminus\{1\}}x_{n,s}^{\circ}(\eth)(\widehat D_{n,s}-\widehat D_n)}{\sum_{s\in\mathcal S\setminus\{1\}}\{x_{n,s}^{\circ}(\eth)\}^2}.
\end{equation}
The regression weights all nontrivial block counts equally and requires no intercept because differencing removes $D_\nu$. At the current projected pilot, we write $\widehat b_n=\widehat b_n(\widehat{\mathrm d}_{\mathrm P})$ for the resulting bias estimate.

\begin{remark}[Degeneration of the block regressors]\label{rem_block}
The block regression is not uniformly informative as $\eth\uparrow1/2$. For each fixed $\eth\in(0,1/2)$, the denominator in \eqref{eq_block_est} satisfies
\begin{equation*}
\sum_{s\in\mathcal S\setminus\{1\}}\{x_{n,s}^{\circ}(\eth)\}^2\to H(\eth):=\sum_{s\in\mathcal S\setminus\{1\}}\{s^{1-2\eth}-1\}^2.
\end{equation*}
For instance, if $\mathcal S=\{1,2,4,8,16\}$, then $H(.40)=.939$, $H(.45)=.183$, and $H(.49)=.006$. More generally, $s^{1-2\eth}-1=(1-2\eth)\log s+o(1-2\eth)$, so $H(\eth)$ vanishes quadratically. This degeneration motivates restricting the trial memory value $\eth$ to a prespecified interval whose upper endpoint is strictly below $1/2$. This is a restriction on the input to the block regression, not an ex-post restriction on the reported estimate.
\end{remark}

\begin{proposition}[Population block regression]\label{prop_block}
Suppose that \eqref{eq_trace} holds. For $\eth<1/2$, define the population counterpart of \eqref{eq_block_est} by
\begin{equation}\label{eq_block_pop}
\bar b_n(\eth)=\frac{\sum_{s\in\mathcal S\setminus\{1\}}x_{n,s}^{\circ}(\eth)\{\mathbb E\widehat D_{n,s}-\mathbb E\widehat D_n\}}{\sum_{s\in\mathcal S\setminus\{1\}}\{x_{n,s}^{\circ}(\eth)\}^2}.
\end{equation}
Then, $\bar b_n(\mathrm d)=b_n\{1+o(1)\}$. If $\widehat{\mathrm d}_{\mathrm P}\to_p\mathrm d$, then $\bar b_n(\widehat{\mathrm d}_{\mathrm P})=b_n\{1+o_p(1)\}$.
\end{proposition}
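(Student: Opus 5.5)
Both claims follow from a single computation: substitute the block-bias expansion of Lemma~\ref{lem_block_bias} into \eqref{eq_block_pop} and use that the finite regressors $x^\circ_{n,s}(\eth)$ converge to fixed limits.

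\textbf{Step 1 (the numerator).} Since $\ell_{n,s}=\lfloor n/s\rfloor$, we have $\ell_{n,s}/n\to s^{-1}$. Lemma~\ref{lem_block_bias} then gives, uniformly over the finite set $\mathcal S$,
\begin{equation*}
\mathbb E\widehat D_{n,s}-\mathbb E\widehat D_n=b_n\{(\ell_{n,s}/n)^{2\mathrm d-1}-1\}+o(|b_n|)=b_n(s^{1-2\mathrm d}-1)+o(|b_n|).
\end{equation*}
By \eqref{eq_dhat_2}, $b_n\neq0$ for large $n$, so we may divide by $b_n$.

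\textbf{Step 2 (the regressors).} The text after \eqref{eq_block_finite} states that $x^\circ_{n,s}(\eth)-x_{n,s}(\eth)\to0$ uniformly for $\eth$ in compact subsets of $(0,1/2)$. Also $x_{n,s}(\eth)\to s^{1-2\eth}-1$ uniformly on such sets, because $(\ell_{n,s}/n)^{2\eth-1}$ is continuous in $\eth$ and $\ell_{n,s}/n\to s^{-1}$. Hence $x^\circ_{n,s}(\eth)\to s^{1-2\eth}-1$ uniformly on a compact neighbourhood $K\subset(0,1/2)$ of $\mathrm d$.

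Combining the two steps and dividing by $b_n$,
\begin{equation*}
\frac{\bar b_n(\eth)}{b_n}=\frac{\sum_{s\neq1}x^\circ_{n,s}(\eth)\{s^{1-2\mathrm d}-1+o(1)\}}{\sum_{s\neq1}\{x^\circ_{n,s}(\eth)\}^2}\to\frac{\sum_{s\neq1}(s^{1-2\eth}-1)(s^{1-2\mathrm d}-1)}{H(\eth)},
\end{equation*}
uniformly on $K$. The denominator limit $H(\eth)$ from Remark~\ref{rem_block} is continuous and strictly positive on $K$, since $J\geq2$ and $s^{1-2\eth}>1$ for $s>1$. The uniformity in $\eth$ of the $o(1)$ term in the numerator is immediate, because that term does not depend on $\eth$ and the regressors are uniformly bounded on $K$.

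\textbf{Step 3 (conclusion).} Evaluating the limit at $\eth=\mathrm d$ gives $H(\mathrm d)/H(\mathrm d)=1$, which is the first claim.

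For the plug-in claim, write $\psi(\eth)$ for the limit ratio. It is continuous on $K$ with $\psi(\mathrm d)=1$. Because $\widehat{\mathrm d}_{\mathrm P}\to_p\mathrm d\in(0,1/2)$, the event $\{\widehat{\mathrm d}_{\mathrm P}\in K\}$ has probability tending to one. On that event,
\begin{equation*}
|\bar b_n(\widehat{\mathrm d}_{\mathrm P})/b_n-1|\leq\sup_K|\bar b_n/b_n-\psi|+|\psi(\widehat{\mathrm d}_{\mathrm P})-1|=o(1)+o_p(1)
\end{equation*}
by uniform convergence and the continuous mapping theorem.

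\textbf{Main obstacle.} The substantive step is the uniform convergence of $x^\circ_{n,s}$, i.e.\ the Karamata-type asymptotics $g_\ell(\eth)\sim\ell^{2\eth-1}/\{\eth(2\eth+1)\}$ holding uniformly in $\eth$. I rely on the statement after \eqref{eq_block_finite} for this. If it had to be shown directly, I would write $g_\ell$ as a Riemann sum of $2\int_0^1(1-u)u^{2\eth-1}\,du=1/\{\eth(2\eth+1)\}$ and handle the singularity at $u=0$ uniformly, using the fact that $\eth$ is bounded away from $0$ on $K$.
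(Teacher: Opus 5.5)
Your proposal is correct and follows essentially the same route as the paper. You substitute the block-bias expansion of Lemma~\ref{lem_block_bias} into \eqref{eq_block_pop}, use the uniform convergence of $x^\circ_{n,s}(\eth)$ to $s^{1-2\eth}-1$ on a compact neighbourhood of $\mathrm d$ where $H$ stays bounded away from zero, and handle the plug-in by continuity. The only cosmetic difference is that the paper keeps $b_n x^\circ_{n,s}(\mathrm d)$ in the numerator and bounds $x^\circ_{n,s}(\widehat{\mathrm d}_{\mathrm P})-x^\circ_{n,s}(\mathrm d)$ by a mean-value argument, whereas you pass to the limit ratio $\psi(\eth)$ and use its continuity at $\mathrm d$; the two arguments are equivalent.
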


Under long memory, \eqref{eq_dhat_2} implies $b_n<0$ for all sufficiently large $n$, motivating an upward adjustment to $\widehat D_n$. This correction targets the long-memory bias and should become inactive under short memory, while Remark~\ref{rem_block} shows that the block regressors degenerate as the trial value approaches $1/2$. We thus apply the correction only when the projected pilot lies in a prespecified range bounded away from zero and $1/2$. Choose fixed thresholds $0<\underline{\mathrm d}<\overline{\mathrm d}<1/2$ and define
\begin{equation}\label{eq_corr}
\widehat c_n=
\begin{cases}
\max\{-\widehat b_n,0\},&\underline{\mathrm d}<\widehat{\mathrm d}_{\mathrm P}\leq\overline{\mathrm d},\\
0,&\text{otherwise}.
\end{cases}
\end{equation}
\begin{remark}[Pilot projection and bias correction]\label{rem_pilot_activation}
The pilot interval $\mathcal I_{\mathrm P}$ bounds the regressor input and includes zero to preserve pilot consistency under short memory. The block correction, however, relies on the long-memory expansion in Lemma~\ref{lem_block_bias}, whose justification requires $\mathrm d>0$. We therefore use a fixed cutoff $\underline{\mathrm d}>0$, making correction inactive with probability tending to one under short memory when the pilot is consistent. The upper pilot bound avoids degeneration at~$1/2$; see Remark~\ref{rem_block}. In the benchmark, $\overline{\mathrm d}=\overline{\mathrm d}_{\mathrm P}$, so pilots above this bound are projected to it and remain eligible for correction.
\end{remark}
The adjusted distance level and trace autocovariance estimates are $\widehat D_n^{\mathrm{BD}}=\widehat D_n+\widehat c_n$ and $\widehat C_n^{\mathrm{BD}}(k)=\widehat D_n^{\mathrm{BD}}-\widehat\delta_n(k)$.
The corresponding signed Bartlett aggregate and its absolute scale are
\begin{equation*}
\widehat{\mathsf B}_{C,n,\mathrm{BD}}(r;m)=\widehat{\mathsf B}_{C,n}(r;m)+(r-1)\widehat c_n,\qquad \widehat S_{B,n,\mathrm{BD}}(r;m)=|\widehat{\mathsf B}_{C,n,\mathrm{BD}}(r;m)|.
\end{equation*}
Here the stabilizer remains scaled by the raw $\widehat D_n$. Using $\widehat D_n^{\mathrm{BD}}$ instead would add $2a_{n,m}\widehat c_n$, which is asymptotically negligible under Assumption~\ref{ass_bd} but may affect finite-sample slopes. The adjusted log-ratio and log-slope estimators are
\begin{equation}\label{eq_bd_est}
\begin{aligned}
\widehat{\mathrm d}_{\mathrm R,\mathrm{BD}}(q,m)&=\Pi_{\mathcal I_{\mathrm O}}\!\left[\frac{\log\{\widehat S_{B,n,\mathrm{BD}}(\lfloor qm\rfloor;m)/\widehat S_{B,n,\mathrm{BD}}(m;m)\}}{2\log\{\lfloor qm\rfloor/m\}}\right],\\
\widehat{\mathrm d}_{\mathrm S,\mathrm{BD}}(q,m)&=\Pi_{\mathcal I_{\mathrm O}}\!\left[\frac{\sum_{r\in\mathcal G_m}(\log r-\bar\ell_m)\log\widehat S_{B,n,\mathrm{BD}}(r;m)}{2\sum_{r\in\mathcal G_m}(\log r-\bar\ell_m)^2}\right].
\end{aligned}
\end{equation}

For a trial value $\eth\in\mathbb R$, let $\widehat c_n(\eth)$ be given by \eqref{eq_corr} with $\widehat{\mathrm d}_{\mathrm P}$ replaced by $\Pi_{\mathcal I_{\mathrm P}}(\eth)$ and $\widehat b_n$ by $\widehat b_n\{\Pi_{\mathcal I_{\mathrm P}}(\eth)\}$. For $\esttype\in\{\mathrm R,\mathrm S\}$, define the pilot-to-estimate map $\mathcal T_{n,\esttype}$ through
\begin{equation*}
\eth\ \xrightarrow{\ \text{block correction}\ }\ \widehat c_n(\eth)\ \xrightarrow{\ \text{adjust original }\widehat D_n\ }\ \widehat D_n+\widehat c_n(\eth)\ \xrightarrow{\ \text{log-ratio or log-slope}\ }\ \mathcal T_{n,\esttype}(\eth),
\end{equation*}
where the final step uses \eqref{eq_bd_est}. Starting from the raw estimator, define
\begin{equation}\label{eq_update}
\widehat{\mathrm d}_{\esttype}^{(0)}=\widehat{\mathrm d}_{\esttype},\qquad \widehat{\mathrm d}_{\esttype}^{(j)}=\mathcal T_{n,\esttype}(\widehat{\mathrm d}_{\esttype}^{(j-1)}),\qquad j=1,2.
\end{equation}
We call $\widehat{\mathrm d}_{\esttype}^{(0)}=\widehat{\mathrm d}_{\esttype}$ the raw estimator (Raw) and $\widehat{\mathrm d}_{\esttype}^{(2)}$ the two-step bias-corrected estimator (BC). The second update uses $\widehat{\mathrm d}_{\esttype}^{(1)}$ as a bias-corrected pilot to reduce the influence of Raw's substantial finite-sample bias, analyzed in Section~\ref{sec_bias} and illustrated in Section~\ref{sec_sim}.
Each update recomputes the correction to the original $\widehat D_n$; corrections are not accumulated. Section~\ref{sec_localrefine} develops a further local refinement.
The following condition ensures that the adjustment preserves the first-order Bartlett scaling; primitive sufficient conditions are given in Proposition~\ref{prop_rates}.
\begin{assumption}[Asymptotic size of the block-difference adjustment]\label{ass_bd}
Under long memory, $\widehat c_n=o_p\{m^{2\mathrm d-1}L_C(m)\}$. Under short memory, $\Pr(\widehat c_n=0)\to1$.
\end{assumption}
\begin{theorem}[Consistency of a block-difference update]\label{thm_bd}
Suppose that the conditions of Theorem~\ref{thm_raw} hold, together with Assumption~\ref{ass_bd}. Then both estimators in \eqref{eq_bd_est} converge in probability to $\mathrm d$.
\end{theorem}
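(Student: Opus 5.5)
The plan is to reduce Theorem~\ref{thm_bd} to Theorem~\ref{thm_raw}. The adjusted aggregate $\widehat{\mathsf B}_{C,n,\mathrm{BD}}(r;m)=\widehat{\mathsf B}_{C,n}(r;m)+(r-1)\widehat c_n$ should satisfy the same uniform accuracy condition as in Assumption~\ref{ass_raw}. Once this holds, the argument of Theorem~\ref{thm_raw} applies verbatim with $\widehat S_{B,n}$ replaced by $\widehat S_{B,n,\mathrm{BD}}$. That argument uses Proposition~\ref{prop_scale} for the population scales, uniform relative accuracy of the sample scales, and continuity of the logarithm and of the projection $\Pi_{\mathcal I_{\mathrm O}}$. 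The proof therefore has one real step: bounding $\max_{r\in\mathcal G_m}(r-1)|\widehat c_n|$.

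\emph{Long-memory case.} Since $r\leq\lfloor qm\rfloor$, we have
\begin{equation*}
\max_{r\in\mathcal G_m}(r-1)|\widehat c_n|\leq qm\,|\widehat c_n|.
\end{equation*}
By Assumption~\ref{ass_bd}, this is $qm\cdot o_p\{m^{2\mathrm d-1}L_C(m)\}=o_p\{m^{2\mathrm d}L_C(m)\}$. Combined with Assumption~\ref{ass_raw} and the triangle inequality, this gives
\begin{equation*}
\max_{r\in\mathcal G_m}|\widehat{\mathsf B}_{C,n,\mathrm{BD}}(r;m)-\mathsf B_{C,n}(r;m)|=o_p\{m^{2\mathrm d}L_C(m)\}.
\end{equation*}
Proposition~\ref{prop_scale}\ref{prop_scale1} gives $S_{B,n}(\lfloor\ell m\rfloor;m)\asymp m^{2\mathrm d}L_C(m)$ uniformly in $\ell\in[1,q]$, with a positive limit constant. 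Hence $\widehat S_{B,n,\mathrm{BD}}(r;m)/S_{B,n}(r;m)\to_p1$ uniformly over $r\in\mathcal G_m$; here we use $\big||a|-|b|\big|\leq|a-b|$. It follows that all scales are positive with probability tending to one.

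\emph{Short-memory case.} Assumption~\ref{ass_bd} gives $\Pr(\widehat c_n=0)\to1$. On that event $\widehat{\mathsf B}_{C,n,\mathrm{BD}}=\widehat{\mathsf B}_{C,n}$ for every $r$, so the adjusted estimators coincide with the raw ones. Because the event has probability tending to one, convergence in probability transfers directly from Theorem~\ref{thm_raw}. No rate comparison with $|\mathsf B_\infty|+2a_{n,m}D_\nu$ is needed, which is fortunate, since $m\widehat c_n$ could otherwise dominate the vanishing stabilizer.

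\emph{Conclusion.} Under long memory, uniform relative consistency gives
\begin{equation*}
\log\widehat S_{B,n,\mathrm{BD}}(r;m)=\log S_{B,n}(r;m)+o_p(1)
\end{equation*}
uniformly over $\mathcal G_m$. For the log-ratio estimator, Proposition~\ref{prop_scale} gives $\log\{S_{B,n}(\lfloor qm\rfloor;m)/S_{B,n}(m;m)\}\to2\mathrm d\log q$, while $\log(\lfloor qm\rfloor/m)\to\log q>0$. For the log-slope estimator, the centered weights $(\log r-\bar\ell_m)/\sum_r(\log r-\bar\ell_m)^2$ have absolute sum bounded by a constant, because $\sum_r(\log r-\bar\ell_m)^2\asymp m$ for fixed $q>1$. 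A uniform $o_p(1)$ perturbation of the logs therefore changes the slope by $o_p(1)$. Writing $\log S_{B,n}(\lfloor\ell m\rfloor;m)=c_m+2\mathrm d\log\ell+o(1)$ uniformly, the slope of the population logs tends to $2\mathrm d$. Continuity of $\Pi_{\mathcal I_{\mathrm O}}$, together with $\mathrm d\in\mathcal I_{\mathrm O}$, then gives the claim for both estimators.

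The main obstacle is only the bookkeeping that the error term $(r-1)\widehat c_n$ is amplified by the factor $m$ yet remains negligible. This is exactly why Assumption~\ref{ass_bd} is stated at the rate $m^{2\mathrm d-1}L_C(m)$. The other point to handle carefully is that the short-memory case needs the exact-zero event rather than a rate bound.
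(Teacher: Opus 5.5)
Your proposal is correct and follows essentially the same route as the paper's proof. Under short memory you work on the event $\{\widehat c_n=0\}$, where the adjusted estimators coincide with the raw ones so that Theorem~\ref{thm_raw} applies; under long memory you bound $\max_{r\in\mathcal G_m}(r-1)|\widehat c_n|\leq qm|\widehat c_n|=o_p\{m^{2\mathrm d}L_C(m)\}$, combine this with Assumption~\ref{ass_raw} and the lower bound from Proposition~\ref{prop_scale}, and rerun the ratio and log-slope arguments of Theorem~\ref{thm_raw}, with your bounded-weight check for the log-slope filling in a step the paper states only briefly.
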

The same argument applies to any fixed number of updates.
\begin{corollary}[A fixed number of pilot updates]\label{cor_update}
Let $\esttype\in\{\mathrm R,\mathrm S\}$ and extend the recursion in \eqref{eq_update} to $j\geq1$. Suppose that the conditions of Theorem~\ref{thm_raw} hold and the correction $\widehat c_n(\widehat{\mathrm d}_{\esttype}^{(j-1)})$ satisfies Assumption~\ref{ass_bd} at each fixed step. Then $\widehat{\mathrm d}_{\esttype}^{(j)}\to_p\mathrm d$ for every fixed $j\geq1$.
\end{corollary}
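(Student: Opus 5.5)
The proof goes by induction on $j$. Each step reduces to the argument of Theorem~\ref{thm_bd}, with the pilot $\widehat{\mathrm d}_{\esttype}^{(j-1)}$ in place of the raw pilot. The base case $j=0$ is Theorem~\ref{thm_raw}: $\widehat{\mathrm d}_{\esttype}^{(0)}=\widehat{\mathrm d}_{\esttype}\to_p\mathrm d$.

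For the inductive step, fix $j\geq1$ and suppose $\widehat{\mathrm d}_{\esttype}^{(j-1)}\to_p\mathrm d$. By definition, $\widehat{\mathrm d}_{\esttype}^{(j)}=\mathcal T_{n,\esttype}(\widehat{\mathrm d}_{\esttype}^{(j-1)})$. This is the estimator in \eqref{eq_bd_est} built from the adjusted aggregate
\[
\widehat{\mathsf B}_{C,n}(r;m)+(r-1)\widehat c_n(\widehat{\mathrm d}_{\esttype}^{(j-1)}).
\]
Each update recomputes the correction relative to the original $\widehat D_n$, and corrections are not accumulated. So the only object that changes between steps is the scalar correction $\widehat c_n(\widehat{\mathrm d}_{\esttype}^{(j-1)})$. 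By hypothesis, this correction satisfies Assumption~\ref{ass_bd} at step $j$. The conditions of Theorem~\ref{thm_raw}, namely Assumption~\ref{ass_raw} and Proposition~\ref{prop_scale}, concern only the unadjusted aggregate and the population scales, so they hold unchanged. Applying Theorem~\ref{thm_bd} with $\widehat c_n$ taken to be $\widehat c_n(\widehat{\mathrm d}_{\esttype}^{(j-1)})$ gives $\widehat{\mathrm d}_{\esttype}^{(j)}\to_p\mathrm d$.

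The one point to check is that the proof of Theorem~\ref{thm_bd} uses no property of $\widehat c_n$ other than Assumption~\ref{ass_bd}. In particular, it must not rely on the pilot being the raw estimator. I would argue this as follows.

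Under long memory, since $r-1\leq qm$ on $\mathcal G_m$,
\[
\max_{r\in\mathcal G_m}(r-1)\widehat c_n\leq qm\,\widehat c_n=o_p\{m^{2\mathrm d}L_C(m)\}.
\]
Combined with Assumption~\ref{ass_raw}, this gives uniform relative accuracy of the adjusted aggregates against $\mathsf B_{C,n}(r;m)$. Proposition~\ref{prop_scale}\ref{prop_scale1} then yields uniform convergence of the scale ratios to $\ell^{2\mathrm d}$. The log-ratio and OLS log-slope are continuous functionals of these ratios, and $\Pi_{\mathcal I_{\mathrm O}}$ is continuous with $\mathrm d\in\mathcal I_{\mathrm O}$, so the estimator converges to $\mathrm d$.

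Under short memory, the event $\{\widehat c_n=0\}$ has probability tending to one. On that event the adjusted estimator coincides with the raw one, and Theorem~\ref{thm_raw} applies.

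Finally, the pilot-consistency hypothesis is not used directly. It enters only through verifying Assumption~\ref{ass_bd}, and that assumption is imposed at every step. Since $j$ is fixed, the induction involves only finitely many steps and needs no uniformity in $j$.
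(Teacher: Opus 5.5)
Your proposal is correct and matches the paper's proof. Theorem~\ref{thm_raw} gives the base case, and at each fixed step Assumption~\ref{ass_bd} lets Theorem~\ref{thm_bd} apply with the current correction $\widehat c_n(\widehat{\mathrm d}_{\esttype}^{(j-1)})$, so induction over finitely many steps gives the result; your added check that the proof of Theorem~\ref{thm_bd} uses nothing about $\widehat c_n$ beyond Assumption~\ref{ass_bd} is exactly the point the paper's short argument takes for granted.
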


\subsubsection{Local refinement around the bias-corrected pilot}\label{sec_localrefine}

We further refine BC (i.e., $\widehat{\mathrm d}_{\esttype}^{(2)}$) by seeking a nearby trial value that is approximately unchanged by the pilot-to-estimate map $\mathcal T_{n,\esttype}$. For $\esttype\in\{\mathrm R,\mathrm S\}$, define the squared discrepancy $\mathcal Q_{n,\esttype}(\eth)=\{\mathcal T_{n,\esttype}(\eth)-\eth\}^2$. Using the last update size $\widehat\varrho_{n,\esttype}=|\widehat{\mathrm d}_{\esttype}^{(2)}-\widehat{\mathrm d}_{\esttype}^{(1)}|$ as the search radius, set
\begin{equation}\label{eq_fp_set}
\mathcal N_{n,\esttype}=\big[\widehat{\mathrm d}_{\esttype}^{(2)}-\widehat\varrho_{n,\esttype},\widehat{\mathrm d}_{\esttype}^{(2)}+\widehat\varrho_{n,\esttype}\big]\cap\mathcal I_{\mathrm O}.
\end{equation}
This choice determines the search radius from the observed updates without an additional tuning constant. Other neighborhoods centered at BC may also be used; the consistency argument in Theorem~\ref{thm_fp} applies whenever their radii converge to zero in probability. The localized minimum-discrepancy estimator, denoted FP, is
\begin{equation}\label{eq_fp_est}
\widehat{\mathrm d}_{\esttype,\mathrm{FP}}\in\argmin_{\eth\in\mathcal N_{n,\esttype}}\mathcal Q_{n,\esttype}(\eth).
\end{equation}
An exact fixed point is selected when the minimum discrepancy is zero; otherwise, FP minimizes the discrepancy within the neighborhood. If $\widehat\varrho_{n,\esttype}=0$, FP equals BC. As with Raw ($\widehat{\mathrm d}_{\esttype}^{(0)}$) and BC, the benchmark choice $\mathcal I_{\mathrm O}=[-.25,.75]$ permits negative estimates under short memory. Numerically, we approximate the minimization on a grid with mesh $h_n=\min(.0025,n^{-1/2})$ within $\mathcal N_{n,\esttype}$, followed by a local quadratic refinement constrained to the same neighborhood. If the neighborhood contains no grid point, we retain BC.

\begin{theorem}[Consistency of the localized refinement]\label{thm_fp}
Suppose that the conditions of Corollary~\ref{cor_update} hold. Then $\widehat\varrho_{n,\esttype}=o_p(1)$ and $\widehat{\mathrm d}_{\esttype,\mathrm{FP}}\to_p\mathrm d$ for $\esttype\in\{\mathrm R,\mathrm S\}$.
\end{theorem}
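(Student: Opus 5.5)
The plan has two parts. First we show that the search radius vanishes. Then we show that FP stays within that radius of BC, so that its consistency is inherited from Corollary~\ref{cor_update}.

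\emph{Step 1 (radius).} Under the conditions of Corollary~\ref{cor_update}, applied with $j=1$ and $j=2$, both $\widehat{\mathrm d}_{\esttype}^{(1)}\to_p\mathrm d$ and $\widehat{\mathrm d}_{\esttype}^{(2)}\to_p\mathrm d$. By the triangle inequality,
\[
\widehat\varrho_{n,\esttype}=|\widehat{\mathrm d}_{\esttype}^{(2)}-\widehat{\mathrm d}_{\esttype}^{(1)}|\le|\widehat{\mathrm d}_{\esttype}^{(2)}-\mathrm d|+|\widehat{\mathrm d}_{\esttype}^{(1)}-\mathrm d|=o_p(1).
\]
This gives the first claim.

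\emph{Step 2 (localization).} By construction \eqref{eq_fp_set}, whenever $\mathcal N_{n,\esttype}$ is nonempty, any minimizer $\widehat{\mathrm d}_{\esttype,\mathrm{FP}}$ in \eqref{eq_fp_est} lies in $\mathcal N_{n,\esttype}$. Hence
\[
|\widehat{\mathrm d}_{\esttype,\mathrm{FP}}-\widehat{\mathrm d}_{\esttype}^{(2)}|\le\widehat\varrho_{n,\esttype}.
\]
If the neighborhood is empty, or contains no grid point, or $\widehat\varrho_{n,\esttype}=0$, the convention sets FP equal to BC, and the same bound holds trivially. Nonemptiness is in fact automatic: $\widehat{\mathrm d}_{\esttype}^{(2)}$ is itself a projection onto $\mathcal I_{\mathrm O}$, so it belongs to $\mathcal I_{\mathrm O}\cap[\widehat{\mathrm d}_{\esttype}^{(2)}\pm\widehat\varrho_{n,\esttype}]$. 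The grid approximation and the local quadratic step are also constrained to $\mathcal N_{n,\esttype}$, so the bound applies to the computed estimator as well, not only to the exact argmin. Therefore
\[
|\widehat{\mathrm d}_{\esttype,\mathrm{FP}}-\mathrm d|\le\widehat\varrho_{n,\esttype}+|\widehat{\mathrm d}_{\esttype}^{(2)}-\mathrm d|=o_p(1),
\]
by Step 1 and Corollary~\ref{cor_update}.

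\emph{Main obstacle.} The one point needing care is measurability and selection of the argmin. $\mathcal T_{n,\esttype}$ may be discontinuous in $\eth$, since $\widehat c_n(\eth)$ has indicator jumps at $\underline{\mathrm d}$ and $\overline{\mathrm d}$, so an exact minimizer may fail to exist. I would handle this by noting that the argument never uses any property of $\mathcal Q_{n,\esttype}$: any (measurable) selection from $\mathcal N_{n,\esttype}$, including a near-minimizer or the grid-based one, satisfies the localization bound. Consistency then follows regardless of the behavior of the discrepancy function. The same reasoning covers any alternative neighborhood centered at BC whose radius is $o_p(1)$.
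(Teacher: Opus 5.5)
Your proposal is correct and follows the paper's proof essentially verbatim. Consistency of $\widehat{\mathrm d}_{\esttype}^{(1)}$ and $\widehat{\mathrm d}_{\esttype}^{(2)}$ from Corollary~\ref{cor_update} gives $\widehat\varrho_{n,\esttype}=o_p(1)$, and since FP lies in $\mathcal N_{n,\esttype}$, the triangle inequality gives $|\widehat{\mathrm d}_{\esttype,\mathrm{FP}}-\mathrm d|\le|\widehat{\mathrm d}_{\esttype}^{(2)}-\mathrm d|+\widehat\varrho_{n,\esttype}$; your extra remarks on nonemptiness and on arbitrary measurable selections match the paper's own observation that no property of $\mathcal Q_{n,\esttype}$ is used.
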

The result follows because BC is consistent and every point in $\mathcal N_{n,\esttype}$ lies within $\widehat\varrho_{n,\esttype}=o_p(1)$ of BC. Thus, the same consistency argument applies to any measurable selection from this neighborhood, including the numerical refinement above. Under short memory, Assumption~\ref{ass_bd} at both updates makes both corrections zero with probability tending to one. On this event, the first and second updates equal Raw, so $\widehat\varrho_{n,\esttype}=0$ and Raw, BC, and FP coincide.
\begin{remark}[Why the last-update neighborhood?]\label{rem_fp}
A radius based on the full Raw-to-BC displacement makes Raw itself an admissible candidate, allowing the refinement to reverse the entire correction. A fixed penalty around BC introduces an additional relative weight between self-consistency and localization. By contrast, $\widehat\varrho_{n,\esttype}$ uses the most recent change in the same update scheme, introduces no additional tuning parameter, and collapses when the second update no longer changes the estimate. We thus use the last-update neighborhood throughout.
\end{remark}

\subsubsection{Finite-sample bias diagnostics}\label{sec_bias}
We use deterministic slope diagnostics to distinguish the effects of common centering, inclusion of lag zero, and finite-bandwidth Bartlett curvature. These diagnostics clarify the role of the block correction and help interpret the possible transition from underestimation to overestimation observed in Section~\ref{sec_sim}.
\par\noindent\textbf{Common centering.}
By Lemma~\ref{lem_raw_bias}, for $r\in\{m,\lfloor qm\rfloor\}$, the common centering bias shifts the expected aggregate by
\begin{equation}\label{eq_delta}
\Delta_n(r)=\mathbb E\widehat{\mathsf B}_{C,n}(r;m)-\mathsf B_{C,n}(r;m)=\{2a_{n,m}+r-1\}b_n.
\end{equation}
Under long memory and the maintained bandwidth conditions, the population aggregates and the expected sample aggregates are positive for all sufficiently large $n$. Define their corresponding two-bandwidth slopes by
\begin{equation}\label{eq_bias_slope}
\mathrm d_n^{\mathrm{det}}(q,m)=\frac{\log\{\mathsf B_{C,n}(\lfloor qm\rfloor;m)/\mathsf B_{C,n}(m;m)\}}{2\log\{\lfloor qm\rfloor/m\}},\,\,
\mathrm d_n^{\mathrm{mean}}(q,m)=\frac{\log\{\mathbb E\widehat{\mathsf B}_{C,n}(\lfloor qm\rfloor;m)/\mathbb E\widehat{\mathsf B}_{C,n}(m;m)\}}{2\log\{\lfloor qm\rfloor/m\}}.
\end{equation}
Their difference isolates the common centering effect, while the population slope itself may still exhibit finite-bandwidth distortion. The mean-aggregate slope need not equal $\mathbb E\widehat{\mathrm d}_{\mathrm R}(q,m)$, since the sample estimator involves nonlinear transformations and output projection.


Under long memory, $2a_{n,m}+r-1\sim r$ and $\mathsf B_{C,n}(r;m)\sim\mathsf B_C(r)$. Thus, \eqref{eq_dhat_2}, \eqref{eq_bartlett_growth}, and \eqref{eq_delta} give, for $r\in\{m,\lfloor qm\rfloor\}$,
\begin{equation}\label{eq_bias_rel}
\frac{\Delta_n(r)}{\mathsf B_{C,n}(r;m)}\sim\frac{rb_n}{\mathsf B_C(r)}\sim-\left(\frac{r}{n}\right)^{1-2\mathrm d}\frac{L_C(n)}{L_C(r)}.
\end{equation}
The common coefficient $\const_C/\{\mathrm d(2\mathrm d+1)\}$ cancels in the second equivalence, giving $-1$.

The relative perturbations in \eqref{eq_bias_rel} converge to zero. Applying $\log(1+x)=x+o(x)$ to \eqref{eq_bias_slope}, together with $\lfloor qm\rfloor/m\to q$ and $L_C(\lfloor qm\rfloor)/L_C(m)\to1$, therefore gives
\begin{equation}\label{eq_bias_center}
\mathrm d_n^{\mathrm{mean}}(q,m)-\mathrm d_n^{\mathrm{det}}(q,m)\sim-\frac{q^{1-2\mathrm d}-1}{2\log q}\left(\frac{m}{n}\right)^{1-2\mathrm d}\frac{L_C(n)}{L_C(m)}<0.
\end{equation}
Thus, common centering lowers the mean-aggregate slope relative to its population counterpart: the relative negative perturbation at $\lfloor qm\rfloor$ is asymptotically $q^{1-2\mathrm d}$ times that at $m$. This effect vanishes as $m/n\to0$, but can decay slowly when $\mathrm d$ is close to $1/2$.

\par\noindent\textbf{Retaining lag zero.}
With the stabilizer omitted, let $\mathrm d_n^{\mathrm{punct}}(q,m)$ and $\mathrm d_n^{\mathrm{full}}(q,m)$ denote the two-bandwidth slopes based on $\mathsf B_C(r)$ and $D_\nu+\mathsf B_C(r)$, respectively. A first-order logarithmic expansion using \eqref{eq_bartlett_growth} gives
\begin{equation*}
\mathrm d_n^{\mathrm{full}}(q,m)-\mathrm d_n^{\mathrm{punct}}(q,m)\sim\frac{\mathrm d(2\mathrm d+1)D_\nu(q^{-2\mathrm d}-1)}{2\const_C\log q}\frac{m^{-2\mathrm d}}{L_C(m)}<0.
\end{equation*}
Retaining lag zero therefore flattens the finite-bandwidth slope without changing the long-memory exponent, supporting its omission in \eqref{eq_bartlett_1}. Under short memory, including lag zero changes the limit to $D_\nu+\mathsf B_\infty$, which may still vanish. The vanishing stabilizer contributes only $O\{a_{n,m}m^{-2\mathrm d}/L_C(m)\}=o\{m^{-2\mathrm d}/L_C(m)\}$ to the long-memory slope.
\par\noindent\textbf{Finite-bandwidth Bartlett curvature.}
The third component is the finite-bandwidth distortion of the population slope itself, even when $D_\nu$ is known. The following result quantifies this effect under an exact power-law benchmark used only for bias diagnostics.
\begin{proposition}[Bartlett curvature under an exact power law]\label{prop_bartlett}
Suppose that $C(k)=\const_C k^{2\mathrm d-1}$ for every $k\geq1$, where $\const_C>0$ and $\mathrm d\in(0,1/2)$. Then
\begin{equation}\label{eq_curve_1}
\mathsf B_C(r)=\frac{\const_C}{\mathrm d(2\mathrm d+1)}r^{2\mathrm d}+2\const_C\zeta(1-2\mathrm d)+O(r^{-1}),
\end{equation}
where $\zeta$ is the analytically continued Riemann zeta function, with $\zeta(1-2\mathrm d)<0$. Consequently, the population slope in \eqref{eq_bias_slope} satisfies
\begin{equation}\label{eq_curve_2}
\mathrm d_n^{\mathrm{det}}(q,m)-\mathrm d\sim\frac{\mathrm d(2\mathrm d+1)\zeta(1-2\mathrm d)(q^{-2\mathrm d}-1)}{\log q}\,m^{-2\mathrm d}>0.
\end{equation}
The stabilizer contributes $O\{a_{n,m}m^{-2\mathrm d}\}=o(m^{-2\mathrm d})$, leaving the leading term unchanged.
\end{proposition}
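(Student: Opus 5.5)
We write $\alpha=1-2\mathrm d\in(0,1)$, so that $C(k)=\const_C k^{-\alpha}$. The Bartlett aggregate splits as
\[
\mathsf B_C(r)=2\const_C\sum_{k=1}^{r-1}k^{-\alpha}-\frac{2\const_C}{r}\sum_{k=1}^{r-1}k^{1-\alpha}.
\]
We expand each sum with the Euler--Maclaurin formula in its zeta-function form. For $s\neq1$,
\[
\sum_{k=1}^{N}k^{-s}=\zeta(s)+\frac{N^{1-s}}{1-s}+\tfrac12N^{-s}+O(N^{-s-1}),
\]
and this holds for all real $s\neq1$ by analytic continuation. Take $N=r-1$ and rewrite the expansion in terms of $r$, which adjusts the $\tfrac12N^{-s}$ term to $-\tfrac12 r^{-s}$.

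For $s=\alpha$ this gives
\[
\sum_{k=1}^{r-1}k^{-\alpha}=\zeta(\alpha)+\frac{r^{1-\alpha}}{1-\alpha}-\tfrac12r^{-\alpha}+O(r^{-\alpha-1}).
\]
For $s=\alpha-1$ it gives
\[
\sum_{k=1}^{r-1}k^{1-\alpha}=\zeta(\alpha-1)+\frac{r^{2-\alpha}}{2-\alpha}-\tfrac12r^{1-\alpha}+O(r^{-\alpha}).
\]
Dividing the second by $r$ and subtracting, the $r^{1-\alpha}$ coefficient is
\[
2\const_C\Bigl\{\frac{1}{1-\alpha}-\frac{1}{2-\alpha}\Bigr\}=\frac{2\const_C}{(1-\alpha)(2-\alpha)}=\frac{\const_C}{\mathrm d(2\mathrm d+1)},
\]
using $1-\alpha=2\mathrm d$ and $2-\alpha=2\mathrm d+1$. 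The $-\tfrac12 r^{-\alpha}$ from the first sum cancels against $+\tfrac12 r^{-\alpha}$ from $-r^{-1}\cdot(-\tfrac12 r^{1-\alpha})$. The term $\zeta(\alpha-1)/r$ is $O(r^{-1})$, and the remaining pieces are $O(r^{-\alpha-1})$ or $O(r^{-1-\alpha})$. What is left is the constant $2\const_C\zeta(\alpha)=2\const_C\zeta(1-2\mathrm d)$, which establishes \eqref{eq_curve_1}. We will cite the standard uniform expansion (for example via $\zeta(s)=\lim_N[\sum_{k\le N}k^{-s}-N^{1-s}/(1-s)-\tfrac12N^{-s}]$) rather than rederive it.

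The sign $\zeta(1-2\mathrm d)<0$ follows from the classical fact that $\zeta(s)<0$ on $(0,1)$, since $(1-2^{1-s})\zeta(s)=\eta(s)>0$ with $\eta$ the alternating Dirichlet series and $1-2^{1-s}<0$.

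For \eqref{eq_curve_2}, write $K=\const_C/\{\mathrm d(2\mathrm d+1)\}$ and $c=2\const_C\zeta(1-2\mathrm d)$. Then
\[
\mathsf B_{C,n}(r;m)=Kr^{2\mathrm d}\bigl\{1+(c/K)r^{-2\mathrm d}+\epsilon_r\bigr\},
\]
where $\epsilon_r=O(r^{-1-2\mathrm d})+2a_{n,m}D_\nu/(Kr^{2\mathrm d})$. Taking logarithms at $r=\lfloor qm\rfloor$ and $r=m$, using $\log(1+x)=x+O(x^2)$, and dividing by $2\log(\lfloor qm\rfloor/m)$ gives
\[
\mathrm d_n^{\det}-\mathrm d=\frac{(c/K)\bigl(\lfloor qm\rfloor^{-2\mathrm d}-m^{-2\mathrm d}\bigr)+O(a_{n,m}m^{-2\mathrm d})+O(m^{-4\mathrm d})+O(m^{-1-2\mathrm d})}{2\log(\lfloor qm\rfloor/m)}.
\]
Since $\lfloor qm\rfloor/m=q+O(m^{-1})$, we have $\lfloor qm\rfloor^{-2\mathrm d}-m^{-2\mathrm d}=m^{-2\mathrm d}(q^{-2\mathrm d}-1)+O(m^{-1-2\mathrm d})$. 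Also $c/(2K)=\mathrm d(2\mathrm d+1)\zeta(1-2\mathrm d)$, which yields the leading term in \eqref{eq_curve_2}. It is positive because $\zeta<0$ and $q^{-2\mathrm d}-1<0$.

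The main obstacle is controlling the remainders relative to $m^{-2\mathrm d}$, and two of them need care.
\begin{itemize}
\item The quadratic term $O(m^{-4\mathrm d})$ is genuinely of smaller order, since $4\mathrm d>2\mathrm d$.
\item The stabilizer contributes $O(a_{n,m}m^{-2\mathrm d})=o(m^{-2\mathrm d})$, because $a_{n,m}\to0$ as $m/n\to0$.
\end{itemize}
The $O(r^{-1})$ remainder from \eqref{eq_curve_1} contributes $O(m^{-1-2\mathrm d})$ after normalization, which is also negligible. This ensures the stated $\sim$ holds with a nonzero leading coefficient.
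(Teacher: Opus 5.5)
Your proposal is correct and follows essentially the same route as the paper. Both arguments expand $\sum_{k<r}k^{2\mathrm d-1}$ and $\sum_{k<r}k^{2\mathrm d}$ by Euler--Maclaurin in terms of the upper endpoint $r$, cancel the $r^{2\mathrm d-1}$ terms, and then apply a first-order logarithmic expansion at $r=m$ and $r=\lfloor qm\rfloor$. Your eta-function justification of $\zeta(1-2\mathrm d)<0$ and your explicit bookkeeping of the $O(m^{-4\mathrm d})$, $O(m^{-1-2\mathrm d})$ and stabilizer remainders only spell out steps the paper states without detail.
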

\par\noindent\textbf{Comparison of decay rates.}
For the proposed punctured construction, the relevant comparison is between the negative common-centering effect and the positive Bartlett curvature under the power-law benchmark. Reducing the former can reveal the latter. Let $m\sim c_mn^\kappa$, where $c_m>0$ and $0<\kappa<1$. The magnitudes of the negative term in \eqref{eq_bias_center} and the positive term in \eqref{eq_curve_2} have orders $n^{-(1-\kappa)(1-2\mathrm d)}$ and $n^{-2\mathrm d\kappa}$, respectively. Their decay rates agree at
\begin{equation*}
\mathrm d^\star=(1-\kappa)/2.
\end{equation*}
For the cube-root bandwidth, $\mathrm d^\star=1/3$. Below this value, the negative component decays faster and can expose the positive curvature; above it, its decay is slower. These deterministic diagnostics do not imply a universal crossing or monotonicity of the sample estimator's bias.

\begin{remark}[Interpretation of the block adjustment]
The diagnostics above concern population and mean-aggregate slopes. Likewise, Proposition~\ref{prop_block} establishes bias tracking for the population block regression. The bound in Lemma~\ref{lem_block_rate} does not by itself guarantee $\widehat b_n/b_n\to_p1$. The consistency results control the adjustment through Assumption~\ref{ass_bd} and do not require this relative consistency. The procedure targets the leading common-centering bias without guaranteeing exact unbiasedness.
\end{remark}

\subsubsection{Primitive sufficient conditions}\label{sec_conditions}
We provide sufficient conditions for Assumptions~\ref{ass_raw} and~\ref{ass_bd} through stochastic bounds for the distance averages. For $k\geq1$, let $Z_{t,k}=\dist(X_t,X_{t-k})$ and $\gamma_{Z,k}(j)=\operatorname{Cov}(Z_{t,k},Z_{t-j,k})$.
\begin{assumption}[Distance-transformed temporal dependence]\label{ass_dt}
The variables $Z_{t,k}$ are square-integrable for every $k\geq1$. There exist a constant $\bar c>0$ and a positive slowly varying function $L_C^\star$ such that $\sup_{k\geq1}\sum_{|j|<n}|\gamma_{Z,k}(j)|\leq\bar c n^{2\mathrm d}L_C^\star(n)$, $n\geq1$, where $\mathrm d=0$ under short memory.
\end{assumption}
The function $L_C^\star$ controls the covariance bound for the distance transforms and need not coincide with $L_C$.
\begin{lemma}[Stochastic rates for the distance averages]\label{lem_dt}
Let $m\to\infty$, $m/n\to0$, and fix $q>1$. Under Assumption~\ref{ass_dt}, $(\lfloor qm\rfloor)^{-1}\sum_{k=1}^{\lfloor qm\rfloor}|\widehat\delta_n(k)-\delta(k)|=O_p\{n^{\mathrm d-1/2}\sqrt{L_C^\star(n)}\}$ and $|\widehat D_n-\mathbb E\widehat D_n|=O_p\{n^{\mathrm d-1/2}\sqrt{L_C^\star(n)}\}$.
\end{lemma}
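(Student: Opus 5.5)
The plan is a second-moment bound followed by Chebyshev's inequality for each of the two displays. For the lag averages, note that $\widehat\delta_n(k)-\delta(k)=(n-k)^{-1}\sum_{t=k+1}^n(Z_{t,k}-\mathbb EZ_{t,k})$. Stationarity gives
\begin{equation*}
\operatorname{Var}\{\widehat\delta_n(k)\}=\frac{1}{(n-k)^2}\sum_{|j|<n-k}(n-k-|j|)\gamma_{Z,k}(j)\leq\frac{1}{n-k}\sum_{|j|<n-k}|\gamma_{Z,k}(j)|\leq\frac{\bar c(n-k)^{2\mathrm d}L_C^\star(n-k)}{n-k},
\end{equation*}
using Assumption~\ref{ass_dt} at sample size $n-k$. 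For $k\leq\lfloor qm\rfloor$ with $m/n\to0$, we have $n-k\geq n/2$ eventually. Since $L_C^\star$ is slowly varying, uniform convergence on compact sets gives $L_C^\star(n-k)\leq 2L_C^\star(n)$ uniformly in such $k$. Hence $\max_{k\leq\lfloor qm\rfloor}\mathbb E|\widehat\delta_n(k)-\delta(k)|\leq\max_k\{\operatorname{Var}\widehat\delta_n(k)\}^{1/2}=O\{n^{\mathrm d-1/2}\sqrt{L_C^\star(n)}\}$. The average over $k$ of $\mathbb E|\cdot|$ is bounded by this maximum, and Markov's inequality then gives the first claim. When $\mathrm d=0$ the same computation applies verbatim.

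For $\widehat D_n$, regroup the pairs by lag:
\begin{equation*}
\widehat D_n=\frac{2}{n(n-1)}\sum_{k=1}^{n-1}(n-k)\widehat\delta_n(k),
\end{equation*}
which is a convex combination of the $\widehat\delta_n(k)$ with weights $w_k=2(n-k)/\{n(n-1)\}$. Minkowski's inequality then gives $\{\operatorname{Var}\widehat D_n\}^{1/2}\leq\sum_k w_k\{\operatorname{Var}\widehat\delta_n(k)\}^{1/2}$, and the bound above yields
\begin{equation*}
\{\operatorname{Var}\widehat\delta_n(k)\}^{1/2}\leq\{\bar c(n-k)^{2\mathrm d-1}L_C^\star(n-k)\}^{1/2}.
\end{equation*}
So it suffices to show
\begin{equation*}
\sum_{k=1}^{n-1}\frac{n-k}{n^2}(n-k)^{\mathrm d-1/2}\sqrt{L_C^\star(n-k)}=O\{n^{\mathrm d-1/2}\sqrt{L_C^\star(n)}\}.
\end{equation*}
Substituting $j=n-k$, the left side is $n^{-2}\sum_{j=1}^{n-1}j^{\mathrm d+1/2}\sqrt{L_C^\star(j)}$. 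By Karamata's theorem (Lemma~\ref{lem_sum}), since the exponent $\mathrm d+1/2>-1$, this sum is $\sim n^{\mathrm d+3/2}\sqrt{L_C^\star(n)}/(\mathrm d+3/2)$, which gives exactly the required order. Chebyshev's inequality then finishes the second claim.

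The main obstacle is the uniformity over lags. Assumption~\ref{ass_dt} is stated with $\sup_k$ and with the truncation $|j|<n$, so I must apply it with $n$ replaced by the effective length $n-k$. This holds because the bound is assumed for all sample sizes $n\geq1$. I also need that the ratios $L_C^\star(n-k)/L_C^\star(n)$ stay bounded over the relevant ranges. For the first claim this follows from the uniform convergence theorem for slowly varying functions, because $(n-k)/n\in[1/2,1]$. For the $\widehat D_n$ part, the full range $k<n$ is handled through Karamata summation rather than uniformity; for small $j$, Potter-type bounds keep $\sqrt{L_C^\star(j)}$ harmless because those terms carry the weight $j^{\mathrm d+1/2}$.
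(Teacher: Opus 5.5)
Your proposal is correct and follows essentially the same route as the paper. The paper also bounds $\operatorname{Var}\{\widehat\delta_n(k)\}$ by applying Assumption~\ref{ass_dt} at the effective length $n-k$, controls the averaged lag errors by Cauchy--Schwarz and Markov, and handles $\widehat D_n$ by regrouping pairs by lag and applying Karamata summation after substituting $u=n-k$. The differences are cosmetic: you use Minkowski in $L^2$ with Chebyshev where the paper uses the $L^1$ triangle inequality with Markov, and you invoke the uniform convergence theorem where the paper cites Potter's bounds for $L_C^\star(n-k)/L_C^\star(n)$.
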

The first bound controls the average lagwise error entering the Bartlett aggregate, while the second controls the stochastic fluctuation of the global distance average. The next lemma extends these bounds to the block averages and the bias regression.
\begin{lemma}[Stochastic rates for $\widehat D_{n,s}$ and $\widehat b_n$]\label{lem_block_rate}
Under Assumption~\ref{ass_dt}, for the fixed block set $\mathcal S$,
\begin{equation}\label{eq_block_rate_1}
\max_{s\in\mathcal S}|\widehat D_{n,s}-\mathbb E\widehat D_{n,s}|=O_p\{n^{\mathrm d-1/2}\sqrt{L_C^\star(n)}\}.
\end{equation}
If, in addition, \eqref{eq_trace} holds and $\widehat{\mathrm d}_{\mathrm P}\to_p\mathrm d$, then
\begin{equation}\label{eq_block_rate_2}
\widehat b_n=b_n+O_p\{n^{\mathrm d-1/2}\sqrt{L_C^\star(n)}\}+o_p(|b_n|).
\end{equation}
\end{lemma}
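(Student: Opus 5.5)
The plan is to prove \eqref{eq_block_rate_1} by reducing each block average to the global-average bound of Lemma~\ref{lem_dt}, and then to obtain \eqref{eq_block_rate_2} by a linear perturbation argument for the ratio in \eqref{eq_block_est}.

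\textbf{Step 1: the block averages.} Each $\widehat D_{j,s}$ is a global pairwise-distance average computed from the stationary segment $\{X_u:u\in I_{j,s}\}$ of length $\ell_{n,s}=\lfloor n/s\rfloor$. Assumption~\ref{ass_dt} holds for this segment with the same constants, because the bound is stated for every $n\geq1$. Applying the second bound of Lemma~\ref{lem_dt} at sample size $\ell_{n,s}$ gives
\begin{equation*}
|\widehat D_{j,s}-\mathbb E\widehat D_{j,s}|=O_p\{\ell_{n,s}^{\mathrm d-1/2}\sqrt{L_C^\star(\ell_{n,s})}\}.
\end{equation*}
Since $s$ is fixed, $\ell_{n,s}\asymp n$ and $L_C^\star(\ell_{n,s})\sim L_C^\star(n)$, so the right-hand side is $O_p\{n^{\mathrm d-1/2}\sqrt{L_C^\star(n)}\}$. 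The number of pairs $(j,s)$ is finite, so the triangle inequality carries this rate to the average $\widehat D_{n,s}$ and to the maximum over $s\in\mathcal S$.

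One point must be checked: the $O_p$ constant in Lemma~\ref{lem_dt} should depend only on $\bar c$. It comes from a Chebyshev bound, using
\begin{equation*}
\operatorname{Var}(\widehat D_n)\lesssim n^{-1}\sup_k\sum_{|j|<n}|\gamma_{Z,k}(j)|,
\end{equation*}
so it is uniform in the sample size. If the lemma's proof instead uses a Hoeffding-type decomposition of the U-statistic, I would re-run that same variance bound at sample size $\ell_{n,s}$.

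\textbf{Step 2: the bias regression.} Write $\widehat b_n=\widehat b_n(\widehat{\mathrm d}_{\mathrm P})$ and split
\begin{equation*}
\widehat b_n-b_n=\{\widehat b_n(\widehat{\mathrm d}_{\mathrm P})-\bar b_n(\widehat{\mathrm d}_{\mathrm P})\}+\{\bar b_n(\widehat{\mathrm d}_{\mathrm P})-b_n\}.
\end{equation*}
The second bracket is $o_p(|b_n|)$ by Proposition~\ref{prop_block}, given $\widehat{\mathrm d}_{\mathrm P}\to_p\mathrm d$.

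The first bracket equals
\begin{equation*}
\frac{\sum_{s\neq1}x^\circ_{n,s}(\widehat{\mathrm d}_{\mathrm P})\,\xi_{n,s}}{\sum_{s\neq1}\{x^\circ_{n,s}(\widehat{\mathrm d}_{\mathrm P})\}^2},
\qquad
\xi_{n,s}=(\widehat D_{n,s}-\mathbb E\widehat D_{n,s})-(\widehat D_n-\mathbb E\widehat D_n).
\end{equation*}
By Step 1, $\max_s|\xi_{n,s}|=O_p\{n^{\mathrm d-1/2}\sqrt{L_C^\star(n)}\}$. It remains to bound the weights. By Cauchy--Schwarz, the first bracket is at most
\begin{equation*}
\frac{\sqrt{J-1}\,\max_s|\xi_{n,s}|}{\bigl[\sum_{s\neq1}\{x^\circ_{n,s}(\widehat{\mathrm d}_{\mathrm P})\}^2\bigr]^{1/2}},
\end{equation*}
so I only need the denominator bounded away from zero with probability tending to one.

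\textbf{Main obstacle: the denominator at a random trial value.} Because $\mathrm d\in(0,1/2)$ and $\widehat{\mathrm d}_{\mathrm P}\to_p\mathrm d$, the pilot lies in a compact $K\subset(0,1/2)$ containing $\mathrm d$ in its interior with probability tending to one. On $K$ the paper records that $x^\circ_{n,s}-x_{n,s}\to0$ uniformly. Also $x_{n,s}(\eth)=(\ell_{n,s}/n)^{2\eth-1}-1\to s^{1-2\eth}-1$ uniformly on $K$, and the limit is continuous. Hence, with probability tending to one,
\begin{equation*}
\sum_{s\neq1}\{x^\circ_{n,s}(\widehat{\mathrm d}_{\mathrm P})\}^2\geq\tfrac12\,H(\mathrm d)>0.
\end{equation*}
The strict positivity holds because $s_2\geq2$ and $1-2\mathrm d>0$, so $H(\mathrm d)\geq(s_2^{1-2\mathrm d}-1)^2>0$.

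The delicate part is justifying that uniformity for $x^\circ_{n,s}$, where the random argument enters both $g_{\ell_{n,s}}$ and $g_n$. I would use the Karamata/Euler--Maclaurin asymptotics for $g_\ell(\eth)$, which are uniform on compacts bounded away from $0$ and $1/2$; this is exactly where the upper cap $\overline{\mathrm d}_{\mathrm P}<1/2$ and the assumption $\mathrm d>0$ are needed. Combining the two brackets then gives \eqref{eq_block_rate_2}.
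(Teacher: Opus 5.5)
Your proposal is correct and follows essentially the paper's argument. Step~1 is the paper's within-block rerun of the Lemma~\ref{lem_dt} calculation, which you obtain by stationarity at length $\ell_{n,s}$. Your split of $\widehat b_n-b_n$ into the noise ratio built from $\xi_{n,s}$ plus $\bar b_n(\widehat{\mathrm d}_{\mathrm P})-b_n$ is the paper's direct expansion regrouped, with the deterministic part handed to Proposition~\ref{prop_block}.

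One small correction: in this lemma the cap $\overline{\mathrm d}_{\mathrm P}<1/2$ plays no role. Pilot consistency to $\mathrm d\in(0,1/2)$ already places $\widehat{\mathrm d}_{\mathrm P}$ in a compact subset of $(0,1/2)$ with probability tending to one.
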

Independence across blocks is not required. In \eqref{eq_block_rate_2}, the remainder $o_p(|b_n|)$ accounts for the block-bias approximation and the use of an estimated pilot.

\begin{proposition}[Primitive sufficient conditions for the raw and adjusted estimators]\label{prop_rates}
Let $q>1$ be fixed and $m\to\infty$ with $m/n\to0$. Suppose that Assumption~\ref{ass_dt} holds and $\mathrm d\in\mathcal I_{\mathrm O}$, where $\mathrm d=0$ under short memory. Write $\mathfrak r_n=n^{\mathrm d-1/2}\sqrt{L_C^\star(n)}$.
\begin{enumerate}[label=(\roman*)]
\item Under the long-memory condition \eqref{eq_trace}, Assumption~\ref{ass_raw} holds if
\begin{equation}\label{eq_rate_lm}
\mathfrak r_n=o\{m^{2\mathrm d-1}L_C(m)\}.
\end{equation}
Under the same condition, the correction in \eqref{eq_corr} also satisfies Assumption~\ref{ass_bd}.
\item Under short memory, suppose that $\sum_{k\geq1}k|C(k)|<\infty$. If $\mathsf B_\infty\neq0$, Assumption~\ref{ass_raw} holds provided $m\mathfrak r_n+m/n=o(1)$. If $\mathsf B_\infty=0$ and $D_\nu>0$, it holds provided
\begin{equation}\label{eq_rate_sm}
m\mathfrak r_n+m/n=o(a_{n,m}),\qquad m^{-1}=o(a_{n,m}).
\end{equation}
In either case, Theorem~\ref{thm_raw} gives $\widehat{\mathrm d}_{\mathrm P}\to_p0$ for the initial projected raw pilot. The positive lower activation threshold then implies $\Pr(\widehat c_n=0)\to1$, establishing Assumption~\ref{ass_bd}. The same argument applies recursively to each fixed pilot update.
\end{enumerate}
\end{proposition}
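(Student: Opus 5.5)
The plan is to decompose the error of the signed aggregate into a common-centering part and a lagwise part, and to bound each using Lemmas~\ref{lem_dt}, \ref{lem_dhat} and~\ref{lem_raw_bias}. Writing $\widehat C_n(k)-C(k)=(\widehat D_n-D_\nu)-(\widehat\delta_n(k)-\delta(k))$, we obtain, for every $r\in\mathcal G_m$,
\begin{equation*}
\widehat{\mathsf B}_{C,n}(r;m)-\mathsf B_{C,n}(r;m)=\{2a_{n,m}+r-1\}(\widehat D_n-D_\nu)-2\sum_{k=1}^{r-1}\Bigl(1-\frac{k}{r}\Bigr)\{\widehat\delta_n(k)-\delta(k)\}.
\end{equation*}
The second term is bounded uniformly in $r$ by $2\sum_{k\le\lfloor qm\rfloor}|\widehat\delta_n(k)-\delta(k)|$. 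By Lemma~\ref{lem_dt}, this is $O_p(m\mathfrak r_n)$. The first term is at most $O(m)\{|\widehat D_n-\mathbb E\widehat D_n|+|b_n|\}=O(m)\{O_p(\mathfrak r_n)+|b_n|\}$.

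\textbf{Long memory.} By Lemma~\ref{lem_dhat}, $|b_n|\asymp n^{2\mathrm d-1}L_C(n)$. The uniform error is therefore $O_p(m\mathfrak r_n)+O\{mn^{2\mathrm d-1}L_C(n)\}$, and we need it to be $o\{m^{2\mathrm d}L_C(m)\}$. The first piece is handled directly by \eqref{eq_rate_lm}. For the second, note that $mn^{2\mathrm d-1}L_C(n)/\{m^{2\mathrm d}L_C(m)\}=(m/n)^{1-2\mathrm d}L_C(n)/L_C(m)\to0$. This follows from Potter's bounds, since $m/n\to0$ and $1-2\mathrm d>0$.

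For Assumption~\ref{ass_bd}, we have $\widehat c_n\le|\widehat b_n|$. Lemma~\ref{lem_block_rate} gives $\widehat b_n=O_p(\mathfrak r_n)+O(|b_n|)$; this requires pilot consistency, which Theorem~\ref{thm_raw} supplies once Assumption~\ref{ass_raw} is established. Both pieces are $o\{m^{2\mathrm d-1}L_C(m)\}$: the first by \eqref{eq_rate_lm}, and the second by the same Potter argument, $n^{2\mathrm d-1}L_C(n)/\{m^{2\mathrm d-1}L_C(m)\}\to0$.

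One caveat concerns the pilot range. When the pilot falls outside $(\underline{\mathrm d},\overline{\mathrm d}]$, $\widehat c_n=0$, which is trivially fine. When it falls inside but $\mathrm d$ itself is not in that range, we bound $|\widehat b_n(\eth)|$ uniformly over $\eth$ in the compact set $[\underline{\mathrm d},\overline{\mathrm d}]$. On this set the denominator in \eqref{eq_block_est} is bounded away from zero, by Remark~\ref{rem_block} and the uniform convergence of $x^\circ_{n,s}$. The numerator is $O_p(\mathfrak r_n)+O(|b_n|)$ by Lemma~\ref{lem_block_bias} and \eqref{eq_block_rate_1}. This bypasses pilot consistency entirely, and I would use this cruder bound throughout.

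\textbf{Short memory.} Here Lemma~\ref{lem_dhat} gives $|b_n|=O(n^{-1})$, so the uniform error is $O_p(m\mathfrak r_n)+O(m/n)$. When $\mathsf B_\infty\ne0$, the target is $o(1)$, which is exactly the stated condition $m\mathfrak r_n+m/n=o(1)$. When $\mathsf B_\infty=0$, the target is $o(a_{n,m})$, given by \eqref{eq_rate_sm}; the condition $m^{-1}=o(a_{n,m})$ is only needed for Proposition~\ref{prop_scale}\ref{prop_scale3}. Theorem~\ref{thm_raw} then yields $\widehat{\mathrm d}_{\mathrm P}\to_p0$, since $0\in\mathcal I_{\mathrm P}$ is interior and the projection is continuous. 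Because $\underline{\mathrm d}>0$, $\Pr(\widehat{\mathrm d}_{\mathrm P}>\underline{\mathrm d})\to0$, so $\widehat c_n=0$ with probability tending to one. Iterating this argument, each update equals the raw estimator on this event, so each subsequent pilot is again consistent for $0$.

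The main obstacle I anticipate is the long-memory control of $\widehat c_n$. This is a pilot-dependent regression, and Lemma~\ref{lem_block_rate} is stated under pilot consistency, which might seem circular. I plan to resolve it either by the uniform bound over $\eth\in[\underline{\mathrm d},\overline{\mathrm d}]$ described above, or by noting that Assumption~\ref{ass_raw} is established first and without reference to $\widehat c_n$.
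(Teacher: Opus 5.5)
Your proposal is correct and follows the paper's proof essentially step for step. The shared ingredients are the exact decomposition of $\widehat{\mathsf B}_{C,n}(r;m)-\mathsf B_{C,n}(r;m)$ with coefficient $2a_{n,m}+r-1$, the bounds from Lemmas~\ref{lem_dt} and~\ref{lem_dhat}, Potter's bound for the deterministic term, and the same short-memory activation argument. The paper also controls $\widehat c_n$ under long memory through the uniform bound $\sup_{\eth\in[\underline{\mathrm d},\overline{\mathrm d}]}|\widehat b_n(\eth)|=O_p(\mathfrak r_n+|b_n|)$ instead of Lemma~\ref{lem_block_rate}, so your way around the apparent circularity with pilot consistency is exactly the one used there.
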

For the long-memory adjustment, the fixed interval $[\underline{\mathrm d},\overline{\mathrm d}]\subset(0,1/2)$ keeps the block-regression denominator uniformly bounded away from zero for sufficiently large $n$. Thus, Lemma~\ref{lem_block_bias} and \eqref{eq_block_rate_1} give $\widehat c_n=O_p(\mathfrak r_n+|b_n|)$ for any projected pilot. Since \eqref{eq_dhat_2} and $m/n\to0$ imply $|b_n|=o\{m^{2\mathrm d-1}L_C(m)\}$, condition \eqref{eq_rate_lm} also controls the adjustment.
\begin{corollary}[Simple bandwidth and stabilization orders]\label{cor_orders}
Let $m=c_mn^\kappa\{1+o(1)\}$ with $c_m>0$ and $0<\kappa<1/2$. Then \eqref{eq_rate_lm} holds for every fixed $\mathrm d\in(0,1/2)$. Under short memory, $m\mathfrak r_n+m/n=o(1)$. If $\mathsf B_\infty=0$, let $a_{n,m}=c_a(m/n)^\eta$ with $c_a>0$. Then \eqref{eq_rate_sm} holds whenever
\begin{equation}\label{eq_eta}
0<\eta<\min\left\{\frac{1/2-\kappa}{1-\kappa},\frac{\kappa}{1-\kappa}\right\}.
\end{equation}
For $m$ proportional to $n^{1/3}$, any $0<\eta<1/4$ is admissible.
\end{corollary}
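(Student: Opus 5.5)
The corollary is a direct verification that the power-type bandwidth and stabilizer satisfy the rate conditions \eqref{eq_rate_lm} and \eqref{eq_rate_sm} of Proposition~\ref{prop_rates}. The slowly varying factors are handled by Potter-type bounds: for every $\epsilon>0$, $x^{-\epsilon}\lesssim L(x)\lesssim x^{\epsilon}$ for large $x$. The plan is to compare powers of $n$ with a small exponent slack, treating long memory and the two short-memory cases in turn.

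\emph{Long memory.} With $m\asymp n^\kappa$, the left side of \eqref{eq_rate_lm} is $\mathfrak r_n=n^{\mathrm d-1/2}\sqrt{L_C^\star(n)}\lesssim n^{\mathrm d-1/2+\epsilon}$. The right side satisfies $m^{2\mathrm d-1}L_C(m)\gtrsim n^{\kappa(2\mathrm d-1)-\epsilon}$. The required inequality for exponents is therefore
\[
\mathrm d-1/2 < \kappa(2\mathrm d-1),
\qquad\text{i.e.}\qquad (1/2-\mathrm d)(2\kappa-1)<0,
\]
which holds because $\mathrm d<1/2$ and $\kappa<1/2$. Choosing $\epsilon$ smaller than a third of the exponent gap gives \eqref{eq_rate_lm} for each fixed $\mathrm d$.

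\emph{Short memory.} Here $\mathfrak r_n\lesssim n^{-1/2+\epsilon}$, so $m\mathfrak r_n\lesssim n^{\kappa-1/2+\epsilon}\to0$ because $\kappa<1/2$. Also $m/n\asymp n^{\kappa-1}\to0$, so $m\mathfrak r_n+m/n=o(1)$.

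\emph{The case $\mathsf B_\infty=0$.} The stabilizer satisfies $a_{n,m}\asymp n^{-\eta(1-\kappa)}$, and each term of \eqref{eq_rate_sm} is checked against it.
\begin{itemize}
\item For $m^{-1}=o(a_{n,m})$ we need $-\kappa<-\eta(1-\kappa)$, i.e.\ $\eta<\kappa/(1-\kappa)$.
\item For $m/n=o(a_{n,m})$ we need $\kappa-1<-\eta(1-\kappa)$, i.e.\ $\eta<1$. This is already implied by $\eta<(1/2-\kappa)/(1-\kappa)<1$.
\item For $m\mathfrak r_n=o(a_{n,m})$ we need $\kappa-1/2+\epsilon<-\eta(1-\kappa)$, i.e.\ $\eta<(1/2-\kappa)/(1-\kappa)$ once $\epsilon$ is small enough.
\end{itemize}
These three requirements together are exactly \eqref{eq_eta}. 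For $\kappa=1/3$ the two bounds are $(1/6)/(2/3)=1/4$ and $(1/3)/(2/3)=1/2$, so the admissible range is $0<\eta<1/4$.

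The only delicate point is the slowly varying factors: they must be absorbed without affecting strict exponent inequalities. The Potter bounds with $\epsilon$ below the positive exponent gap handle this. The factor $L_C^\star$ appears under a square root and costs at most $n^{\epsilon/2}$.
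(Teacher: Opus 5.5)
Your proof is correct and follows essentially the same route as the paper: substitute $m\asymp n^\kappa$ and $a_{n,m}\asymp n^{-\eta(1-\kappa)}$, then compare polynomial exponents term by term. This yields $(1/2-\mathrm d)(2\kappa-1)<0$, $\eta<(1/2-\kappa)/(1-\kappa)$, $\eta<\kappa/(1-\kappa)$ and $\eta<1$, after which the slowly varying factors are absorbed. Your explicit $\epsilon$-slack bounds $x^{-\epsilon}\lesssim L(x)\lesssim x^{\epsilon}$ make precise the paper's remark that a negative polynomial power dominates the slowly varying ratio.
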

The first upper bound in \eqref{eq_eta} ensures $m\mathfrak r_n=o(a_{n,m})$, while the second ensures $m^{-1}=o(a_{n,m})$; $m/n=o(a_{n,m})$ then follows automatically. The strict inequalities ensure that these conclusions hold with the slowly varying factors included.

Assumption~\ref{ass_dt} is an estimation condition, not part of the definition of memory. The following lemma provides a sufficient condition through covariance bounds for pairs of distances.
\begin{lemma}[Shifted covariance envelope]\label{lem_shift}
Suppose that the pairwise distances are square-integrable and that there exist a nonnegative sequence $\{\varpi(h)\}_{h\geq0}$, a constant $0<C<\infty$, and a positive slowly varying function $L_C^\star$ such that 
\begin{equation}\label{eq_shift}
\sup_{s\in\mathbb Z}\sum_{|j|<n}\varpi(|j-s|)\leq Cn^{2\mathrm d}L_C^\star(n),\qquad n\geq1,
\end{equation}
and, for all $a_1,a_2,b_1,b_2\in\mathbb Z$,
\begin{equation}\label{eq_pair}
|\operatorname{Cov}\{\dist(X_{a_1},X_{a_2}),\dist(X_{b_1},X_{b_2})\}|\leq C\sum_{u=1}^2\sum_{v=1}^2\varpi(|a_u-b_v|).
\end{equation}
Then Assumption~\ref{ass_dt} holds.
\end{lemma}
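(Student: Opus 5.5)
We apply \eqref{eq_pair} with $a_1=t$, $a_2=t-k$, $b_1=t-j$, $b_2=t-j-k$, which gives $\gamma_{Z,k}(j)=\operatorname{Cov}\{\dist(X_t,X_{t-k}),\dist(X_{t-j},X_{t-j-k})\}$. The four index differences are $|a_u-b_v|\in\{|j|,\ |j+k|,\ |j-k|,\ |j|\}$. Hence
\begin{equation*}
|\gamma_{Z,k}(j)|\leq C\{2\varpi(|j|)+\varpi(|j-(-k)|)+\varpi(|j-k|)\}.
\end{equation*}
Square-integrability of $Z_{t,k}$ is part of the hypothesis that all pairwise distances are square-integrable. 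Stationarity makes $\gamma_{Z,k}(j)$ independent of $t$.

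Next we sum over $|j|<n$. Each of the three envelope terms is a shifted sum $\sum_{|j|<n}\varpi(|j-s|)$, with shifts $s=0$, $s=-k$ and $s=k$. By \eqref{eq_shift}, each such sum is bounded by $Cn^{2\mathrm d}L_C^\star(n)$ uniformly in $s\in\mathbb Z$. This is why the supremum over shifts appears in \eqref{eq_shift}: the lag $k$ moves the relevant envelope centre to $\pm k$, which can be arbitrarily large, and the bound must not depend on $k$. Adding the pieces gives
\begin{equation*}
\sup_{k\geq1}\sum_{|j|<n}|\gamma_{Z,k}(j)|\leq 4C^2n^{2\mathrm d}L_C^\star(n),\qquad n\geq1.
\end{equation*}
This is Assumption~\ref{ass_dt} with $\bar c=4C^2$ and the same slowly varying $L_C^\star$.

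No step here is a real obstacle. The one point requiring care is that the bound must hold uniformly in $k$. This is delivered directly by the $\sup_s$ in \eqref{eq_shift}, so no condition linking $k$ to $n$ is needed. The short-memory case is covered by the same computation with $\mathrm d=0$.
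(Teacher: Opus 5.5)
Your proof is correct and follows the paper's argument essentially verbatim. You use the same index choice $(t,t-k,t-j,t-j-k)$ in \eqref{eq_pair}, the same three shifted sums with shifts $0$, $k$, $-k$, and uniformity in $k$ from the supremum in \eqref{eq_shift}; the only difference is that you track the explicit constant $\bar c=4C^2$, whereas the paper absorbs the repeated $\varpi(|j|)$ term into a generic constant.
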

For $Z_{t,k}$ and $Z_{t-j,k}$, the four cross-pair separations in \eqref{eq_pair} are $|j|$, $|j+k|$, $|j-k|$, and $|j|$. The supremum over shifts in \eqref{eq_shift} therefore makes the resulting covariance-sum bound uniform in $k$.
We next apply this criterion to the examples in Section~\ref{sect::memory def}.
\begin{proposition}[Verification for the examples]\label{prop::example assumption}
For the processes and metrics in Examples~\ref{ex_gaussian}--\ref{ex_wass}, Assumption~\ref{ass_dt} holds.
\end{proposition}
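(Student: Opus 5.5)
The plan is to verify the two conditions of Lemma~\ref{lem_shift}, \eqref{eq_pair} and \eqref{eq_shift}, with $\varpi(h)$ essentially $|\gamma(h)|$, where $\gamma$ is the autocovariance of the underlying Gaussian process. In all three examples the distance $\dist(X_a,X_b)$ is a fixed measurable function of a Gaussian pair $(G_a,G_b)$: in Example~\ref{ex_gaussian}, $\dist=|G_a-G_b|$. For Example~\ref{ex_psd}, write $Z=u+\vartheta Gv$; orthonormality gives
\[
\|Z_aZ_a^\top-Z_bZ_b^\top\|_{\mathrm F}^2=2\vartheta^2(G_a-G_b)^2+\vartheta^4(G_a^2-G_b^2)^2,
\]
which is a polynomial-growth function of $(G_a,G_b)$. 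For Example~\ref{ex_wass}, $W_2^2=(\mu_a-\mu_b)^2+(\sigma_a-\sigma_b)^2$, with $\sigma_t=\sigma e^{\lambda\mu_t}$. So every distance is $f(G_a,G_b)$ with $f$ square-integrable under Gaussian measure, and its moments are uniformly bounded because the marginals are standard.

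The key step is a covariance inequality for functions of jointly Gaussian vectors. Take the four-dimensional Gaussian vector $(G_{a_1},G_{a_2},G_{b_1},G_{b_2})$ and split it into the blocks $U=(G_{a_1},G_{a_2})$ and $V=(G_{b_1},G_{b_2})$. I would invoke a Gebelein/Hermite-expansion type bound in the following form:
\[
|\operatorname{Cov}\{f(U),g(V)\}|\le C\,\|f\|_{L^2}\|g\|_{L^2}\max_{u,v}|\operatorname{corr}(G_{a_u},G_{b_v})|.
\]
This can be proved by expanding in multivariate Hermite polynomials after whitening each block. Equivalently, one can interpolate $\operatorname{Cov}$ along $t\mapsto$ joint law with cross-covariance $t\Sigma_{UV}$ and apply the Gaussian covariance formula $\frac{d}{dt}\mathbb E f(U)g(V_t)=\sum\Sigma_{UV,ij}\mathbb E\partial_if\,\partial_jg$. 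For the nonsmooth $|\cdot|$ and $\sqrt{\cdot}$ in Examples~\ref{ex_gaussian}--\ref{ex_wass}, the interpolation derivatives exist in the weak sense, or one can use the Hermite route with the Mehler kernel.

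Bounding the max by the sum gives \eqref{eq_pair} with $\varpi(h)=|\gamma(h)|$; with standardized marginals, $\gamma(0)=1$ covers $h=0$. The constant $C$ needs care: the bound must hold uniformly, including when $U$ (or $V$) is degenerate ($a_1=a_2$) or nearly collinear. To handle this, I would apply Gebelein's inequality to the whole vector $U$ versus $V$, where the maximal correlation equals the largest canonical correlation. That quantity is bounded by $\|\Sigma_U^{-1/2}\Sigma_{UV}\Sigma_V^{-1/2}\|$, which can blow up when $G_{a_1}$ and $G_{a_2}$ are highly correlated.

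This is the main obstacle. I would get around it with a small-correlation versus large-correlation split. When $\max|\gamma(|a_u-b_v|)|\ge\epsilon_0$, the trivial bound $|\operatorname{Cov}|\le\sup\mathbb E\dist^2\le C'$ suffices, after enlarging $C$ by $C'/\epsilon_0$. When all cross-correlations are below $\epsilon_0$, I would use the interpolation formula directly. It needs only bounded second moments of the gradients of $f$ and $g$ along the path, which are Gaussian with unit variances, so no inverse of $\Sigma_U$ appears; the derivatives $|\partial f|$ grow at most polynomially in the examples. For $|x-y|$, whose derivative is $\operatorname{sign}(x-y)$, the derivatives are bounded. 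For the Frobenius and Wasserstein cases, the derivative of $\sqrt{Q}$ is bounded by a polynomial times $|\partial Q|/\sqrt Q$, which is still controlled, since $\sqrt Q$ is Lipschitz in $(G_a,G_b)$ up to polynomial factors.

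Finally, for \eqref{eq_shift} I would use $\varpi(h)=|\gamma(h)|\le c'(1+h)^{2\mathrm d_{\rm E}-1}L(1+h)$. For any shift $s$,
\[
\sum_{|j|<n}\varpi(|j-s|)\le 2\sum_{h=0}^{n}\varpi(h),
\]
because the map $j\mapsto |j-s|$ hits each value at most twice and the sum is maximized by centering. Karamata's theorem (Lemma~\ref{lem_sum}) then gives an order of $n^{2\mathrm d_{\rm E}}L(n)$, so Lemma~\ref{lem_shift} applies with $\mathrm d=\mathrm d_{\rm E}$, in agreement with Proposition~\ref{prop::example memory parameter}.
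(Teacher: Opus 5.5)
Your argument is correct in substance but takes a different route from the paper. The paper never verifies \eqref{eq_pair} for arbitrary quadruples. It bounds $\gamma_{Z,k}(j)$ directly: it whitens $W_{t,k}=(G_t,G_{t-k})^\top$ and applies Arcones's Lemma~1 with Hermite rank $\tau=1$, falling back on Cauchy--Schwarz when the standardized cross-covariance row sum exceeds one. That route needs only uniformly bounded second moments of the distance transforms, with no smoothness.

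The near-collinearity you call the main obstacle does not arise in these examples. Since $\gamma(h)\to0$, one has $\sup_{h\geq1}|\gamma(h)|<1$, because a unit correlation at some lag would force nondecaying covariances along its multiples. Hence the within-block covariances $\bigl(\begin{smallmatrix}1&\gamma(k)\\ \gamma(k)&1\end{smallmatrix}\bigr)$ are uniformly well conditioned, and the case $a_1=a_2$ is trivial because the distance vanishes identically. The Gebelein/Arcones route you first considered therefore goes through directly; this is exactly what the paper does.

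Your Gaussian interpolation identity gives
\[
|\operatorname{Cov}\{f(U),g(V)\}|\le\max_{u,v}|\gamma(a_u-b_v)|\sum_{i,j}\|\partial_if(U)\|_2\|\partial_jg(V)\|_2
\]
for every cross-covariance, not only small ones. The reason is that $t\Sigma+(1-t)\operatorname{diag}(\Sigma_U,\Sigma_V)$ is positive semidefinite for all $t\in[0,1]$, so your $\epsilon_0$ split is superfluous and no inverse of $\Sigma_U$ appears. The price is weak differentiability with uniformly square-integrable gradients. This does hold here through pointwise envelopes that depend only on $(x,y)$:
\begin{itemize}
\item bounded for Example~\ref{ex_gaussian};
\item $2\vartheta\{1+\vartheta^2(x^2+y^2)\}^{1/2}$ for Example~\ref{ex_psd};
\item $1+\sigma\lambda e^{\lambda|x|}$ for Example~\ref{ex_wass}, which is exponential rather than polynomial growth, but still fine under standard normal marginals.
\end{itemize}
For these nonsmooth functions, a mollification argument with these envelopes makes the identity rigorous. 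In short, your route verifies the full quadruple condition \eqref{eq_pair} without any conditioning argument. The paper's route extends to general $L^2$ transforms and to higher Hermite ranks (Section~\ref{app_dt}), whereas the interpolation bound is intrinsically first order.

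One step needs repair. The inequality $\sum_{|j|<n}\varpi(|j-s|)\le2\sum_{h=0}^{n}\varpi(h)$, justified by ``maximized by centering,'' presumes that $\varpi$ is nonincreasing. Neither $|\gamma(h)|$ nor the envelope $c'(1+h)^{2\mathrm d_{\rm E}-1}L(1+h)$ need be, since $L$ may oscillate. There are two fixes:
\begin{itemize}
\item Take $\varpi(h)=\sup_{k\geq h}|\gamma(k)|$. It is nonincreasing, still dominates $|\gamma|$ in \eqref{eq_pair}, and is regularly varying with index $2\mathrm d_{\rm E}-1$ by Theorem~1.5.3 of \citet{bingham1987regular}. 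Your counting argument then holds.
\item Alternatively, use the paper's split $\sup_s\sum_{|j|<n}|\gamma(j-s)|\le\sum_{|h|<n}|\gamma(h)|+2n\sup_{h\geq n}|\gamma(h)|$, followed by Karamata's theorem and Potter's bound.
\end{itemize}
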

\begin{remark}[Gaussian-driven sufficient condition]\label{rem_dt}
The same approach applies more generally to Gaussian-driven object-valued processes; Section~\ref{app_dt} of the supplementary material gives the full argument, summarized here. Suppose that $X_t=G(\xi_t)$, where $\{\xi_t\}$ is a stationary Gaussian vector process, and put $\mathcal R_\xi(h)=\|\operatorname{Cov}(\xi_t,\xi_{t-h})\|_{\max}$, the largest absolute matrix entry. Under the uniform conditions on nondegeneracy, second moments, and Hermite rank stated there, \citet[Lemma~1]{arcones1994limit} and Cauchy--Schwarz give $|\gamma_{Z,k}(j)|\leq C\{\mathcal R_\xi(|j|)^\tau+\mathcal R_\xi(|j-k|)^\tau+\mathcal R_\xi(|j+k|)^\tau\}$ uniformly over $k\geq1$ and $j\in\mathbb Z$, where $\tau\geq1$ lower-bounds the Hermite ranks of the centered distance transforms after Gaussian standardization. Consequently, the shifted-sum argument in Lemma~\ref{lem_shift}  verifies Assumption~\ref{ass_dt} if $\varpi(h)=\mathcal R_\xi(h)^\tau$ satisfies \eqref{eq_shift}.
\end{remark}

\section{Simulation study}\label{sec_sim}
We use Examples~\ref{ex_gaussian}--\ref{ex_wass} to assess finite-sample performance across real-valued, matrix-valued, and distributional series. Their trace-memory parameters are characterized in Proposition~\ref{prop::example memory parameter}. They also satisfy the primitive condition underlying our consistency theory (Assumption~\ref{ass_dt}); see Proposition~\ref{prop::example assumption} and the Gaussian-driven verification in Section~\ref{app_dt} of the supplement.
\subsection{Design and implementation}
In each replication, we generate a stationary Gaussian autoregressive fractionally integrated moving average (ARFIMA) process $\{G_t\}$ of order $(1,\mathrm d,1)$, drawing the AR and MA coefficients $\varphi$ and $\theta_{\mathrm{MA}}$ independently from $\operatorname{Unif}(-.25,.25)$. We compute its autocovariances by inverse Fourier transformation of the spectrum, normalize them by the lag-zero value, and simulate $\{G_t\}$ with $\operatorname{Var}(G_t)=1$ using circulant embedding. At zero frequency, the short-run ARMA multiplier $|1+\theta_{\mathrm{MA}}|^2/|1-\varphi|^2$ lies between $(.75/1.25)^2$ and $(1.25/.75)^2$, ruling out near cancellation of the memory component. We use 1,000 Monte Carlo replications with $n\in\{250,500,1000,1500,2000\}$ and $\mathrm d\in\{0,.1,.2,.3,.4\}$. For each replication and value of $\mathrm d$, we generate one path of length $2000$ and use its first $n$ observations for each sample size. We consider four settings: (i) the real-valued series in Example~\ref{ex_gaussian}, with $X_t=G_t$; (ii) the $3\times 3$ SPD matrix-valued series in Example~\ref{ex_psd}, with $M_0=I_3$ (the identity matrix), $u=(1,0,0)^\top$, $v=(0,1,0)^\top$, and $\vartheta=1/4$; (iii) the distributional location variation setting in Example~\ref{ex_wass}, with $\mu_t=G_t$; and (iv) the distributional location and scale variation setting in Example~\ref{ex_wass}, with $\mu_t=G_t$, baseline scale $\sigma=1$, and scale loading $\lambda=1/4$. Table~\ref{table::setting summary} lists the designs and abbreviations.

\begin{table}[!htb]
\centering
\renewcommand{\arraystretch}{1} 
\caption{\small Summary of the object-valued time series designs in simulation.}\label{table::setting summary}
\scalebox{0.8}{%
\begin{tabular}{@{}ll@{}}
\toprule
Design & Abbreviation \\
\midrule
Real-valued time series & Real \\
Matrix-valued time series & Matrix \\
Distributional time series with varying locations & Dist--Location \\
Distributional time series with varying locations and scales & Dist--Location\&Scale \\
\bottomrule
\end{tabular}
}
\end{table}
As shown in Proposition~\ref{prop::example memory parameter}, the driver and all four settings have the same trace-memory parameter $\mathrm d$ for $\mathrm d>0$. At $\mathrm d=0$, the randomized driver is a short-memory ARMA$(1,1)$ process. The boundary design therefore retains nontrivial short-run serial dependence rather than reducing the comparison to an independent sequence. 
The baseline tuning choices are $m=\max\{3,\lfloor n^{1/3}\rceil\}$, $q=2$, $a_{n,m}=\frac13(m/n)^{1/8}$, and $\mathcal S=\{1,2,4,8,16\}$, where $\lfloor x\rceil$ denotes nearest-integer rounding. Corollary~\ref{cor_orders} permits $0<\eta<1/4$ for $m\asymp n^{1/3}$, so $\eta=1/8$ is the midpoint of this admissible interval. We use $c_a=1/3$ as a finite-sample baseline without claiming optimality. Note that our analysis is not limited to this baseline: we also consider several prespecified tuning configurations and evaluate the resulting averaged estimators; see below.

All raw and corrected log-ratio and log-slope outputs, including those of the map used by FP, are projected onto $\mathcal I_{\mathrm O}=[-.25,.75]$. This common bounded interval contains the model range $[0,1/2)$ in its interior, so projection is asymptotically inactive for consistent unprojected estimators. For the block correction in \eqref{eq_corr}, each trial value is projected onto $\mathcal I_{\mathrm P}=[-.10,.45]$, and correction is applied only when the projected pilot lies in $(.05,.45]$. Including zero in the pilot interval preserves pilot consistency under short memory. The positive cutoff reflects the long-memory bias expansion underlying the correction, while the upper bound avoids degeneration of the block regressor near $1/2$; see Remark~\ref{rem_pilot_activation}. A trial value above $.45$ is projected to $.45$ and remains eligible for correction. The FP neighborhood $\mathcal N_{n,\esttype}$ is contained in the same output interval $\mathcal I_{\mathrm O}$, allowing negative FP estimates. These projections are part of the estimators used to compute Monte Carlo means and root mean squared errors (RMSEs).

For both the log-ratio and log-slope constructions, the main tables report the raw estimator Raw ($\widehat{\mathrm d}_{\esttype}^{(0)}$), the two-step bias-corrected estimator BC ($\widehat{\mathrm d}_{\esttype}^{(2)}$), and the localized minimum-discrepancy estimator FP ($\widehat{\mathrm d}_{\esttype,\mathrm{FP}}$). The label $\mathrm{BD}$ denotes a generic block-difference-adjusted scale. BC uses the first corrected estimate as its pilot, and FP minimizes the fixed-point discrepancy $\mathcal Q_{n,\esttype}$ over the last-update neighborhood of BC; see \eqref{eq_fp_est}.
\par
To address tuning uncertainty without selecting a specification based on the results, we also report equal-weight averages over the prespecified set $\mathfrak T=\{(1,2,1/8),(.8,2,1/8),(1.25,2,1/8),(1,1.5,1/8),$  $(1,2.5,1/8),(1,2,1/12),(1,2,1/6)\}$. Each triple $\mathbf t=(c_m,q,\eta)$ specifies $m=\lfloor c_mn^{1/3}\rceil$ and $a_{n,m}=\frac13(m/n)^\eta$. For $M\in\{\mathrm{Raw},\mathrm{BC},\mathrm{FP}\}$ and $\esttype\in\{\mathrm R,\mathrm S\}$, let $\widehat{\mathrm d}_{\esttype,M}(\mathbf t)$ denote the corresponding estimator evaluated under $\mathbf t$, and define $\widehat{\mathrm d}_{\esttype,M}^{\mathrm{avg}}=|\mathfrak T|^{-1}\sum_{\mathbf t\in\mathfrak T}\widehat{\mathrm d}_{\esttype,M}(\mathbf t)$. The set varies one tuning coordinate at a time around the baseline. Since $\mathfrak T$ is finite, consistency of each component estimator implies consistency of the average.
The tables report the baseline and tuning-average versions of Raw, BC, and FP.

\subsection{Monte Carlo results}

The simulation results assess the finite-sample bias of the estimators, the reduction achieved by BC and FP, and sensitivity to tuning choices.

Table~\ref{tab_mc_aggregate} compares the estimators over all 25 $(n,\mathrm d)$ cells in each design and shows substantial reductions in mean absolute bias (MAB) after correction. For the Real design, the baseline log-ratio estimator's MAB decreases from $0.0385$ for Raw to $0.0103$ for BC and $0.0080$ for FP, corresponding to reductions of approximately $73\%$ and $79\%$, respectively. For the Matrix design, MAB decreases from $0.0411$ for Raw to $0.0105$ for BC and $0.0077$ for FP, corresponding to reductions of approximately $74\%$ and $81\%$, respectively. Similar reductions occur in the two distributional designs, Dist--Location and Dist--Location\&Scale, and for the log-slope estimators. The log-ratio construction has slightly lower RMSEs than the log-slope construction, and FP has modestly higher RMSEs than BC. Across all four designs, correction substantially reduces bias with little change in aggregate RMSE.
\begin{table}[!htb]
\centering
\renewcommand{\arraystretch}{0.6} 
\caption{\small Aggregate Monte Carlo performance over the 25 $(n,\mathrm d)$ cells in each design, based on 1,000 replications per cell; MAB denotes mean absolute bias.}
\label{tab_mc_aggregate}
\small
\setstretch{1.0}
\setlength{\tabcolsep}{17pt}
\scalebox{0.8}{%
\begin{tabular}{@{}lllrrrr@{}}
\toprule
& & & \multicolumn{2}{c}{Baseline} & \multicolumn{2}{c}{Tuning average} \\
\cmidrule(lr){4-5}\cmidrule(lr){6-7}
Design & Slope & Method & MAB & RMSE & MAB & RMSE \\ 
\midrule
Real & Ratio & Raw & .0385 & .0912 & .0385 & .0894 \\
 &  & BC & .0103 & .0914 & .0102 & .0895 \\
 &  & FP & .0080 & .0940 & .0079 & .0921 \\
\addlinespace
 & Log slope & Raw & .0385 & .0922 & .0385 & .0900 \\
 &  & BC & .0102 & .0924 & .0101 & .0902 \\
 &  & FP & .0079 & .0950 & .0078 & .0927 \\
\midrule\addlinespace
Matrix & Ratio & Raw & .0411 & .0910 & .0410 & .0895 \\
 &  & BC & .0105 & .0897 & .0106 & .0882 \\
 &  & FP & .0077 & .0921 & .0077 & .0906 \\
\addlinespace
 & Log slope & Raw & .0411 & .0919 & .0411 & .0902 \\
 &  & BC & .0105 & .0907 & .0105 & .0889 \\
 &  & FP & .0077 & .0930 & .0077 & .0912 \\
\midrule\addlinespace
Dist--Location & Ratio & Raw & .0385 & .0912 & .0385 & .0894 \\
 &  & BC & .0103 & .0914 & .0102 & .0895 \\
 &  & FP & .0080 & .0940 & .0079 & .0921 \\
\addlinespace
 & Log slope & Raw & .0385 & .0922 & .0385 & .0900 \\
 &  & BC & .0102 & .0924 & .0101 & .0902 \\
 &  & FP & .0079 & .0950 & .0078 & .0927 \\
 \midrule\addlinespace
Dist--Location\&Scale & Ratio & Raw & .0387 & .0912 & .0386 & .0894 \\
 &  & BC & .0103 & .0913 & .0102 & .0894 \\
 &  & FP & .0080 & .0939 & .0079 & .0920 \\
\addlinespace
 & Log slope & Raw & .0387 & .0922 & .0386 & .0901 \\
 &  & BC & .0102 & .0923 & .0101 & .0901 \\
 &  & FP & .0079 & .0949 & .0078 & .0927 \\
\bottomrule
\end{tabular}
}
\end{table}

Averaging over the prespecified tuning set reduces aggregate RMSE by $0.0015$ to $0.0023$ across all designs and methods in Table~\ref{tab_mc_aggregate}, while MAB changes by at most $0.0001$ in absolute value. The improvement is therefore mainly in RMSE, with little additional bias reduction. 

We next report results for the log-ratio estimators under the Matrix design. The corresponding log-slope results are given in Section~\ref{supp subsect::matrix} of the supplementary material. Results for the Real and the two distributional designs are provided in Sections~\ref{supp subsect::real} and~\ref{supp subsect::distribution}, respectively.

Table~\ref{tab_mc_psd_ratio} summarizes the finite-sample performance of the log-ratio estimators under the Matrix design. RMSEs decrease as the sample size $n$ increases. Under long memory, Raw exhibits increasingly pronounced downward bias as $\mathrm d$ increases for fixed $n$. BC and FP substantially reduce this bias at moderate and high memory levels. At some memory levels they leave a small upward bias: once the dominant negative common-centering effect is removed, the positive finite-bandwidth curvature in \eqref{eq_curve_2} becomes visible in finite samples. Both effects vanish under the rate conditions of Proposition~\ref{prop_rates}; see Section~\ref{supp-bias-behavior} of the supplement.

\begin{table}[!htb]
\centering
\small
\renewcommand{\arraystretch}{0.6} 
\caption{\small Monte Carlo results for the Matrix design using the $\log$-ratio estimators. Entries are means with RMSEs in parentheses, based on 1,000 replications.}
\label{tab_mc_psd_ratio}
\fontsize{7.3pt}{9.4pt}\selectfont
\setstretch{1.0}
\setlength{\tabcolsep}{16pt}
\scalebox{0.82}{%
\begin{tabular}{@{}crrrrrr@{}}
\toprule
& \multicolumn{3}{c}{Baseline} & \multicolumn{3}{c}{Tuning average} \\
\cmidrule(lr){2-4}\cmidrule(lr){5-7}
$\mathrm d$ & Raw & BC & FP & Raw & BC & FP \\
\midrule
\multicolumn{7}{l}{\hspace{-.16in}{$n=250$}} \\
.00 & -.021 (.127) & -.015 (.135) & -.015 (.135) & -.021 (.123) & -.014 (.131) & -.014 (.131) \\
.10 & .065 (.123) & .085 (.137) & .087 (.139) & .065 (.121) & .085 (.135) & .087 (.137) \\
.20 & .139 (.121) & .179 (.134) & .185 (.140) & .138 (.120) & .179 (.133) & .184 (.138) \\
.30 & .212 (.126) & .280 (.125) & .291 (.132) & .211 (.126) & .279 (.124) & .290 (.131) \\
.40 & .278 (.146) & .373 (.113) & .389 (.115) & .276 (.146) & .372 (.113) & .388 (.115) \\
\addlinespace
\multicolumn{7}{l}{\hspace{-.16in}{$n=500$}} \\
.00 & -.013 (.112) & -.010 (.116) & -.010 (.116) & -.013 (.107) & -.010 (.111) & -.010 (.112) \\
.10 & .079 (.095) & .093 (.105) & .094 (.106) & .079 (.093) & .093 (.102) & .093 (.103) \\
.20 & .158 (.094) & .191 (.103) & .193 (.107) & .158 (.093) & .191 (.102) & .194 (.105) \\
.30 & .236 (.095) & .294 (.094) & .303 (.102) & .236 (.094) & .294 (.093) & .304 (.101) \\
.40 & .297 (.121) & .376 (.091) & .390 (.093) & .298 (.120) & .377 (.090) & .391 (.092) \\
\addlinespace
\multicolumn{7}{l}{\hspace{-.16in}{$n=1000$}} \\
.00 & -.013 (.099) & -.011 (.101) & -.011 (.101) & -.013 (.096) & -.011 (.098) & -.011 (.098) \\
.10 & .089 (.077) & .099 (.083) & .099 (.083) & .088 (.076) & .098 (.082) & .099 (.082) \\
.20 & .174 (.068) & .200 (.075) & .201 (.077) & .174 (.067) & .200 (.075) & .201 (.076) \\
.30 & .253 (.074) & .299 (.075) & .305 (.081) & .253 (.073) & .299 (.074) & .305 (.081) \\
.40 & .316 (.098) & .383 (.072) & .395 (.074) & .315 (.098) & .383 (.072) & .395 (.074) \\
\addlinespace
\multicolumn{7}{l}{\hspace{-.16in}{$n=1500$}} \\
.00 & -.010 (.088) & -.009 (.090) & -.009 (.090) & -.010 (.086) & -.008 (.087) & -.008 (.087) \\
.10 & .096 (.066) & .105 (.071) & .105 (.071) & .096 (.065) & .105 (.070) & .105 (.070) \\
.20 & .186 (.055) & .208 (.063) & .209 (.065) & .186 (.054) & .208 (.063) & .209 (.064) \\
.30 & .263 (.061) & .303 (.063) & .309 (.069) & .263 (.060) & .303 (.063) & .308 (.068) \\
.40 & .325 (.086) & .387 (.064) & .397 (.067) & .325 (.086) & .387 (.064) & .397 (.067) \\
\addlinespace
\multicolumn{7}{l}{\hspace{-.14in}{$n=2000$}} \\
.00 & -.010 (.087) & -.010 (.088) & -.010 (.088) & -.010 (.084) & -.009 (.085) & -.009 (.085) \\
.10 & .098 (.061) & .106 (.066) & .106 (.066) & .098 (.059) & .106 (.065) & .106 (.065) \\
.20 & .187 (.052) & .208 (.060) & .209 (.061) & .188 (.051) & .209 (.059) & .209 (.060) \\
.30 & .263 (.058) & .301 (.058) & .305 (.062) & .264 (.057) & .301 (.057) & .305 (.062) \\
.40 & .328 (.083) & .385 (.059) & .395 (.062) & .328 (.082) & .386 (.059) & .395 (.062) \\
\bottomrule
\end{tabular}
}
\end{table}

The $\mathrm d=0$ rows in Table~\ref{tab_mc_psd_ratio} show that the methods accommodate short memory without imposing nonnegative estimates. For the baseline log-ratio estimators, the means of Raw, BC, and FP are $-.021$, $-.015$, and $-.015$ at $n=250$, respectively, and approximately $-.010$ for all three at $n=2000$. BC and FP have nearly identical means and RMSEs. Their close agreement is consistent with the construction in \eqref{eq_fp_set}: when correction is inactive in both updates, the neighborhood collapses and FP equals BC.

\section{Empirical data analysis}\label{sect::real data}

We apply the proposed long memory estimation approach to two real datasets, the foreign exchange intraday log return distributional time series in Section~\ref{sect::real data return}, and the U.S. monthly energy generation compositional time series in Section~\ref{sect::real data energy}.

In addition to the full-sample estimates, we assess the empirical stability of the proposed estimators using an overlapping-subsample analysis. For a time series $\{X_t\}_{t=1}^n$, we first set the nominal subsample length to $n_0=\lfloor 0.7n\rfloor$. Because the Bartlett bandwidth is integer-valued, directly using $n_0$ may result in a different bandwidth from that used for the full-sample estimator, thereby introducing an additional finite-sample tuning effect. Thus we choose $n_{\rm sub}$ as the smallest integer $b\geq n_0$ such that the bandwidth computed from a sample of size $b$ equals the full-sample bandwidth. We then consider all $n-n_{\rm sub}+1$ overlapping windows $\{X_j,\ldots,X_{j+n_{\rm sub}-1}\}$, for $j=1,\ldots,n-n_{\rm sub}+1$, and re-estimate the memory parameter separately within each window. Apart from the bandwidth, which is matched to its full-sample value by construction, all sample-size-dependent quantities entering the estimation procedure are recalculated using the selected subsample size $n_{\rm sub}$.

For each estimator, we report the estimate obtained from the full time series together with the mean, standard deviation, and empirical 2.5\% and 97.5\% quantiles of the estimates across the overlapping subsamples, used as descriptive measures of finite-sample stability. We read these as descriptive measures of stability across subsamples rather than as standard errors or formal confidence intervals.

\subsection{Foreign exchange intraday return distributional time series}\label{sect::real data return}

Over recent decades, an extensive literature has focused on modeling financial asset return distributions; see, e.g., \cite{weigend2000predicting,hallam2014forecasting}. Compared with summaries based only on the first two moments, daily return distributions retain substantially richer information about the underlying market behavior, including changes in location, dispersion, skewness, and tail behavior. A distributional representation therefore provides a more comprehensive description of the temporal evolution of financial returns.

We use high-frequency foreign exchange data available from \url{https://www.kaggle.com/datasets/arashnic/stock-data-intraday-minute-bar/data}. Specifically, we consider 5-minute closing prices for the EUR/USD currency pair and compute the corresponding intraday log returns during 2018. To improve comparability across trading days and reduce the influence of shortened trading sessions or data outages, we apply a frequency-based filtering rule. A trading day is retained only if its number of available 5-minute observations is at least 50\% of the modal daily count. The intraday log returns within each retained trading day are then represented by their empirical distribution. This yields a distributional time series of length $n=259$, with each observation corresponding to the empirical distribution of intraday returns on one trading day. Figure~\ref{fig:combine}(a) visualizes the distributional time series, and indicates that the time series appears to be stationary with constant mean. We then apply the proposed long memory estimators to this time series.

\begin{figure}[htbp]
    \centering

    \includegraphics[height=0.30\textwidth]{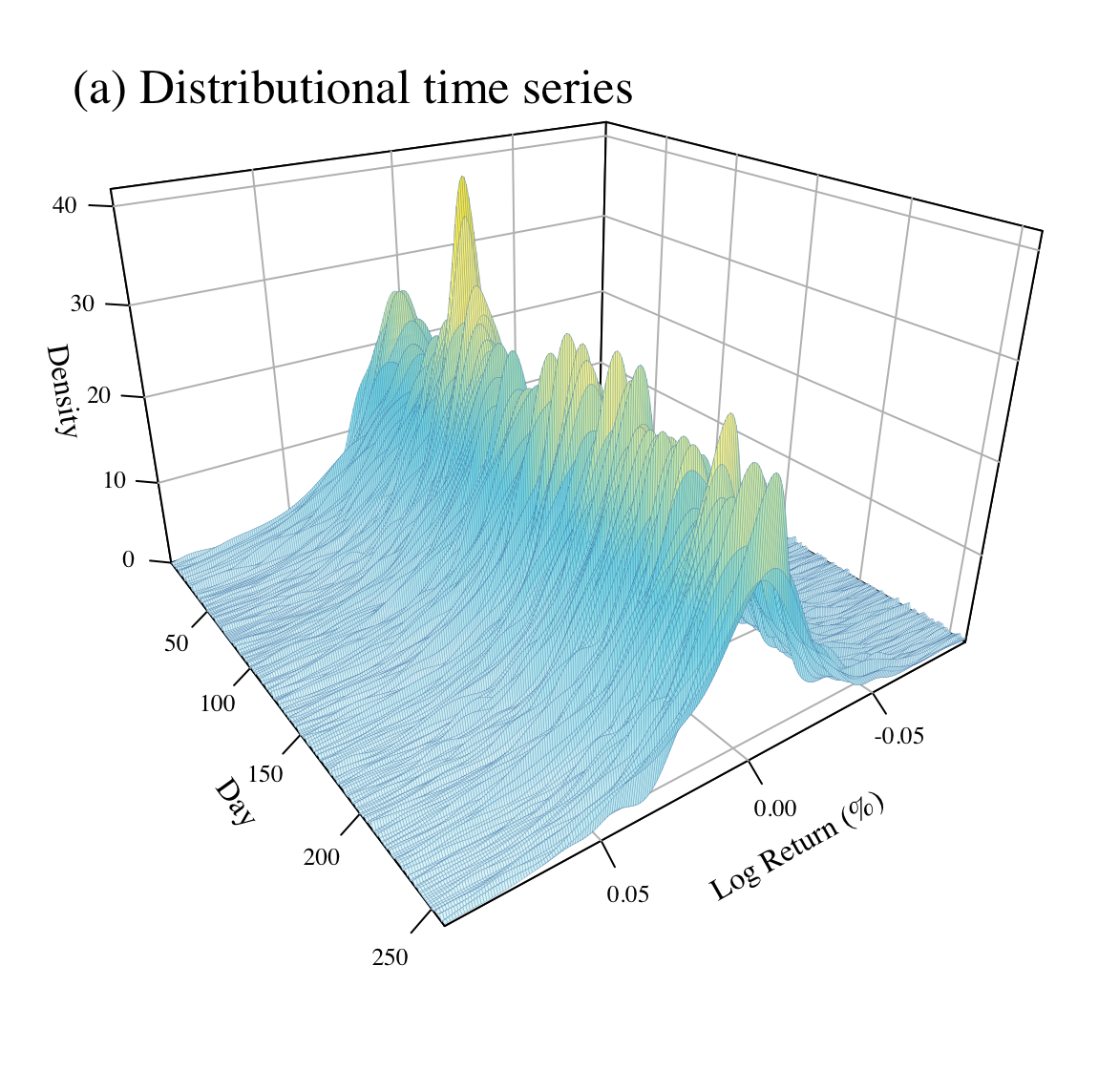}
    \hspace{0.02\textwidth}
    \includegraphics[height=0.30\textwidth]{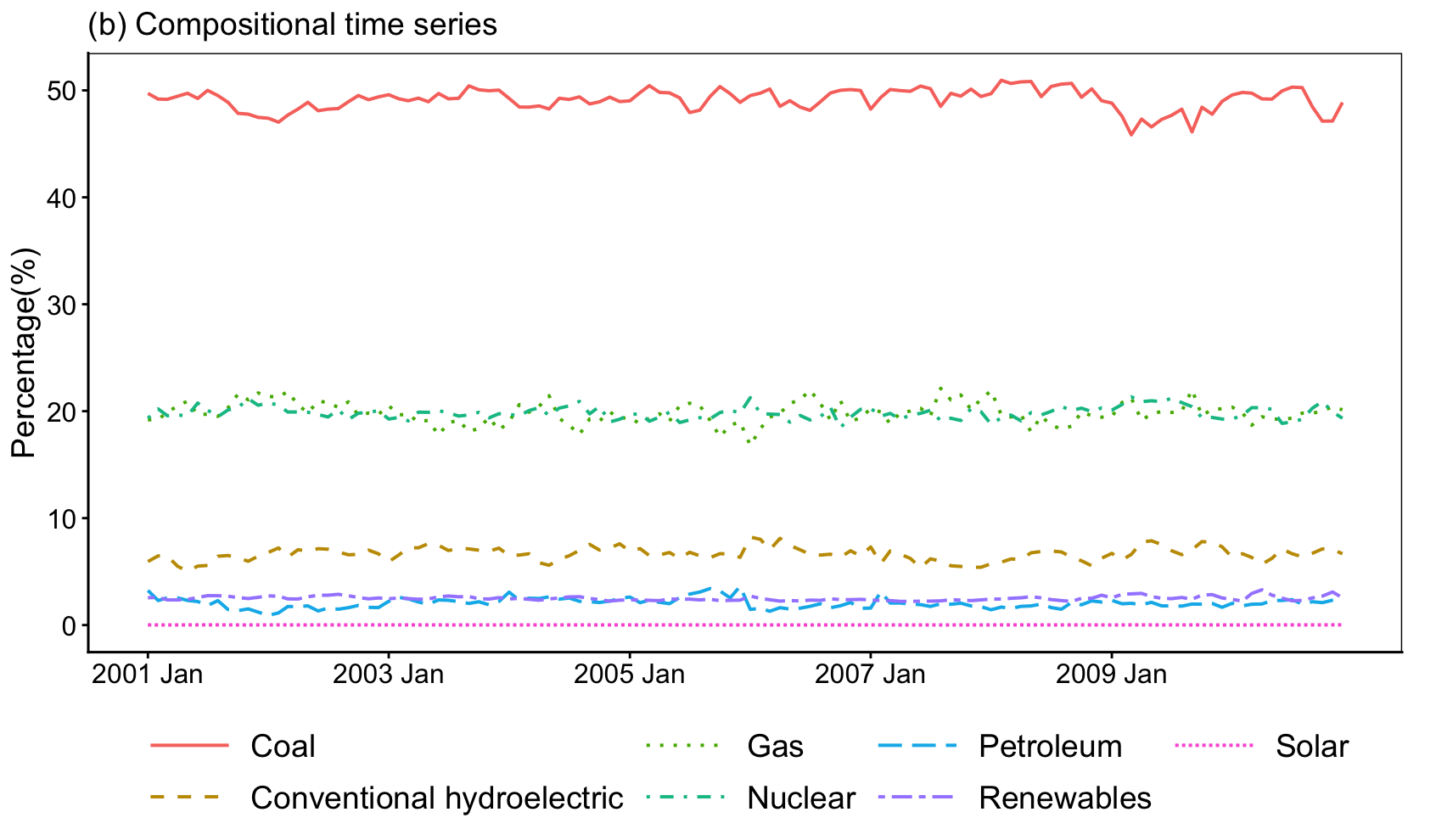}

    \caption{(a) Intraday $\log$ return distributional time series for the EUR/USD foreign exchange in~2018. (b) De-trended and de-seasonalized monthly U.S. electricity-generation compositions.}
    \label{fig:combine}
\end{figure}

For this application, the subsample length is $n_{\rm sub}=181$ which gives 79 overlapping subsamples. Table~\ref{tab:return_subsample} reports the full-sample memory-parameter estimates together with the corresponding subsample stability summaries.

Table~\ref{tab:return_subsample} shows similar results for the log-ratio and log-slope estimators, with full-sample estimates around $0.24$ for Raw and $0.31$--$0.33$ for BC and FP. The corresponding subsample means are around $0.21$ and $0.28$--$0.29$, respectively, and all 2.5\% subsample quantiles exceed zero. These descriptive stability results support a long-memory interpretation of the daily return distributional series.

\begin{table}[!htb]
\centering
\small
\renewcommand{\arraystretch}{0.7}
\caption{\small Memory-parameter estimates and overlapping-subsample stability measures for the intraday return distributional time series. Entries in the last four columns summarize the estimates obtained from the 79 overlapping subsamples of length $n_{\rm sub}=181$.}
\label{tab:return_subsample}
\fontsize{7.3pt}{9.4pt}\selectfont
\setstretch{1.0}
\setlength{\tabcolsep}{16pt}
\scalebox{0.93}{%
\begin{tabular}{@{}llrrrrr@{}}
\toprule
Method & Estimator & \shortstack{Full sample} & \shortstack{Subsample\\mean} & \shortstack{Subsample\\SD} & \shortstack{2.5\%\\quantile} & \shortstack{97.5\%\\quantile} \\
\midrule
Raw & Log-ratio & .244 & .215 & .045 & .139 & .292 \\
Raw & Log-slope & .241 & .211 & .046 & .133 & .290 \\
\addlinespace
BC & Log-ratio & .315 & .282 & .062 & .174 & .390 \\
BC & Log-slope & .312 & .278 & .063 & .166 & .388 \\
\addlinespace
FP & Log-ratio & .328 & .293 & .071 & .175 & .420 \\
FP & Log-slope & .324 & .289 & .072 & .167 & .418 \\
\bottomrule
\end{tabular}
}
\end{table}

\subsection{U.S. energy generation compositional time series}\label{sect::real data energy}

We next investigate the temporal dependence structure of monthly U.S. electricity generation, using raw data obtained from \url{https://www.eia.gov/electricity/data/browser/}. Following \cite{xu2025quantifying}, the original electricity-generation measurements for seven fuel types, such as coal, petroleum, gas, nuclear, conventional hydroelectric, renewables, and solar, are transformed into a sequence of compositional vectors, whose components represent the relative shares of different fuel types in total net electricity generation. The data consist of $n=120$ monthly observations from January 2001 through December 2010, with observations lying in the six-dimensional simplex $\Delta^6=\{(\delta_1,\ldots,\delta_7)^\top\in\mathbb R^7:\sum_{\ell=1}^7\delta_\ell=1,\;\delta_\ell\geq0\}$. For two compositions $x,y\in\Delta^6$, we use the Fisher--Rao distance $\dist(x,y)=\arccos\left(\sum_{i=1}^7\sqrt{x_i y_i}\right)$, where $x_i$ and $y_i$ denote the $i$th components of $x$ and $y$, respectively. 

Figure~\ref{fig::US energy} in Section~\ref{supp sect::additional real data} of the supplement displays the raw compositional data, with each line representing the percentage of the corresponding energy source. The visible trend and periodic components suggest that the series is nonstationary. Before estimating the memory parameter, we therefore apply the spherical trend-periodicity decomposition of \cite{xu2026spherically} to remove the smooth trend and seasonal structure. The resulting de-trended and de-seasonalized residual process, shown in Figure~\ref{fig:combine}(b), appears approximately stationary.

We apply the memory-parameter estimators to this residual process. For this application, the subsample length is 92, giving 29 overlapping subsamples. Table~\ref{tab:energy_subsample} reports the full-sample estimates and the subsample stability summaries. A pattern similar to that in Table~\ref{tab:return_subsample} appears in Table~\ref{tab:energy_subsample}, suggesting that the time series has long memory. Notably, subsample variation is smaller than in Table~\ref{tab:return_subsample}, indicating that the estimated memory parameter is more stable across different portions of the 2001--2010 observation period.
\begin{table}[!htb]
\centering
\small
\renewcommand{\arraystretch}{0.7}
\caption{\small Memory-parameter estimates and overlapping-subsample stability measures for the U.S. electricity-generation compositional time series. Entries in the last four columns summarize the estimates obtained from the 29 overlapping subsamples of length $n_{\rm sub}=92$.}
\label{tab:energy_subsample}
\fontsize{7.3pt}{9.4pt}\selectfont
\setstretch{1.0}
\setlength{\tabcolsep}{16pt}
\scalebox{0.92}{%
\begin{tabular}{@{}llrrrrr@{}}
\toprule
Method & Estimator & \shortstack{Full sample} & \shortstack{Subsample\\mean} & \shortstack{Subsample\\SD} & \shortstack{2.5\%\\quantile} & \shortstack{97.5\%\\quantile} \\
\midrule
Raw & Log-ratio & .206 & .196 & .030 & .160 & .239 \\
Raw & Log-slope & .208 & .197 & .029 & .161 & .239 \\
\addlinespace
BC & Log-ratio & .318 & .329 & .037 & .280 & .383 \\
BC & Log-slope & .320 & .330 & .036 & .281 & .383 \\
\addlinespace
FP & Log-ratio & .350 & .362 & .035 & .310 & .401 \\
FP & Log-slope & .350 & .363 & .035 & .311 & .402 \\
\bottomrule
\end{tabular}
}
\end{table}

\section{Conclusion}\label{sec:conclu}

We establish a foundational framework for defining and estimating long-range dependence in non-Euclidean object-valued time series. We develop log-ratio and multi-bandwidth log-slope estimators via a novel iterated block-difference adjustment and a localized self-consistency refinement. Extensive numerical studies confirm that our bias-corrected estimators substantially reduce finite-sample distortions while remaining robust to short-memory processes. Applications to foreign exchange intraday log returns and U.S. energy generation compositions illustrate the long memory phenomena in distributional time series and compositional time series.

There are several ways in which the current methodology may be extended, and we mention two: (i) Develop a hypothesis test for examining whether an object-valued time series exhibits long memory. (ii) Distinguish genuine long memory from persistence induced by structural breaks \citep[see, e.g.,][]{BKX2024}. Level shifts are known to generate sample autocovariances resembling those of a long-memory process, and the isometric embedding of Section~\ref{sect::memory def} offers a route to formalizing a level shift for object-valued data, as a change in the Fr\'{e}chet mean of the sequence.

\newpage 
\appendix 

\section*{Supplementary Material}

\section{Proofs}\label{supp_sec_proofs}
This section provides proofs of the main results in Sections~\ref{sect::memory def} and \ref{sect::estimation} of the main paper. We first record the following standard discrete form of Karamata's theorem, which is used repeatedly below.
\begin{lemma}[Karamata summation for regularly varying sequences]\label{lem_sum}
Let $\beta\in\mathbb R$ and let $\{a_k\}_{k\geq1}$ satisfy $a_k\sim c k^\rho L(k)$, where $c\neq0$, $L$ is positive and slowly varying, and $\beta+\rho>-1$. Then $\sum_{k=1}^n k^\beta a_k\sim c(\beta+\rho+1)^{-1}n^{\beta+\rho+1}L(n)$.
\end{lemma}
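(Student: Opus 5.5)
The plan is the standard discrete Karamata argument: compare the sum with an integral of a regularly varying function, and use uniform convergence of $L(xn)/L(n)$ on compacts plus Potter bounds to control the tails. Since $c\neq0$, I would first reduce to $c=1$ by dividing through. Write $a_k=k^\rho L(k)(1+\epsilon_k)$ with $\epsilon_k\to0$. Set $\alpha=\beta+\rho>-1$; the target is $\sum_{k=1}^n k^\alpha L(k)(1+\epsilon_k)\sim(\alpha+1)^{-1}n^{\alpha+1}L(n)$.

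\textbf{Step 1: remove the $\epsilon_k$.} I would show the $\epsilon_k$ contribution is negligible, which follows once the main sum $T_n=\sum_{k\le n}k^\alpha L(k)$ is shown to diverge at rate $n^{\alpha+1}L(n)$. Given $\eta>0$, pick $K$ with $|\epsilon_k|<\eta$ for $k>K$. Then
\[
\Bigl|\sum_{k\le n}k^\alpha L(k)\epsilon_k\Bigr|\le O_K(1)+\eta\,T_n .
\]
Because $\alpha+1>0$, we have $n^{\alpha+1}L(n)\to\infty$ (slowly varying functions are $o(n^{\delta})$ and $n^{-\delta}$-dominated for every $\delta>0$). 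So the $O_K(1)$ term is negligible relative to $T_n$.

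\textbf{Step 2: the main sum $T_n$.} Write
\[
\frac{T_n}{n^{\alpha+1}L(n)}=\int_0^1 f_n(x)\,dx,\qquad f_n(x)=\Bigl(\tfrac{\lceil xn\rceil}{n}\Bigr)^{\alpha}\frac{L(\lceil xn\rceil)}{L(n)},
\]
which is a Riemann-type representation with step width $1/n$. Pointwise for $x\in(0,1]$, $f_n(x)\to x^\alpha$. This uses the uniform convergence theorem for slowly varying functions, applied along $\lceil xn\rceil/n\to x$. The limit integrates to $(\alpha+1)^{-1}$.

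\textbf{Step 3: domination, the main obstacle.} Dominated convergence needs a bound near $x=0$, and $L$ need not be locally bounded at the integer arguments. I would fix $\delta>0$ small with $\alpha-\delta>-1$ and use the Potter bounds: for $k\ge k_0$ and $n\ge k_0$,
\[
\frac{L(k)}{L(n)}\le 2\max\{(k/n)^{\delta},(k/n)^{-\delta}\}.
\]
On $k\ge k_0$ this gives $f_n(x)\le 2\,(\lceil xn\rceil/n)^{\alpha-\delta}$. This is integrably dominated, since $\lceil xn\rceil/n\ge x$ handles $\alpha-\delta<0$, and the case $\alpha-\delta\ge0$ is bounded. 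The finitely many terms $k<k_0$ contribute $O(1)/\{n^{\alpha+1}L(n)\}\to0$.

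Combining Steps 2 and 3 gives $T_n\sim(\alpha+1)^{-1}n^{\alpha+1}L(n)$. Step 1 then yields the claim with the factor $c$ restored.
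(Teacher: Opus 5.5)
Your proof is correct, and it takes a different, self-contained route. The paper's proof is a one-line citation: it notes that $k^\beta a_k$ is regularly varying with index $\beta+\rho>-1$ and invokes Karamata's theorem in \citet[Theorem~1.5.11]{bingham1987regular}.

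You instead prove the discrete statement directly, in four moves:
\begin{enumerate}
\item You factor out $c$, which also handles $c<0$.
\item You absorb the asymptotic equivalence into $\epsilon_k\to0$ and show this correction is negligible once $T_n\to\infty$.
\item You write $T_n/\{n^{\alpha+1}L(n)\}$ exactly as $\int_0^1 f_n(x)\,dx$ through the step function $\lceil xn\rceil$, and get pointwise convergence to $x^\alpha$ from the uniform convergence theorem.
\item You obtain domination near $x=0$ from Potter's bound with $\delta<\alpha+1$, treating the finitely many indices $k<k_0$ separately.
\end{enumerate}
Each step holds. In particular the bound by $2x^{\alpha-\delta}$ (or by $2$ when $\alpha-\delta\geq0$) follows from $x\leq\lceil xn\rceil/n\leq1$.

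The citation buys brevity. However, Theorem~1.5.11 is stated for positive regularly varying functions of a continuous variable. Applying it to a sequence with $a_k\sim ck^\rho L(k)$ and possibly negative $c$ tacitly requires dividing by $c$ and checking that a step-function extension is regularly varying, which again uses the uniform convergence theorem. Your $\epsilon_k$ reduction and Riemann-sum representation make exactly these passages explicit. They also show clearly where $\beta+\rho>-1$ is used: integrability of the Potter envelope at the origin, and divergence of $n^{\alpha+1}L(n)$, which makes the initial terms and the $\epsilon_k$ correction negligible.

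The cost is length and reliance on two standard facts, the uniform convergence theorem and Potter bounds. Both require the measurability of $L$, which the paper builds into its definition of slow variation.
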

\begin{proof}[Proof of Lemma~\ref{lem_sum}]
Apply the discrete Karamata theorem to the regularly varying sequence $k^\beta a_k$, whose index is $\beta+\rho>-1$; see \citet[Theorem~1.5.11]{bingham1987regular}.
\end{proof}

\subsection{Proof of results in Section~\ref{sect::memory def}}

\begin{proof}[\textbf{Proof of Lemma~\ref{lem_embed}}]
By Schoenberg's embedding theorem \citep{schoenberg1937certain,schoenberg1938metric}, since $(\Omega,\dist)$ is of negative type, there exist a Hilbert space $\mathcal H$ and an isometric embedding $\phi:(\Omega,\dist^{1/2})\to\mathcal H$ such that
\begin{equation*}
    \dist(x,x')=\|\phi(x)-\phi(x')\|_{\mathcal H}^2;
\end{equation*}
see \citep[p.~3292]{lyons2013distance}. Proposition~3.5 of \citet{lyons2013distance} gives the following representation of the $\nu$-centered distance function $\dist_\nu=-K_\nu$:
\begin{equation*}
    \dist_\nu(x,x')=-2\left\langle \phi(x)-m_\phi(\nu),\phi(x')-m_\phi(\nu)\right\rangle_{\mathcal H}.
\end{equation*}
Hence 
\[
K_\nu(x,x')=2\langle\phi(x)-m_\phi(\nu),\phi(x')-m_\phi(\nu)\rangle_{\mathcal H}
\]
follows. In particular, for arbitrary $x_1,\ldots,x_m\in\Omega$ and $r_1,\ldots,r_m\in\mathbb R$,
\begin{align*}
    \sum_{i=1}^m\sum_{j=1}^m r_i r_j K_\nu(x_i,x_j)
    &=\sum_{i=1}^m\sum_{j=1}^m r_i r_j\langle\Phi_\nu(x_i),\Phi_\nu(x_j)\rangle_{\mathcal H}\\
    &=\left\langle \sum_{i=1}^m r_i\Phi_\nu(x_i),\sum_{j=1}^m r_j\Phi_\nu(x_j)\right\rangle_{\mathcal H}=\left\|\sum_{i=1}^m r_i\Phi_\nu(x_i)\right\|_{\mathcal H}^2\geq0.
\end{align*}
Thus $K_\nu$ is positive semidefinite. The representation that $K_\nu(x,x')=\langle\Phi_\nu(x),\Phi_\nu(x')\rangle_{\mathcal H}$ follows directly from the definition of $\Phi_\nu$.
\end{proof}

\begin{proof}[\textbf{Proof of Proposition~\ref{prop_cv}}]
We first verify that $\Gamma_k$ is well defined. For $h_1,h_2\in\mathcal H$, recall that the rank-one operator $h_1\otimes h_2$ is given by $(h_1\otimes h_2)(h_3)=\langle h_1,h_3\rangle_{\mathcal H}h_2$ for $h_3\in\mathcal H$. By the properties of the trace norm $\|\cdot\|_1$, $\mathbb E\|Y_t\otimes Y_{t-k}\|_1\leq\mathbb E\{\|Y_t\|_{\mathcal H}\|Y_{t-k}\|_{\mathcal H}\}\leq(\mathbb E\|Y_t\|_{\mathcal H}^2)^{1/2}(\mathbb E\|Y_{t-k}\|_{\mathcal H}^2)^{1/2}<\infty$, and hence $\Gamma_k=\mathbb E(Y_t\otimes Y_{t-k})$ is a trace-class operator. For a rank-one operator,
\begin{equation*}
 \operatorname{tr}(h_1\otimes h_2)=\sum_{\ell\ge1}\langle (h_1\otimes h_2)e_\ell,e_\ell\rangle_{\mathcal H}=\sum_{\ell\ge1}\langle h_1,e_\ell\rangle_{\mathcal H}\langle h_2,e_\ell\rangle_{\mathcal H}=\langle h_1,h_2\rangle_{\mathcal H},
\end{equation*}
where $\{e_\ell\}_{\ell\ge1}$ is any orthonormal basis of $\mathcal H$. Therefore,
\begin{align*}
    \operatorname{tr}(\Gamma_k)=\operatorname{tr}\{\mathbb E(Y_t\otimes Y_{t-k})\}=\mathbb E\{\operatorname{tr}(Y_t\otimes Y_{t-k})\}=\mathbb E\langle Y_{t-k},Y_t\rangle_{\mathcal H}.
\end{align*}
Note that $K_\nu(x,x')=-\dist_\nu(x,x')=\langle \Phi_\nu(x),\Phi_\nu(x')\rangle_{\mathcal H}$. Putting $x=X_t$ and $x'=X_{t-k}$ gives
\begin{equation*}
    \langle Y_t,Y_{t-k}\rangle_{\mathcal H}=-\dist_\nu(X_t,X_{t-k}).
\end{equation*}
Therefore,
\begin{equation*}
 -\mathbb E \dist_\nu(X_t,X_{t-k}) = \mathbb E\langle Y_t,Y_{t-k}\rangle_{\mathcal H} = \operatorname{tr}(\Gamma_k). 
\end{equation*} Writing $\dist_\nu^{(1)}(x)=\int_\Omega \dist(x,u)\,d\nu(u)$, the definition of $\dist_\nu$ gives
\begin{align*}
    \mathbb E \dist_\nu(X_t,X_{t-k})
    &=\mathbb E \dist(X_t,X_{t-k})-\mathbb E \dist_\nu^{(1)}(X_t)-\mathbb E \dist_\nu^{(1)}(X_{t-k})+D_\nu.
\end{align*}
Since $X_t$ and $X_{t-k}$ both have marginal distribution $\nu$ under stationarity, $\mathbb E \dist_\nu^{(1)}(X_t)=\mathbb E \dist_\nu^{(1)}(X_{t-k})=D_\nu$. Hence
\begin{equation*}
    \mathbb E \dist_\nu(X_t,X_{t-k})=\mathbb E \dist(X_t,X_{t-k})-D_\nu,
\end{equation*} 
and consequently
\begin{equation*}
   \operatorname{tr}(\Gamma_k) =D_\nu-\mathbb E \dist(X_t,X_{t-k}),
\end{equation*}
which proves \eqref{eq_c}. This completes the proof.
\end{proof}

\begin{proof}[\textbf{Proof of Proposition~\ref{prop_common}}]
Put $a_k=k^{2\mathrm d_0-1}L(k)$. The trace-norm assumption implies $\|R_k\|_{\mathcal S_2}=o(|a_k|)$. Hence
\begin{equation*}
    \|\Gamma_k\|_{\mathcal S_2}=|a_k|\left\|A+\frac{R_k}{a_k}\right\|_{\mathcal S_2}=|a_k|\{\|A\|_{\mathcal S_2}+o(1)\}.
\end{equation*}
Moreover, $C(k)=\operatorname{tr}(\Gamma_k)=a_k\operatorname{tr}(A)+\operatorname{tr}(R_k)$ and $|\operatorname{tr}(R_k)|\leq\|R_k\|_1=o(|a_k|)$. Since $\operatorname{tr}(A)\neq0$, this gives $C(k)\sim a_k\operatorname{tr}(A)$. Remark~\ref{rem_trace} implies $\operatorname{tr}(A)>0$, so Definition~\ref{def_trace} gives $\mathrm d=\mathrm d_0$.
\end{proof}

\begin{proof}[\textbf{Proof of Proposition~\ref{prop::example memory parameter}}]
    We first consider Example~\ref{ex_gaussian} where $\Omega=\mathbb R$, $\dist(x,y)=|x-y|$, and $\{X_t\}$ is a standard stationary Gaussian process with $\gamma(k)\sim ck^{2\mathrm{d}_{\rm E}-1}L(k)$. Observe that 
    \begin{equation*} 
    D_\nu=\frac{2}{\sqrt\pi},\qquad \mathbb E|X_t-X_{t-k}|=\frac{2}{\sqrt\pi}\{1-\gamma(k)\}^{1/2}, 
    \end{equation*}
    thus we have $C(k)=2[1-\{1-\gamma(k)\}^{1/2}]/\sqrt\pi\sim\gamma(k)/\sqrt\pi$. By Definition~\ref{def_trace}, this gives $\mathrm d=\mathrm{d}_{\rm E}$.

For Example~\ref{ex_psd}, orthonormality of $u$ and $v$ gives
\begin{equation*}
\dist(X_t,X_{t-k})=\vartheta|G_t-G_{t-k}|\{2+\vartheta^2(G_t+G_{t-k})^2\}^{1/2}.
\end{equation*}
If $(G,H)$ is bivariate standard Gaussian with correlation $\rho$, then $G-H$ and $G+H$ are independent with variances $2(1-\rho)$ and $2(1+\rho)$, respectively. Thus, with $Z\sim N(0,1)$, the expected distance at correlation $\rho$ is
\begin{equation*}
\psi_\vartheta(\rho)=2\vartheta(1-\rho)^{1/2}\mathbb E|Z|\,\mathbb E\{1+\vartheta^2(1+\rho)Z^2\}^{1/2}.
\end{equation*}
A direct differentiation gives
\begin{equation*}
\psi_\vartheta'(0)=-\vartheta\mathbb E|Z|\,\mathbb E(1+\vartheta^2Z^2)^{-1/2}<0.
\end{equation*}
Thus $C(k)=\psi_\vartheta(0)-\psi_\vartheta\{\gamma(k)\}\sim-\psi_\vartheta'(0)\gamma(k)$. This gives $\mathrm d=\mathrm{d}_{\rm E}$ by Definition~\ref{def_trace}.

    In Example~\ref{ex_wass}, $\Omega$ is the set of probability distributions and $\dist$ is the $\mathcal L^2$-Wasserstein metric. Write $Q_t$ for the quantile function of $X_t$ and $\Phi_0$ for the standard-normal distribution function. Since $Q_t(u)=\mu_t+\sigma\Phi_0^{-1}(u)$,
    \begin{equation*} 
    \dist(X_t,X_s)=\left\{\int_0^1\{Q_t(u)-Q_s(u)\}^2\,du\right\}^{1/2}=|\mu_t-\mu_s|. \end{equation*}
    This example reduces exactly to Example~\ref{ex_gaussian}; hence $\mathrm d=\mathrm{d}_{\rm E}$ by Definition~\ref{def_trace}.   

    Moreover, scale variation can be added without changing the fixed-scale result above. In particular, let $X_t=N(\mu_t,\sigma_t^2)$ with
    \begin{equation*}
    \sigma_t=\sigma\exp(\lambda\mu_t),\qquad \sigma>0,\quad\lambda\geq0. 
    \end{equation*}
    The one-dimensional Gaussian Wasserstein identity then gives
    \begin{equation}\label{eq_wass_2} 
    W_2(X_t,X_s)=\left[(\mu_t-\mu_s)^2+\sigma^2\{e^{\lambda\mu_t}-e^{\lambda\mu_s}\}^2\right]^{1/2}. 
    \end{equation}
    Thus both location and scale move with one Gaussian driver, while the standard deviation remains positive. If $(G,H)$ is bivariate standard Gaussian with correlation $\rho$, let $\psi_{\sigma,\lambda}(\rho)$ be the expectation of the right-hand side of \eqref{eq_wass_2} with $(\mu_t,\mu_s)=(G,H)$. First note that the joint density of the standard bivariate Gaussian $(G,H)$ with correlation $\rho$ is
    \[
    \phi_\rho(g,h) = \frac{1}{2\pi\sqrt{1-\rho^2}} \exp\left( -\frac{g^2 - 2\rho gh + h^2}{2(1-\rho^2)} \right).
    \]
    By the chain rule, we get 
    \[
    \frac{\partial \phi_\rho}{\partial \rho} = \phi_\rho(g,h) \times \frac{\partial}{\partial \rho} \left[ -\frac{1}{2}\ln(1-\rho^2) - \frac{g^2 - 2\rho gh + h^2}{2(1-\rho^2)} \right]
    \]
    and thus $\left. \frac{\partial \phi_\rho}{\partial \rho} \right\vert{}_{\rho=0} = g h \, \phi_0(g,h)$. 
    Write $R(g,h)=\sqrt{(g-h)^2+\sigma^2(e^{\lambda g}-e^{\lambda h})^2}$, so that $\psi_{\sigma,\lambda}(\rho)=\iint_{\mathbb R^2}R(g,h)\phi_\rho(g,h)\,dg\,dh$. Since $R(g,h)\leq |g|+|h|+\sigma(e^{\lambda g}+e^{\lambda h})$, Gaussian tail bounds justify differentiation under the integral sign in a neighborhood of $\rho=0$. Using the density derivative above gives
\begin{equation*}
\psi_{\sigma,\lambda}'(0)=\iint_{\mathbb R^2}ghR(g,h)\phi_0(g,h)\,dg\,dh
=\mathbb E\left[\widetilde G\widetilde H R(\widetilde G,\widetilde H)\right].
\end{equation*}
The last equality follows because $\phi_0$ is the joint density of independent standard normal variables $\widetilde G$ and $\widetilde H$.
    Let $D = \{(g,h) \in \mathbb{R}^2 : g > h > 0\}$. Then
    \[
    \psi_{\sigma,\lambda}'(0) = \iint_D 2gh \Big[ R(g,h) + R(-g,-h) - R(g,-h) - R(-g,h) \Big] \phi_0(g,h) \,dg\,dh.
    \]
    Observe that
    \begin{align*}
        R^2(g,-h) - R^2(g,h) &= \left[ (g+h)^2 - (g-h)^2 \right] + \sigma^2 \left[ (e^{\lambda g} - e^{-\lambda h})^2 - (e^{\lambda g} - e^{\lambda h})^2 \right] >0,
    \end{align*}
    and 
    \begin{align*}
        R^2(-g,h) - R^2(-g,-h) &= \left[ (g+h)^2 - (g-h)^2 \right] + \sigma^2 \left[ (e^{-\lambda g} - e^{\lambda h})^2 - (e^{-\lambda g} - e^{-\lambda h})^2 \right]>0.
    \end{align*}
    Thus we can conclude that $\psi_{\sigma,\lambda}'(0)<0$. Given this, a first-order expansion gives $C(k)\sim-\psi_{\sigma,\lambda}'(0)\gamma(k)$ and hence $\mathrm d=\mathrm{d}_{\rm E}$ by Definition~\ref{def_trace}. 
\end{proof}

\subsection{Proof of results in Section~\ref{sect::estimation}}

\begin{proof}[\textbf{Proof of Lemma~\ref{lem_dhat}}]
For each $k\in\{1,\ldots,n-1\}$, exactly $2(n-k)$ ordered pairs $(i,j)$ satisfy $|i-j|=k$. Stationarity and \eqref{eq_c} give $\mathbb E \dist(X_i,X_j)=D_\nu-C(k)$ for every such pair. Grouping the terms in \eqref{eq_dhat} according to $|i-j|$ therefore yields
\begin{equation*} \mathbb E\widehat D_n=\frac{2}{n(n-1)}\sum_{k=1}^{n-1}(n-k)\{D_\nu-C(k)\}. \end{equation*}
Because $2\sum_{k=1}^{n-1}(n-k)=n(n-1)$, the contribution of $D_\nu$ is exactly $D_\nu$, and subtracting it proves \eqref{eq_dhat_1}.

Suppose now that \eqref{eq_trace} holds. Since $2\mathrm d-1>-1$, Lemma~\ref{lem_sum} applies to $C(k)$ with $\beta=0$. It also applies with $\beta=1$, because $2\mathrm d>-1$. Thus
\begin{equation*} 
\sum_{k=1}^{n-1}C(k)\sim\frac{\const_C}{2\mathrm d}n^{2\mathrm d}L_C(n),\qquad \sum_{k=1}^{n-1}kC(k)\sim\frac{\const_C}{2\mathrm d+1}n^{2\mathrm d+1}L_C(n). 
\end{equation*}
Consequently,
\begin{align*}
\sum_{k=1}^{n-1}(n-k)C(k)&=n\sum_{k=1}^{n-1}C(k)-\sum_{k=1}^{n-1}kC(k)\sim\frac{\const_C}{2\mathrm d(2\mathrm d+1)}n^{2\mathrm d+1}L_C(n).
\end{align*}
Substitution into \eqref{eq_dhat_1}, together with $n(n-1)\sim n^2$, proves \eqref{eq_dhat_2}.

If $\sum_{k\geq1}|C(k)|<\infty$, then \eqref{eq_dhat_1} gives
\begin{equation*} 
|\mathbb E\widehat D_n-D_\nu|\leq\frac{2}{n-1}\sum_{k\geq1}|C(k)|=O(n^{-1}).
\end{equation*}
Finally, under $\sum_{k\geq1}k|C(k)|<\infty$, rewrite \eqref{eq_dhat_1} as
\begin{equation*} 
\mathbb E\widehat D_n-D_\nu=-\frac{2}{n-1}\sum_{k=1}^{n-1}C(k)+\frac{2}{n(n-1)}\sum_{k=1}^{n-1}kC(k). 
\end{equation*}
The first-moment condition implies $\sum_{k\geq n}|C(k)|\leq n^{-1}\sum_{k\geq n}k|C(k)|=o(n^{-1})$ and $\sum_{k=1}^{n-1}kC(k)=O(1)$. Hence $\sum_{k=1}^{n-1}C(k)=\mathsf B_\infty/2+o(n^{-1})$, so the first term on the right is $-\mathsf B_\infty/n+O(n^{-2})$, while the second is $O(n^{-2})$. This proves \eqref{eq_dhat_3}.
\end{proof}

\begin{proof}[\textbf{Proof of Proposition~\ref{prop_scale}}]
From \eqref{eq_bartlett_1}, $\mathsf B_C(r)=2\sum_{k=1}^{r-1}C(k)-\frac{2}{r}\sum_{k=1}^{r-1}kC(k)$. Under \eqref{eq_trace}, Lemma~\ref{lem_sum} gives
\begin{equation*} 
\sum_{k=1}^{r-1}C(k)\sim\frac{\const_C}{2\mathrm d}r^{2\mathrm d}L_C(r),\qquad \sum_{k=1}^{r-1}kC(k)\sim\frac{\const_C}{2\mathrm d+1}r^{2\mathrm d+1}L_C(r). 
\end{equation*}
Therefore
\begin{equation}\label{eq_bartlett_2} 
\mathsf B_C(r)\sim\frac{\const_C}{\mathrm d(2\mathrm d+1)}r^{2\mathrm d}L_C(r). 
\end{equation}
Because $2\mathrm d>0$, $\mathsf B_C(r)\to\infty$, whereas $2a_{n,m}D_\nu\to0$. For $r=\lfloor\ell m\rfloor$, $|r/m-\ell|\leq m^{-1}$, and the uniform convergence theorem for slowly varying functions gives $L_C(r)/L_C(m)\to1$ uniformly in $\ell\in[1,q]$. Since $r\geq m\to\infty$, \eqref{eq_bartlett_2} also holds uniformly over these bandwidths. Taking absolute values therefore proves part (i).

Under short memory and the summability condition $\sum_{k=1}^{\infty}k|C(k)|<\infty$,
\begin{equation*} 
\mathsf B_C(r)-\mathsf B_\infty=-2\sum_{k=r}^\infty C(k)-\frac{2}{r}\sum_{k=1}^{r-1}kC(k)=O(r^{-1}). 
\end{equation*}
Indeed, $\sum_{k\geq r}|C(k)|\leq r^{-1}\sum_{k\geq r}k|C(k)|=o(r^{-1})$, while $\sum_{k<r}kC(k)=O(1)$. Since $a_{n,m}\to0$, part (ii) follows when $\mathsf B_\infty\neq0$. If $\mathsf B_\infty=0$, then $\mathsf B_C(r)=O(m^{-1})$ uniformly when $r/m$ ranges over a fixed compact subset of $(0,\infty)$. The conditions $m^{-1}=o(a_{n,m})$ and $D_\nu>0$ therefore give $\mathsf B_{C,n}(\lfloor\ell m\rfloor;m)=2a_{n,m}D_\nu\{1+o(1)\}$, proving part (iii).
Dividing the corresponding expansions by those at $\ell=1$ gives the stated uniform ratio limits.
\end{proof}

\begin{proof}[\textbf{Proof of Lemma~\ref{lem_raw_bias}}]
Since $\mathbb E\widehat C_n(k)=C(k)+b_n$ and $2\sum_{k=1}^{r-1}(1-k/r)=r-1$, the definitions of the signed aggregates give \eqref{eq_raw_bias}.
\end{proof}

\begin{proof}[\textbf{Proof of Theorem~\ref{thm_raw}}]
Under long memory, Proposition~\ref{prop_scale} gives a constant $c>0$ such that
\[
\min_{r\in\mathcal G_m}S_{B,n}(r;m)\geq c m^{2\mathrm d}L_C(m)
\] 
for all sufficiently large $n$. Under short memory with $\mathsf B_\infty\neq0$, Proposition~\ref{prop_scale}(ii) similarly gives $\min_{r\in\mathcal G_m}S_{B,n}(r;m)\geq c|\mathsf B_\infty|$. When $\mathsf B_\infty=0$, part (iii) gives $\min_{r\in\mathcal G_m}S_{B,n}(r;m)\geq ca_{n,m}D_\nu$.

For every $r\in\mathcal G_m$, the reverse triangle inequality gives
\begin{equation*} 
|\widehat S_{B,n}(r;m)-S_{B,n}(r;m)|\leq|\widehat{\mathsf B}_{C,n}(r;m)-\mathsf B_{C,n}(r;m)|. \end{equation*}
Dividing this inequality by $S_{B,n}(r;m)$, taking the maximum over $r\in\mathcal G_m$, and combining Assumption~\ref{ass_raw} with the corresponding lower bound above yields
\begin{equation}\label{eq_raw_rel} 
\max_{r\in\mathcal G_m}\left|\frac{\widehat S_{B,n}(r;m)}{S_{B,n}(r;m)}-1\right|=o_p(1). \end{equation}
With probability tending to one, $\widehat S_{B,n}(r;m)\geq S_{B,n}(r;m)/2>0$ for all $r\in\mathcal G_m$, so all logarithms are well defined. On this event, \eqref{eq_raw_rel} and the bound $|\log x|\leq2|x-1|$ for $x\geq1/2$ give $\max_{r\in\mathcal G_m}|\log\widehat S_{B,n}(r;m)-\log S_{B,n}(r;m)|=o_p(1)$.

At the upper bandwidth $\lfloor qm\rfloor$, \eqref{eq_raw_rel} gives
\begin{equation*}
\frac{\widehat S_{B,n}(\lfloor qm\rfloor;m)}{\widehat S_{B,n}(m;m)}=\frac{S_{B,n}(\lfloor qm\rfloor;m)}{S_{B,n}(m;m)}\{1+o_p(1)\}.
\end{equation*}
The deterministic ratio converges to $q^{2\mathrm d}$ under long memory and to one under short memory. Since $\lfloor qm\rfloor/m\to q$, continuity of the logarithm gives convergence of the expression inside $\Pi_{\mathcal I_{\mathrm O}}$ in \eqref{eq_raw_r}. For the log-slope estimator, uniform convergence over $r/m\in[1,q]$ gives $\log S_{B,n}(r;m)=c_{n,m}+2\mathrm d\log r+o(1)$ under long memory and $c_{n,m}+o(1)$ under short memory. Because $|\mathcal G_m|^{-1}\sum_{r\in\mathcal G_m}(\log r-\bar\ell_m)^2$ converges to a positive constant, substitution into \eqref{eq_raw_s} gives the second unprojected limit. Finally, if $x_n\to\mathrm d$ and $\mathrm d\in\mathcal I_{\mathrm O}$, then $|\Pi_{\mathcal I_{\mathrm O}}(x_n)-\mathrm d|\leq|x_n-\mathrm d|\to0$. Applying this inequality to the two unprojected estimators completes the proof.
\end{proof}

\begin{proof}[\textbf{Proof of Lemma~\ref{lem_block_bias}}]
Fix $s\in\mathcal S$ and write $\ell=\ell_{n,s}$. By stationarity, all $s$ blocks have the same expectation. Within any block there are $2(\ell-k)$ ordered pairs separated by lag $k$, so the counting identity in Lemma~\ref{lem_dhat} gives 
\begin{equation}\label{eq_block_bias_4} 
\mathbb E\widehat D_{n,s}-D_\nu=-\frac{2}{\ell(\ell-1)}\sum_{k=1}^{\ell-1}(\ell-k)C(k). 
\end{equation}
This is exactly \eqref{eq_dhat_1} with $n$ replaced by $\ell$. Applying the long-memory and first-moment short-memory conclusions of Lemma~\ref{lem_dhat} at the block length $\ell$ gives the corresponding expansions. Finally, $\ell_{n,s}/n\to1/s$ and $L_C(\ell_{n,s})/L_C(n)\to1$ for every fixed $s$, so \eqref{eq_dhat_2} yields the stated expansion in terms of $b_n$ under long memory. Since $\mathcal S$ is finite, all expansions hold uniformly over $s$.
\end{proof}

\begin{proof}[\textbf{Proof of Proposition~\ref{prop_block}}]
For $s\in\mathcal S\setminus\{1\}$, Lemma~\ref{lem_block_bias} and the uniform equivalence following \eqref{eq_block_finite} give
\begin{equation*}
\mathbb E(\widehat D_{n,s}-\widehat D_n)=b_nx_{n,s}^{\circ}(\mathrm d)+r_{n,s},\qquad \max_{s\in\mathcal S\setminus\{1\}}|r_{n,s}|=o(|b_n|).
\end{equation*}
Let $H_n(\eth)=\sum_{s>1}\{x_{n,s}^{\circ}(\eth)\}^2$. For every fixed $s>1$, $x_{n,s}(\eth)\to s^{1-2\eth}-1$ uniformly when $\eth$ ranges over a compact neighborhood of $\mathrm d$ contained in $(0,1/2)$. By the same uniform equivalence, $H_n(\eth)\to H(\eth)=\sum_{s>1}\{s^{1-2\eth}-1\}^2$ uniformly on that compact set. Because at least one $s$ exceeds one, $H(\eth)>0$ for every $\eth<1/2$, so $H_n(\eth)$ is uniformly bounded away from zero on that set for all sufficiently large $n$.
At the true parameter,
\begin{align*}
\bar b_n(\mathrm d)&=\frac{\sum_{s>1}x_{n,s}^{\circ}(\mathrm d)\{b_nx_{n,s}^{\circ}(\mathrm d)+r_{n,s}\}}{H_n(\mathrm d)}
=b_n+\frac{\sum_{s>1}x_{n,s}^{\circ}(\mathrm d)r_{n,s}}{H_n(\mathrm d)}=b_n\{1+o(1)\}.
\end{align*}
For the estimated exponent, the derivatives
\begin{equation*}
\frac{\partial}{\partial\eth}x_{n,s}(\eth)=2\log\left(\frac{\ell_{n,s}}{n}\right)\left(\frac{\ell_{n,s}}{n}\right)^{2\eth-1}
\end{equation*}
are uniformly bounded on the compact parameter set and over the finite block set. The mean-value theorem and $\widehat{\mathrm d}_{\mathrm P}\to_p\mathrm d$ imply $\max_s|x_{n,s}(\widehat{\mathrm d}_{\mathrm P})-x_{n,s}(\mathrm d)|=o_p(1)$. The same bound holds for $x_{n,s}^{\circ}$ by the uniform equivalence above. Substitution into \eqref{eq_block_pop}, together with the denominator bound above, proves $\bar b_n(\widehat{\mathrm d}_{\mathrm P})=b_n\{1+o_p(1)\}$.
\end{proof}

\begin{proof}[\textbf{Proof of Theorem~\ref{thm_bd}}]
Under short memory, Assumption~\ref{ass_bd} gives $\Pr(\widehat c_n=0)\to1$. On this event, $\widehat D_n^{\mathrm{BD}}=\widehat D_n$, $\widehat C_n^{\mathrm{BD}}(k)=\widehat C_n(k)$, and $\widehat S_{B,n,\mathrm{BD}}(r;m)=\widehat S_{B,n}(r;m)$ for every $r$. The result follows from Theorem~\ref{thm_raw}.

Under long memory, adding $\widehat c_n$ to $\widehat D_n$ adds the same amount to every $\widehat C_n(k)$; the two-sided Bartlett mass is $r-1$, and hence
\[
\widehat{\mathsf B}_{C,n,\mathrm{BD}}(r;m)-\widehat{\mathsf B}_{C,n}(r;m)=(r-1)\widehat c_n. 
\]
Since $r=O(m)$ uniformly over $\mathcal G_m$,
\begin{equation*} 
\max_{r\in\mathcal G_m}\left|\widehat{\mathsf B}_{C,n,\mathrm{BD}}(r;m)-\widehat{\mathsf B}_{C,n}(r;m)\right|\leq Cm|\widehat c_n|=o_p\{m^{2\mathrm d}L_C(m)\}. 
\end{equation*}
Combining this bound with Assumption~\ref{ass_raw} gives
\begin{equation*} 
\max_{r\in\mathcal G_m}|\widehat{\mathsf B}_{C,n,\mathrm{BD}}(r;m)-\mathsf B_{C,n}(r;m)|=o_p\{m^{2\mathrm d}L_C(m)\}. 
\end{equation*}
Proposition~\ref{prop_scale} supplies the same deterministic lower bound used in the proof of Theorem~\ref{thm_raw}. The reverse triangle inequality therefore yields
\[
\max_{r\in\mathcal G_m}\left|\frac{\widehat S_{B,n,\mathrm{BD}}(r;m)}{S_{B,n}(r;m)}-1\right|=o_p(1). 
\]
The ratio and log-slope arguments in Theorem~\ref{thm_raw} now apply with $\widehat S_{B,n}$ replaced by $\widehat S_{B,n,\mathrm{BD}}$.
\end{proof}

\begin{proof}[\textbf{Proof of Corollary~\ref{cor_update}}]
Theorem~\ref{thm_raw} supplies the base case. At each fixed step $j$, the correction satisfies Assumption~\ref{ass_bd} by hypothesis, so Theorem~\ref{thm_bd} applies at that step. Induction gives the result. This argument establishes consistency only; it does not imply that additional updates reduce finite-sample bias or variance.
\end{proof}

\begin{proof}[\textbf{Proof of Theorem~\ref{thm_fp}}]
Corollary~\ref{cor_update} gives $\widehat{\mathrm d}_{\esttype}^{(1)}\to_p\mathrm d$ and $\widehat{\mathrm d}_{\esttype}^{(2)}\to_p\mathrm d$, and hence $\widehat\varrho_{n,\esttype}=o_p(1)$. Every minimizer in \eqref{eq_fp_est} belongs to $\mathcal N_{n,\esttype}$, so
\begin{equation*} 
\left|\widehat{\mathrm d}_{\esttype,\mathrm{FP}}-\mathrm d\right|\leq\left|\widehat{\mathrm d}_{\esttype}^{(2)}-\mathrm d\right|+\widehat\varrho_{n,\esttype}=o_p(1). 
\end{equation*}
No contraction condition or uniform approximation of $\mathcal T_{n,\esttype}$ over a fixed global domain is required.
\end{proof}

\begin{proof}[\textbf{Proof of Proposition~\ref{prop_bartlett}}]
The Euler--Maclaurin expansion in equation~(5.6) of \citet{ibukiyama2014euler}, written in terms of the upper endpoint $r$, gives
\begin{align*}
\sum_{k=1}^{r-1}k^{2\mathrm d-1}
&=\frac{r^{2\mathrm d}}{2\mathrm d}-\frac12r^{2\mathrm d-1}
+\zeta(1-2\mathrm d)+O(r^{2\mathrm d-2}),\\
\sum_{k=1}^{r-1}k^{2\mathrm d}
&=\frac{r^{2\mathrm d+1}}{2\mathrm d+1}-\frac12r^{2\mathrm d}
+\zeta(-2\mathrm d)+O(r^{2\mathrm d-1}).
\end{align*}
Multiplying the first line by $2\const_C$ and subtracting $2\const_C/r$ times the second cancels the $r^{2\mathrm d-1}$ terms and proves \eqref{eq_curve_1}. Factoring
$\const_Cr^{2\mathrm d}/\{\mathrm d(2\mathrm d+1)\}$ at $r=m$ and
$r=\lfloor qm\rfloor$, and using $\log(1+x)=x+o(x)$, gives
\eqref{eq_curve_2}. Since $0<1-2\mathrm d<1$, one has
$\zeta(1-2\mathrm d)<0$; also $q^{-2\mathrm d}-1<0$, so the displayed coefficient is
positive. Adding the common stabilizer $2a_{n,m}D_\nu$ to both bandwidth aggregates gives
a slope perturbation of order $a_{n,m}m^{-2\mathrm d}$, which is smaller because
$a_{n,m}\to0$.
\end{proof}

\begin{proof}[\textbf{Proof of Lemma~\ref{lem_dt}}]
Put $N_k=n-k$. Since $\widehat\delta_n(k)=N_k^{-1}\sum_{t=k+1}^nZ_{t,k}$ and $\mathbb E\widehat\delta_n(k)=\delta(k)$, stationarity of $\{Z_{t,k}\}_t$ and grouping by lag give
\begin{equation*} 
\operatorname{Var}\{\widehat\delta_n(k)\}=\frac{1}{N_k^2}\sum_{|j|<N_k}(N_k-|j|)\gamma_{Z,k}(j)\leq\frac{1}{N_k}\sum_{|j|<N_k}|\gamma_{Z,k}(j)|\leq\bar cN_k^{2\mathrm d-1}L_C^\star(N_k), \end{equation*}
where the final inequality is Assumption~\ref{ass_dt}. If $k\leq\lfloor qm\rfloor$, then $N_k/n=1-k/n\to1$ uniformly because $m/n\to0$. Potter's bounds for $L_C^\star$ (see, e.g., Theorem~1.5.6 of \citealp{bingham1987regular}) therefore imply
\begin{equation*} 
\sup_{k\leq\lfloor qm\rfloor}\operatorname{Var}\{\widehat\delta_n(k)\}\leq Cn^{2\mathrm d-1}L_C^\star(n) 
\end{equation*}
for all sufficiently large $n$. By Cauchy--Schwarz,
\begin{equation*} 
\sup_{k\leq\lfloor qm\rfloor}\mathbb E|\widehat\delta_n(k)-\delta(k)|\leq Cn^{\mathrm d-1/2}\sqrt{L_C^\star(n)}. 
\end{equation*}
Consequently,
\begin{equation*} 
\mathbb E\left\{\frac{1}{\lfloor qm\rfloor}\sum_{k=1}^{\lfloor qm\rfloor}|\widehat\delta_n(k)-\delta(k)|\right\}\leq Cn^{\mathrm d-1/2}\sqrt{L_C^\star(n)}, 
\end{equation*}
and Markov's inequality proves the desired result.

For the second result, grouping the ordered pairs in \eqref{eq_dhat} by their lag gives the exact identities
\begin{equation*} 
\widehat D_n=\sum_{k=1}^{n-1}w_{n,k}\widehat\delta_n(k),\qquad \mathbb E\widehat D_n=\sum_{k=1}^{n-1}w_{n,k}\delta(k),\qquad w_{n,k}=\frac{2(n-k)}{n(n-1)}. 
\end{equation*}
Hence
\begin{equation*} 
\widehat D_n-\mathbb E\widehat D_n=\sum_{k=1}^{n-1}w_{n,k}\{\widehat\delta_n(k)-\delta(k)\}. \end{equation*}
The variance bound above remains valid for every $k<n$, with $N_k=n-k$. Thus
\begin{align*}
\mathbb E|\widehat D_n-\mathbb E\widehat D_n|\leq\sum_{k=1}^{n-1}w_{n,k}\{\operatorname{Var}(\widehat\delta_n(k))\}^{1/2}&\leq\frac{2\bar c^{1/2}}{n(n-1)}\sum_{k=1}^{n-1}(n-k)^{\mathrm d+1/2}\sqrt{L_C^\star(n-k)}\\
&=\frac{2\bar c^{1/2}}{n(n-1)}\sum_{u=1}^{n-1}u^{\mathrm d+1/2}\sqrt{L_C^\star(u)}.
\end{align*}
The summand is regularly varying with index $\mathrm d+1/2>-1$. Lemma~\ref{lem_sum} therefore makes the last display $O\{n^{\mathrm d-1/2}\sqrt{L_C^\star(n)}\}$. Another application of Markov's inequality proves the desired result.
\end{proof}

\begin{proof}[\textbf{Proof of Lemma~\ref{lem_block_rate}}]
Fix $s\in\mathcal S$ and write $\ell=\ell_{n,s}$. For block $j$ and lag $k<\ell$, define
\begin{equation*}
\widehat\delta_{j,\ell}(k)=\frac{1}{\ell-k}\sum_{t=(j-1)\ell+k+1}^{j\ell}\dist(X_t,X_{t-k}).
\end{equation*}
Counting ordered pairs inside the block by their lag gives
\begin{equation*}
\widehat D_{j,s}=\frac{2}{\ell(\ell-1)}\sum_{k=1}^{\ell-1}(\ell-k)\widehat\delta_{j,\ell}(k),\qquad \widehat D_{n,s}=\frac{1}{s}\sum_{j=1}^s\widehat D_{j,s}.
\end{equation*}
The sample mean $\widehat\delta_{j,\ell}(k)$ contains $\ell-k$ consecutive observations of the stationary process $\{Z_{t,k}\}_t$. Repeating the variance calculation in Lemma~\ref{lem_dt} with sample length $\ell-k$ gives
\begin{equation*}
\mathbb E|\widehat\delta_{j,\ell}(k)-\delta(k)|\leq C(\ell-k)^{\mathrm d-1/2}\sqrt{L_C^\star(\ell-k)}.
\end{equation*}
No independence across blocks is required. By the triangle inequality,
\begin{align*}
\mathbb E|\widehat D_{j,s}-\mathbb E\widehat D_{j,s}|&\leq\frac{2C}{\ell(\ell-1)}\sum_{k=1}^{\ell-1}(\ell-k)^{\mathrm d+1/2}\sqrt{L_C^\star(\ell-k)}=O\{\ell^{\mathrm d-1/2}\sqrt{L_C^\star(\ell)}\},
\end{align*}
where Lemma~\ref{lem_sum} is applied after putting $u=\ell-k$. Averaging over the fixed number $s$ of blocks gives the same order for $\widehat D_{n,s}$. Since $\ell_{n,s}\asymp n$ for fixed $s$, Markov's inequality gives the stated order for each $s$, and finiteness of $\mathcal S$ gives \eqref{eq_block_rate_1}.
For the block-difference slope, write $e_{n,s}=\widehat D_{n,s}-\mathbb E\widehat D_{n,s}$ for $s\in\mathcal S$. The uniform equivalence following \eqref{eq_block_finite} permits the representation
\begin{equation*}
\widehat D_{n,s}-\widehat D_n=b_nx_{n,s}^{\circ}(\mathrm d)+(e_{n,s}-e_{n,1})+r_{n,s},\qquad \max_{s>1}|r_{n,s}|=o(|b_n|).
\end{equation*}
Let $\widetilde x_{n,s}=x_{n,s}^{\circ}(\widehat{\mathrm d}_{\mathrm P})$ and $H_n^{\circ}(\widehat{\mathrm d}_{\mathrm P})=\sum_{s>1}\widetilde x_{n,s}^2$. Compactness and pilot consistency imply that this denominator is bounded away from zero with probability tending to one. Substitution into \eqref{eq_block_est} gives
\begin{align*}
\widehat b_n-b_n&=b_n\left\{\frac{\sum_{s>1}\widetilde x_{n,s}x_{n,s}^{\circ}(\mathrm d)}{H_n^{\circ}(\widehat{\mathrm d}_{\mathrm P})}-1\right\}+\frac{\sum_{s>1}\widetilde x_{n,s}(e_{n,s}-e_{n,1}+r_{n,s})}{H_n^{\circ}(\widehat{\mathrm d}_{\mathrm P})}.
\end{align*}
Choose a compact interval $\mathcal K\subset(0,1/2)$ whose interior contains $\mathrm d$. Pilot consistency implies $\mathbb P(\widehat{\mathrm d}_{\mathrm P}\in\mathcal K)\to1$. On $\mathcal K$, the uniform equivalence following \eqref{eq_block_finite} and the mean-value theorem applied to the limiting regressor $x_s(\eth)=s^{1-2\eth}-1$, whose derivative $x_s'(\eth)=-2(\log s)s^{1-2\eth}$ is uniformly bounded over the finite set $\mathcal S$, give
\begin{equation*}
\max_{s>1}|\widetilde x_{n,s}-x_{n,s}^{\circ}(\mathrm d)|=o_p(1).
\end{equation*}
Consequently,
\begin{equation*}
\frac{\sum_{s>1}\widetilde x_{n,s}x_{n,s}^{\circ}(\mathrm d)}{H_n^{\circ}(\widehat{\mathrm d}_{\mathrm P})}-1=\frac{\sum_{s>1}\widetilde x_{n,s}\{x_{n,s}^{\circ}(\mathrm d)-\widetilde x_{n,s}\}}{H_n^{\circ}(\widehat{\mathrm d}_{\mathrm P})}=o_p(1),
\end{equation*}
because the denominator is bounded away from zero with probability tending to one. Hence the first term is $o_p(|b_n|)$. The second is $O_p\{n^{\mathrm d-1/2}\sqrt{L_C^\star(n)}\}+o_p(|b_n|)$ by \eqref{eq_block_rate_1}. This proves \eqref{eq_block_rate_2}.
\end{proof}

\begin{proof}[\textbf{Proof of Proposition~\ref{prop_rates}}]
For any $r\leq\lfloor qm\rfloor$, use $\widehat C_n(k)-C(k)=(\widehat D_n-D_\nu)-\{\widehat\delta_n(k)-\delta(k)\}$ in the definition of the raw signed Bartlett aggregate. Since the two-sided Bartlett mass excluding zero equals $r-1$, we obtain the exact decomposition
\[
\widehat{\mathsf B}_{C,n}(r;m)-\mathsf B_{C,n}(r;m)=\{2a_{n,m}+r-1\}(\widehat D_n-D_\nu)-2\sum_{k=1}^{r-1}\left(1-\frac{k}{r}\right)\{\widehat\delta_n(k)-\delta(k)\}. 
\]
Because $r\leq qm$, $a_{n,m}=o(1)$, and $0\leq1-k/r\leq1$,
\begin{equation*}
\max_{r\in\mathcal G_m}|\widehat{\mathsf B}_{C,n}(r;m)-\mathsf B_{C,n}(r;m)|\leq C m|\widehat D_n-D_\nu|+2\sum_{k=1}^{\lfloor qm\rfloor}|\widehat\delta_n(k)-\delta(k)|.
\end{equation*}
Under long memory,
\begin{equation*} 
\widehat D_n-D_\nu=(\widehat D_n-\mathbb E\widehat D_n)+(\mathbb E\widehat D_n-D_\nu)=O_p(\mathfrak r_n)+O\{n^{2\mathrm d-1}L_C(n)\}. 
\end{equation*}
Lemma~\ref{lem_dt} also gives $\sum_{k=1}^{\lfloor qm\rfloor}|\widehat\delta_n(k)-\delta(k)|=O_p(m\mathfrak r_n)$. Hence the maximum error in the signed aggregates over $\mathcal G_m$ is $O_p(m\mathfrak r_n)+O\{mn^{2\mathrm d-1}L_C(n)\}$. The first term is $o_p\{m^{2\mathrm d}L_C(m)\}$ by \eqref{eq_rate_lm}. For the deterministic term,
\begin{equation*} \frac{mn^{2\mathrm d-1}L_C(n)}{m^{2\mathrm d}L_C(m)}=\left(\frac{m}{n}\right)^{1-2\mathrm d}\frac{L_C(n)}{L_C(m)}\longrightarrow0 \end{equation*}
by Potter's bound \citep[Theorem~1.5.6]{bingham1987regular} and $m/n\to0$. This verifies Assumption~\ref{ass_raw} under long memory.

Under short memory, Lemma~\ref{lem_dhat} gives $\mathbb E\widehat D_n-D_\nu=O(n^{-1})$. The same argument therefore produces the bound $O_p(m\mathfrak r_n)+O(m/n)$. If $\mathsf B_\infty\neq0$, the deterministic normalization in Assumption~\ref{ass_raw} is asymptotically the fixed positive number $|\mathsf B_\infty|$, and $m\mathfrak r_n+m/n=o(1)$ is sufficient. If $\mathsf B_\infty=0$, that normalization is of order $2a_{n,m}D_\nu$, and \eqref{eq_rate_sm} is sufficient.

It remains to verify Assumption~\ref{ass_bd}. Under long memory, let $\mathcal K=[\underline{\mathrm d},\overline{\mathrm d}]\subset(0,1/2)$. The uniform convergence following \eqref{eq_block_finite}, together with $\inf_{\eth\in\mathcal K}H(\eth)>0$, implies that the denominator in \eqref{eq_block_est} is uniformly bounded away from zero on $\mathcal K$ for sufficiently large $n$. By Cauchy--Schwarz, Lemma~\ref{lem_block_bias}, and \eqref{eq_block_rate_1},
\begin{equation*}
\sup_{\eth\in\mathcal K}|\widehat b_n(\eth)|\leq C\max_{s\in\mathcal S\setminus\{1\}}|\widehat D_{n,s}-\widehat D_n|=O_p(\mathfrak r_n+|b_n|).
\end{equation*}
Since \eqref{eq_corr} sets the correction to zero outside the activation interval, this gives $\widehat c_n=O_p(\mathfrak r_n+|b_n|)$ for any projected pilot. The rate condition \eqref{eq_rate_lm} makes $\mathfrak r_n=o\{m^{2\mathrm d-1}L_C(m)\}$. Moreover,
\begin{equation*}
\frac{|b_n|}{m^{2\mathrm d-1}L_C(m)}=O\left\{\left(\frac{m}{n}\right)^{1-2\mathrm d}\frac{L_C(n)}{L_C(m)}\right\}=o(1),
\end{equation*}
by \eqref{eq_dhat_2} and Potter's bound \citep[Theorem~1.5.6]{bingham1987regular}. Hence $\widehat c_n=o_p\{m^{2\mathrm d-1}L_C(m)\}$, as required.
Under short memory, Theorem~\ref{thm_raw} gives $\widehat{\mathrm d}_{\mathrm P}\to_p0$ for the initial projected raw pilot. Since $\underline{\mathrm d}>0$, the event $\{\underline{\mathrm d}<\widehat{\mathrm d}_{\mathrm P}\leq\overline{\mathrm d}\}$ has probability tending to zero, so \eqref{eq_corr} gives $\Pr(\widehat c_n=0)\to1$. Theorem~\ref{thm_bd} then yields consistency of the updated estimator, allowing the same argument to be repeated at each fixed update.
\end{proof}

\begin{proof}[\textbf{Proof of Corollary~\ref{cor_orders}}]
Under long memory, substituting $m=c_mn^\kappa\{1+o(1)\}$ into \eqref{eq_rate_lm} gives
\begin{equation*} 
\frac{n^{\mathrm d-1/2}\sqrt{L_C^\star(n)}}{m^{2\mathrm d-1}L_C(m)}=c_m^{1-2\mathrm d}n^{(1/2-\mathrm d)(2\kappa-1)}\frac{\sqrt{L_C^\star(n)}}{L_C(n^\kappa)}\{1+o(1)\}. 
\end{equation*}
The polynomial exponent is negative because $\mathrm d<1/2$ and $\kappa<1/2$. A negative polynomial power dominates the slowly varying ratio, so the expression converges to zero.

Under short memory, $\mathfrak r_n=n^{-1/2}\sqrt{L_C^\star(n)}$, so $\kappa<1/2$ implies $m\mathfrak r_n+m/n=o(1)$. Moreover, $a_{n,m}\sim c_a c_m^\eta n^{-\eta(1-\kappa)}$. The first upper bound in \eqref{eq_eta} gives $\kappa-1/2+\eta(1-\kappa)<0$, ensuring $m\mathfrak r_n=o(a_{n,m})$ because a negative polynomial power dominates the slowly varying factor. The second gives $-\kappa+\eta(1-\kappa)<0$, ensuring $m^{-1}=o(a_{n,m})$. Finally, $m/n=o(a_{n,m})$ follows from $\eta<1$, which is implied by \eqref{eq_eta}. This establishes \eqref{eq_rate_sm}.
\end{proof}

\begin{proof}[\textbf{Proof of Lemma~\ref{lem_shift}}]
Set $(a_1,a_2,b_1,b_2)=(t,t-k,t-j,t-j-k)$ in \eqref{eq_pair}. The four pairwise separations are $|j|$, $|j+k|$, $|j-k|$, and $|j|$. Absorbing the repeated $\varpi(|j|)$ term into the constant~gives
\begin{align*}
|\gamma_{Z,k}(j)|&\leq C\{\varpi(|j|)+\varpi(|j-k|)+\varpi(|j+k|)\}, \\ 
\sum_{|j|<n}|\gamma_{Z,k}(j)|&\leq C\sum_{|j|<n}\varpi(|j|)+C\sum_{|j|<n}\varpi(|j-k|)+C\sum_{|j|<n}\varpi(|j+k|)\leq C n^{2\mathrm d}L_C^\star(n),
\end{align*}
where~\eqref{eq_shift} is applied with shifts $0$, $k$, and $-k$. The bound is uniform in $k$, so taking the supremum proves Assumption~\ref{ass_dt}.
\end{proof}

\begin{proof}[\textbf{Proof of Proposition~\ref{prop::example assumption}}]
Write $\{G_t\}$ for the scalar standard Gaussian driver and $\gamma(h)=\mathbb E(G_tG_{t-h})$. Since $\gamma(h)\to0$, we have $\sup_{h\geq1}|\gamma(h)|<1$: an absolute correlation of one would imply nondecaying covariances along multiples of that lag. Thus the covariance matrices of $(G_t,G_{t-k})$, $k\geq1$, are uniformly well-conditioned.
For each example, write $\dist(X_t,X_s)=h(G_t,G_s)$ and let $(U,V)$ be any standard bivariate Gaussian pair. In Example~\ref{ex_gaussian}, $h(x,y)=|x-y|$, so $\mathbb Eh(U,V)^2\leq4$. In Example~\ref{ex_psd}, orthonormality of $u$ and $v$ gives $h(x,y)^2=2\vartheta^2(x-y)^2+\vartheta^4(x^2-y^2)^2$, and hence $\mathbb Eh(U,V)^2\leq8\vartheta^2+4\vartheta^4$. In Example~\ref{ex_wass}, the Gaussian Wasserstein identity gives $h(x,y)^2=(x-y)^2+\sigma^2(e^{\lambda x}-e^{\lambda y})^2$, so $\mathbb Eh(U,V)^2\leq4+4\sigma^2e^{2\lambda^2}$; $\lambda=0$ includes the location-only case. These bounds are uniform in the correlation of $(U,V)$.
The moment and Hermite-rank conditions in Section~\ref{app_dt} therefore hold with $\tau=1$. The Gaussian covariance bound established there gives
\begin{equation*}
|\gamma_{Z,k}(j)|\leq C\{|\gamma(j)|+|\gamma(j-k)|+|\gamma(j+k)|\},
\end{equation*}
uniformly in $k\geq1$ and $j\in\mathbb Z$. Moreover,
\begin{equation*}
\sup_{s\in\mathbb Z}\sum_{|j|<n}|\gamma(j-s)|\leq\sum_{|h|<n}|\gamma(h)|+2n\sup_{h\geq n}|\gamma(h)|=O\{n^{2\mathrm d_{\rm E}}L(n)\},
\end{equation*}
where Karamata's theorem (Lemma \ref{lem_sum}) bounds the first term and Potter's bound \citep[Theorem~1.5.6]{bingham1987regular} controls the second because $2\mathrm d_{\rm E}-1<0$. Thus \eqref{eq_shift} holds with $\varpi(h)=|\gamma(h)|$ and $L_C^\star=L$. Since $\mathrm d=\mathrm d_{\rm E}$ by Proposition~\ref{prop::example memory parameter}, summing the covariance bound establishes Assumption~\ref{ass_dt}.
\end{proof}

\newpage
\section{Additional theoretical results}\label{supp_sec_theory}

\subsection{Trace memory and ADCV memory}\label{sect::compare with ADCV}

We formulate an alternative definition of long memory based on the ADCV considered in the literature \citep{zhou2012measuring,lyons2013distance,jiang2024testing} and compare it with Definition~\ref{def_trace}.
Following \citet{lyons2013distance}, write the distance covariance in metric spaces as
\begin{equation*}
\operatorname{dCov}(X,Y)=\mathbb E\{\dist_{\nu_X}(X,X')\dist_{\nu_Y}(Y,Y')\},
\end{equation*}
where $\nu_X$ and $\nu_Y$ are the marginal laws with finite first moments, $\dist_\mu=-K_\mu$ denotes the $\mu$-centered distance, and $(X',Y')$ is an independent copy of $(X,Y)$. For an object-valued stationary time series $\{X_t\}_{t\in\mathbb Z}$, \citet{jiang2024testing} define the ADCV as
\begin{equation*}
V(k)=\mathbb E\{\dist_\nu(X_t,X_t')\dist_\nu(X_{t-k},X_{t-k}')\},
\end{equation*}
where $\{X_t'\}_{t\in\mathbb Z}$ is an independent copy of the entire process $\{X_t\}_{t\in\mathbb Z}$; see \citet{zhou2012measuring} for the Euclidean setting. Thus, $V(k)$ measures lag-$k$ dependence by averaging products of centered distances between two independent process copies. Under the negative-type condition imposed below, $V(k)\geq0$.
For the embedded process $Y_t=\Phi_\nu(X_t)$, the following proposition identifies $V(k)$ with the squared Hilbert--Schmidt norm of $\Gamma_k$, whereas $C(k)$ is its trace. These two summaries capture different aspects of temporal dependence and need not yield the same memory classification.

\begin{proposition}\label{prop::ADCV}
Assume that $(\Omega,\dist)$ is of negative type and let $Y_t=\Phi_\nu(X_t)$ be the Hilbert-space embedding defined above. Then for $\Gamma_k=\mathbb E(Y_t\otimes Y_{t-k})$,
\begin{equation} \label{eq_v}
V(k)=\|\Gamma_k\|_{\mathcal S_2}^2.
\end{equation}
\end{proposition}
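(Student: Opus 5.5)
The plan is to rewrite each centered distance as an inner product through Lemma~\ref{lem_embed}, and then use independence of the two process copies to factor the expectation into Hilbert--Schmidt inner products. Since $\dist_\nu=-K_\nu$, we have $\dist_\nu(x,x')=-\langle\Phi_\nu(x),\Phi_\nu(x')\rangle_{\mathcal H}$. Put $Y_t=\Phi_\nu(X_t)$ and $Y_t'=\Phi_\nu(X_t')$. The two minus signs cancel, so
\begin{equation*}
V(k)=\mathbb E\{\langle Y_t,Y_t'\rangle_{\mathcal H}\langle Y_{t-k},Y_{t-k}'\rangle_{\mathcal H}\}.
\end{equation*}

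The key identity is $\langle a,b\rangle_{\mathcal H}\langle c,e\rangle_{\mathcal H}=\langle a\otimes c,\,b\otimes e\rangle_{\mathcal S_2}$ for rank-one operators. This follows by expanding in an orthonormal basis $\{e_\ell\}$: under the paper's tensor convention, $\langle a\otimes c,b\otimes e\rangle_{\mathcal S_2}=\sum_\ell\langle a,e_\ell\rangle\langle c,\cdot\rangle\ldots$, which collapses to $\langle a,b\rangle\langle c,e\rangle$. Applying it with $a=Y_t$, $c=Y_{t-k}$, $b=Y_t'$, $e=Y_{t-k}'$ gives
\begin{equation*}
V(k)=\mathbb E\langle Y_t\otimes Y_{t-k},\,Y_t'\otimes Y_{t-k}'\rangle_{\mathcal S_2}.
\end{equation*}
The pair $(Y_t,Y_{t-k})$ is independent of $(Y_t',Y_{t-k}')$, and the two pairs have the same law. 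Moving the expectation inside the bilinear form one argument at a time (Fubini) then yields $\langle\Gamma_k,\Gamma_k\rangle_{\mathcal S_2}=\|\Gamma_k\|_{\mathcal S_2}^2$.

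The main obstacle is justifying that interchange, since the expectations are Bochner integrals in $\mathcal S_2$. The random operator $Y_t\otimes Y_{t-k}$ has $\mathcal S_2$-norm $\|Y_t\|_{\mathcal H}\|Y_{t-k}\|_{\mathcal H}$, and Cauchy--Schwarz with $\mathbb E\|Y_t\|_{\mathcal H}^2=D_\nu<\infty$ makes this norm integrable. So $\Gamma_k$ exists as a Bochner integral in $\mathcal S_2$ and agrees with the trace-class version in Proposition~\ref{prop_cv}.

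The product of the two independent norms is also integrable, by independence. Conditioning on the primed copy, $\mathbb E[\langle U,U'\rangle_{\mathcal S_2}\mid U']=\langle\mathbb E U,U'\rangle_{\mathcal S_2}$, because a bounded linear functional commutes with the Bochner integral. Taking expectations again gives $\langle\Gamma_k,\Gamma_k\rangle_{\mathcal S_2}$.

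Finally, the integrability of the centered-distance product defining $V(k)$ follows from the same norm bound, $|\dist_\nu(x,x')|\le\|\Phi_\nu(x)\|\,\|\Phi_\nu(x')\|$. Hence $V(k)$ is well defined under the standing first-moment condition, which completes the plan.
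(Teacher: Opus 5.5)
Your proposal is correct and follows the paper's proof essentially step for step. You write both centered distances as $-\langle Y_t,Y_t'\rangle_{\mathcal H}$ and $-\langle Y_{t-k},Y_{t-k}'\rangle_{\mathcal H}$, rewrite their product as $\langle Y_t\otimes Y_{t-k},Y_t'\otimes Y_{t-k}'\rangle_{\mathcal S_2}$, and use independence of the two process copies to pull the expectation inside; your Bochner-integrability and conditioning argument, together with the finiteness check for $V(k)$, supplies details the paper leaves implicit.
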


\begin{proof}
    Let $\{Y_t'\}$ be an independent copy of $\{Y_t\}$. Since
\begin{equation*}
    \dist_\nu(X_t,X_t')=-\langle Y_t,Y_t'\rangle_{\mathcal H}, \qquad
    \dist_\nu(X_{t-k},X_{t-k}')=-\langle Y_{t-k},Y_{t-k}'\rangle_{\mathcal H},
\end{equation*}
it follows that
\begin{equation*}
    V(k)=\mathbb E\{\langle Y_t,Y_t'\rangle_{\mathcal H}\langle Y_{t-k},Y_{t-k}'\rangle_{\mathcal H}\}.
\end{equation*}
Using the Hilbert--Schmidt inner product, we obtain
\begin{equation*}
    \langle Y_t,Y_t'\rangle_{\mathcal H}\langle Y_{t-k},Y_{t-k}'\rangle_{\mathcal H}
    =
    \langle Y_t\otimes Y_{t-k},Y_t'\otimes Y_{t-k}'\rangle_{\mathcal S_2}.
\end{equation*}
Therefore,
\begin{equation*}
    V(k)=\mathbb E\langle Y_t\otimes Y_{t-k},Y_t'\otimes Y_{t-k}'\rangle_{\mathcal S_2}.
\end{equation*}
Because $\{Y_t'\}$ is an independent copy of $\{Y_t\}$, the random elements $Y_t\otimes Y_{t-k}$ and $Y_t'\otimes Y_{t-k}'$ are independent and identically distributed in the Hilbert--Schmidt space. Hence
\begin{align*}
    \mathbb E\langle Y_t\otimes Y_{t-k},Y_t'\otimes Y_{t-k}'\rangle_{\mathcal S_2} =
    \left\langle \mathbb E(Y_t\otimes Y_{t-k}),\mathbb E(Y_t'\otimes Y_{t-k}')
    \right\rangle_{\mathcal S_2}=\langle \Gamma_k,\Gamma_k\rangle_{\mathcal S_2}.
\end{align*}
Therefore, \eqref{eq_v} follows. This completes the proof.
\end{proof}

Because $V(k)=\|\Gamma_k\|_{\mathcal S_2}^2$ is on a squared covariance scale, we formulate the corresponding memory notion in terms of its square root, which we call the root ADCV.
\begin{definition}[Root-ADCV long memory]\label{def_adcv}
We say that the object-valued time series $\{X_t\}$ has root-ADCV short memory if $\sum_{k=-\infty}^{\infty}\sqrt{V(k)}<\infty$. We say that it has root-ADCV long memory with parameter $\mathrm d_V\in(0,1/2)$ if
\begin{equation}\label{eq_adcv}
\sqrt{V(k)}\sim\const_V k^{2\mathrm d_V-1}L_V(k),\qquad k\to\infty,
\end{equation}
where $\const_V>0$ and $L_V$ is positive and slowly varying.
\end{definition}
\begin{remark}[Relation to the original ADCV scale]
If \eqref{eq_adcv} holds, then $V(k)\sim\const_V^2k^{4\mathrm d_V-2}L_V(k)^2$. Hence $\sum_kV(k)$ converges for $\mathrm d_V<1/4$ and diverges for $\mathrm d_V>1/4$; at $\mathrm d_V=1/4$, summability depends on $L_V$. Recall that a metric space of negative type is of \emph{strong negative type} if the distance energy between two probability measures with finite first moments vanishes only when the measures coincide. Under this property and the standing condition $\nu\in M_1(\Omega)$, $V(k)=0$ if and only if $X_t$ and $X_{t-k}$ are independent; see \citet{lyons2013distance}.
\end{remark}
The following result gives the corresponding decay rate for the root ADCV under the operator representation in \eqref{eq_model}.
\begin{proposition}\label{prop_common_ADCV}
Suppose that \eqref{eq_model} holds with $A\neq0$ and $\|R_k\|_1=o\{k^{2\mathrm d_0-1}L(k)\}$. Then the root ADCV satisfies $\sqrt{V(k)}\sim k^{2\mathrm d_0-1}L(k)\|A\|_{\mathcal S_2}$, so $\mathrm d_V=\mathrm d_0$.
\end{proposition}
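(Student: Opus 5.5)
The argument combines Proposition~\ref{prop::ADCV} with the same perturbation reasoning used for Proposition~\ref{prop_common}. By Proposition~\ref{prop::ADCV}, $\sqrt{V(k)}=\|\Gamma_k\|_{\mathcal S_2}$, so it suffices to find the asymptotic size of the Hilbert--Schmidt norm of $\Gamma_k$ under \eqref{eq_model}.

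Put $a_k=k^{2\mathrm d_0-1}L(k)$. Since $L$ is positive, $a_k>0$. The first step is to pass from trace norm to Hilbert--Schmidt norm. For any trace-class operator, $\|R\|_{\mathcal S_2}\le\|R\|_1$, because the singular values satisfy $\sum\sigma_j^2\le(\sum\sigma_j)^2$. Hence the hypothesis gives $\|R_k\|_{\mathcal S_2}=o(a_k)$. Also, $A$ is trace-class, hence Hilbert--Schmidt, and $A\neq0$ gives $\|A\|_{\mathcal S_2}>0$.

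Next, apply the reverse triangle inequality in $\mathcal S_2$ to $\Gamma_k=a_kA+R_k$:
\begin{equation*}
\bigl|\|\Gamma_k\|_{\mathcal S_2}-a_k\|A\|_{\mathcal S_2}\bigr|\le\|R_k\|_{\mathcal S_2}=o(a_k).
\end{equation*}
Dividing by $a_k\|A\|_{\mathcal S_2}>0$ yields $\|\Gamma_k\|_{\mathcal S_2}\sim\|A\|_{\mathcal S_2}\,k^{2\mathrm d_0-1}L(k)$. Therefore $\sqrt{V(k)}\sim\|A\|_{\mathcal S_2}k^{2\mathrm d_0-1}L(k)$.

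This is exactly \eqref{eq_adcv} with $\const_V=\|A\|_{\mathcal S_2}>0$, $L_V=L$ and $\mathrm d_V=\mathrm d_0\in(0,1/2)$. Unlike Proposition~\ref{prop_common}, no sign argument is needed: the constant is a norm and so is automatically positive, and no condition on $\operatorname{tr}(A)$ is required.

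There is no substantive obstacle. The only point to check is the norm comparison $\|\cdot\|_{\mathcal S_2}\le\|\cdot\|_1$, which transfers the remainder bound from trace norm to Hilbert--Schmidt norm.
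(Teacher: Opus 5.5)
Your proof is correct and follows the paper's own route. The paper omits the proof but points to combining $V(k)=\|\Gamma_k\|_{\mathcal S_2}^2$ from Proposition~\ref{prop::ADCV} with the Hilbert--Schmidt calculation in the proof of Proposition~\ref{prop_common}: there $\|R_k\|_1=o(a_k)$ is transferred to $\|R_k\|_{\mathcal S_2}=o(a_k)$, giving $\|\Gamma_k\|_{\mathcal S_2}=a_k\{\|A\|_{\mathcal S_2}+o(1)\}$, and your explicit checks of $\|\cdot\|_{\mathcal S_2}\le\|\cdot\|_1$ and $\|A\|_{\mathcal S_2}>0$ fill in steps the paper leaves implicit.
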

The result follows directly from \eqref{eq_v} and the Hilbert--Schmidt norm calculation in the proof of Proposition~\ref{prop_common}, so we omit the proof.
We now compare the two scalar summaries $C(k)=\operatorname{tr}(\Gamma_k)$ and $\sqrt{V(k)}=\|\Gamma_k\|_{\mathcal S_2}$. The trace can vanish even when $\Gamma_k\neq0$, for example through cancellation of positive and negative contributions. The root ADCV remains positive whenever $\Gamma_k\neq0$. If $\Gamma_k$ is positive semidefinite, then
\begin{equation*}
0\leq\sqrt{V(k)}=\|\Gamma_k\|_{\mathcal S_2}\leq\|\Gamma_k\|_1=C(k).
\end{equation*}
If its rank is $r_k<\infty$, then also $C(k)\leq\sqrt{r_k}\sqrt{V(k)}$. When these ranks are uniformly bounded, the two summaries have comparable orders. When dependence is spread across an increasing number of directions, however, the Hilbert--Schmidt norm may decay faster than the trace. Without positive semidefiniteness, there is no universal ordering between $|C(k)|$ and $\sqrt{V(k)}$.
Thus, although the root ADCV avoids cancellation in the trace, its memory classification need not agree with Definition~\ref{def_trace}.

\subsection{Gaussian-driven verification of Assumption~\ref{ass_dt}}\label{app_dt}
Suppose that $X_t=G(\xi_t)$, where $\{\xi_t\}$ is a centered stationary Gaussian vector process in $\mathbb R^{p_\xi}$. Let
\begin{equation*}
    \mathcal R_\xi(h)=\|\operatorname{Cov}(\xi_t,\xi_{t-h})\|_{\max},
\end{equation*}
and define $h(u,v)=\dist\{G(u),G(v)\}$, identifying $(u,v)$ with the concatenated vector $(u^\top,v^\top)^\top\in\mathbb R^{2p_\xi}$. For each $k\geq1$,
\begin{equation*}
    Z_{t,k}=h(\xi_t,\xi_{t-k})
\end{equation*}
is a scalar function of the $2p_\xi$-dimensional Gaussian vector $W_{t,k}=(\xi_t^\top,\xi_{t-k}^\top)^\top$. Let $\Sigma_k=\operatorname{Var}(W_{t,k})$ and assume uniform conditioning: $0<c_0\leq\lambda_{\min}(\Sigma_k)\leq\lambda_{\max}(\Sigma_k)\leq C_0<\infty$. Put $\widetilde W_{t,k}=\Sigma_k^{-1/2}W_{t,k}$ and $h_k(w)=h(\Sigma_k^{1/2}w)$. The centered transform is
\begin{equation*}
    \bar h_k(w)=h_k(w)-\mathbb Eh_k(\widetilde W_{t,k}).
\end{equation*}
Assume $\sup_k\mathbb EZ_{t,k}^2<\infty$ and that every nondegenerate $\bar h_k$ has Hermite rank at least $\tau\geq1$.

The cross-covariance matrix between $W_{t,k}$ and $W_{t-j,k}$ has four blocks corresponding to lags $j$, $j+k$, $j-k$, and $j$. Therefore
\begin{equation*}
    \|\operatorname{Cov}(W_{t,k},W_{t-j,k})\|_{\max}\leq C\{\mathcal R_\xi(|j|)+\mathcal R_\xi(|j-k|)+\mathcal R_\xi(|j+k|)\}.
\end{equation*}
Uniform conditioning gives the same bound after standardization. Let $\psi_{k,j}$ be the largest row or column sum of absolute entries in $\operatorname{Cov}(\widetilde W_{t,k},\widetilde W_{t-j,k})$. It is bounded by a constant times the preceding three-term envelope. If $\psi_{k,j}<1$, \citet[Lemma~1]{arcones1994limit} gives $|\gamma_{Z,k}(j)|\leq\psi_{k,j}^\tau\operatorname{Var}(Z_{t,k})$; if $\psi_{k,j}\geq1$, the same bound follows from Cauchy--Schwarz. Using the uniform second-moment bound, we obtain
\begin{align*}
|\gamma_{Z,k}(j)|&\leq C\{\mathcal R_\xi(|j|)+\mathcal R_\xi(|j-k|)+\mathcal R_\xi(|j+k|)\}^\tau\\
&\leq C\{\mathcal R_\xi(|j|)^\tau+\mathcal R_\xi(|j-k|)^\tau+\mathcal R_\xi(|j+k|)^\tau\}.
\end{align*}
This argument also covers the overlap indices $j\in\{0,-k,k\}$. Consequently, the shifted-sum argument in Lemma~\ref{lem_shift} verifies Assumption~\ref{ass_dt} whenever $\varpi(h)=\mathcal R_\xi(h)^\tau$ satisfies \eqref{eq_shift}. The baseline case $\tau=1$ uses the Gaussian covariance envelope itself, while a larger Hermite rank permits slower decay of that envelope. See \citet{Viitasaari20} for multivariate Hermite expansions and Hermite rank.

\newpage
\section{Additional simulation results}\label{supp sect::additional numerical}

\subsection{Simulation results for real-valued time series}\label{supp subsect::real}

Tables~\ref{tab_mc_scalar_ratio} and~\ref{tab_mc_scalar_ls} evaluate the finite-sample performance of the log-ratio estimators and the log-slope estimators, respectively, for the Real design.

\begin{table}[H]
\centering
\renewcommand{\arraystretch}{0.85}
\caption{\small Monte Carlo results for the Real design using the log-ratio estimators. Entries are means with RMSEs in parentheses, based on 1,000 replications.}
\label{tab_mc_scalar_ratio}
\fontsize{7.3pt}{8.4pt}\selectfont
\setstretch{1.0}
\setlength{\tabcolsep}{9pt}
\begin{tabular}{@{}crrrrrr@{}}
\toprule
& \multicolumn{3}{c}{Baseline} & \multicolumn{3}{c}{Tuning average} \\
\cmidrule(lr){2-4}\cmidrule(lr){5-7}
$\mathrm d$ & Raw & BC & FP & Raw & BC & FP \\
\midrule
\multicolumn{7}{l}{\hspace{-0.1in}{$n=250$}} \\
.00 & -.022 (.130) & -.016 (.138) & -.015 (.138) & -.021 (.126) & -.015 (.133) & -.014 (.134) \\
.10 & .068 (.125) & .089 (.140) & .090 (.143) & .067 (.122) & .088 (.137) & .090 (.140) \\
.20 & .143 (.121) & .185 (.135) & .190 (.142) & .142 (.119) & .184 (.134) & .190 (.140) \\
.30 & .216 (.124) & .286 (.125) & .297 (.133) & .215 (.124) & .285 (.124) & .297 (.132) \\
.40 & .281 (.143) & .376 (.113) & .393 (.114) & .279 (.144) & .376 (.112) & .392 (.114) \\
\addlinespace
\multicolumn{7}{l}{\hspace{-0.1in}{$n=500$}} \\
.00 & -.013 (.118) & -.010 (.122) & -.010 (.122) & -.013 (.110) & -.010 (.115) & -.010 (.115) \\
.10 & .082 (.097) & .097 (.107) & .097 (.108) & .082 (.094) & .096 (.104) & .097 (.105) \\
.20 & .162 (.093) & .196 (.104) & .199 (.108) & .162 (.092) & .196 (.103) & .199 (.106) \\
.30 & .240 (.092) & .299 (.094) & .309 (.103) & .240 (.091) & .299 (.093) & .309 (.102) \\
.40 & .300 (.118) & .380 (.090) & .394 (.093) & .301 (.117) & .381 (.090) & .395 (.092) \\
\addlinespace
\multicolumn{7}{l}{\hspace{-0.1in}{$n=1000$}} \\
.00 & -.014 (.103) & -.012 (.105) & -.012 (.105) & -.013 (.099) & -.011 (.101) & -.011 (.101) \\
.10 & .092 (.078) & .103 (.085) & .103 (.085) & .092 (.077) & .103 (.084) & .103 (.084) \\
.20 & .179 (.067) & .205 (.077) & .207 (.079) & .179 (.066) & .205 (.076) & .207 (.078) \\
.30 & .257 (.071) & .304 (.075) & .311 (.083) & .256 (.071) & .304 (.075) & .311 (.082) \\
.40 & .318 (.096) & .387 (.071) & .399 (.074) & .318 (.096) & .387 (.071) & .399 (.074) \\
\addlinespace
\multicolumn{7}{l}{\hspace{-0.1in}{$n=1500$}} \\
.00 & -.012 (.093) & -.010 (.095) & -.010 (.095) & -.011 (.089) & -.009 (.091) & -.009 (.091) \\
.10 & .100 (.067) & .109 (.073) & .109 (.073) & .100 (.066) & .109 (.072) & .109 (.072) \\
.20 & .190 (.055) & .213 (.065) & .214 (.067) & .190 (.054) & .213 (.064) & .214 (.066) \\
.30 & .267 (.059) & .308 (.064) & .314 (.070) & .266 (.058) & .308 (.064) & .314 (.070) \\
.40 & .328 (.084) & .390 (.063) & .401 (.066) & .328 (.084) & .390 (.063) & .401 (.066) \\
\addlinespace
\multicolumn{7}{l}{\hspace{-0.1in}{$n=2000$}} \\
.00 & -.011 (.094) & -.010 (.095) & -.010 (.095) & -.010 (.088) & -.009 (.089) & -.009 (.089) \\
.10 & .101 (.062) & .110 (.068) & .110 (.068) & .102 (.061) & .110 (.067) & .110 (.067) \\
.20 & .191 (.052) & .213 (.062) & .214 (.063) & .192 (.051) & .213 (.060) & .214 (.062) \\
.30 & .267 (.056) & .305 (.059) & .310 (.064) & .267 (.055) & .306 (.058) & .310 (.063) \\
.40 & .330 (.080) & .388 (.059) & .398 (.062) & .331 (.080) & .389 (.058) & .399 (.061) \\
\bottomrule
\end{tabular}
\end{table}

\begin{table}[H]
\centering
\renewcommand{\arraystretch}{0.85}
\caption{\small Monte Carlo results for the Real design using the log-slope estimators. Entries are means with RMSEs in parentheses, based on 1,000 replications.}
\label{tab_mc_scalar_ls}
\fontsize{7.3pt}{8.4pt}\selectfont
\setstretch{1.0}
\setlength{\tabcolsep}{9pt}
\begin{tabular}{@{}crrrrrr@{}}
\toprule
& \multicolumn{3}{c}{Baseline} & \multicolumn{3}{c}{Tuning average} \\
\cmidrule(lr){2-4}\cmidrule(lr){5-7}
$\mathrm d$ & Raw & BC & FP & Raw & BC & FP \\
\midrule
\multicolumn{7}{l}{\hspace{-0.1in}{$n=250$}} \\
.00 & -.022 (.132) & -.016 (.139) & -.015 (.140) & -.021 (.126) & -.014 (.134) & -.014 (.135) \\
.10 & .068 (.126) & .089 (.142) & .091 (.144) & .068 (.123) & .089 (.138) & .090 (.141) \\
.20 & .143 (.122) & .185 (.137) & .190 (.143) & .142 (.120) & .184 (.134) & .190 (.140) \\
.30 & .216 (.124) & .285 (.125) & .297 (.133) & .215 (.124) & .285 (.125) & .296 (.132) \\
.40 & .280 (.144) & .376 (.113) & .392 (.114) & .279 (.144) & .375 (.113) & .391 (.114) \\
\addlinespace
\multicolumn{7}{l}{\hspace{-0.1in}{$n=500$}} \\
.00 & -.013 (.119) & -.009 (.123) & -.009 (.123) & -.013 (.112) & -.009 (.116) & -.009 (.116) \\
.10 & .083 (.098) & .097 (.108) & .098 (.109) & .082 (.095) & .097 (.105) & .097 (.106) \\
.20 & .162 (.094) & .196 (.105) & .199 (.109) & .162 (.092) & .196 (.103) & .199 (.107) \\
.30 & .240 (.093) & .299 (.095) & .309 (.103) & .240 (.092) & .299 (.094) & .309 (.102) \\
.40 & .300 (.119) & .379 (.091) & .393 (.093) & .300 (.118) & .380 (.090) & .394 (.092) \\
\addlinespace
\multicolumn{7}{l}{\hspace{-0.1in}{$n=1000$}} \\
.00 & -.014 (.105) & -.012 (.107) & -.012 (.107) & -.012 (.100) & -.010 (.102) & -.010 (.102) \\
.10 & .092 (.080) & .103 (.087) & .103 (.087) & .092 (.078) & .103 (.085) & .103 (.085) \\
.20 & .179 (.068) & .205 (.077) & .207 (.080) & .178 (.067) & .205 (.076) & .207 (.079) \\
.30 & .257 (.072) & .304 (.076) & .311 (.083) & .256 (.072) & .303 (.075) & .310 (.082) \\
.40 & .318 (.096) & .386 (.072) & .398 (.074) & .318 (.096) & .386 (.071) & .398 (.074) \\
\addlinespace
\multicolumn{7}{l}{\hspace{-0.1in}{$n=1500$}} \\
.00 & -.011 (.095) & -.010 (.097) & -.010 (.097) & -.010 (.090) & -.008 (.091) & -.008 (.091) \\
.10 & .100 (.068) & .109 (.075) & .109 (.075) & .100 (.067) & .109 (.073) & .109 (.073) \\
.20 & .190 (.055) & .213 (.066) & .214 (.067) & .190 (.055) & .213 (.065) & .214 (.066) \\
.30 & .266 (.059) & .308 (.064) & .313 (.070) & .266 (.059) & .308 (.064) & .313 (.070) \\
.40 & .328 (.084) & .390 (.064) & .400 (.067) & .328 (.084) & .390 (.063) & .400 (.066) \\
\addlinespace
\multicolumn{7}{l}{\hspace{-0.1in}{$n=2000$}} \\
.00 & -.011 (.097) & -.010 (.098) & -.010 (.098) & -.010 (.089) & -.009 (.090) & -.009 (.090) \\
.10 & .101 (.063) & .109 (.069) & .110 (.069) & .102 (.062) & .110 (.068) & .110 (.068) \\
.20 & .191 (.053) & .213 (.062) & .213 (.064) & .192 (.052) & .213 (.061) & .214 (.062) \\
.30 & .267 (.056) & .305 (.059) & .309 (.064) & .267 (.055) & .305 (.058) & .310 (.063) \\
.40 & .330 (.081) & .388 (.059) & .398 (.062) & .330 (.080) & .388 (.059) & .398 (.061) \\
\bottomrule
\end{tabular}
\end{table}

Tables~\ref{tab_mc_scalar_ratio} and~\ref{tab_mc_scalar_ls} show patterns similar to those in Table~\ref{tab_mc_psd_ratio}.

\subsection{Simulation results for matrix-valued time series}\label{supp subsect::matrix}

Table~\ref{tab_mc_psd_ls} reports the finite-sample performance of the log-slope estimators for the Matrix design, with RMSEs decreasing as $n$ increases, similar to Table~\ref{tab_mc_psd_ratio}. The log-ratio and log-slope constructions perform similarly, with slightly higher aggregate RMSEs for the log-slope estimators. The tuning-average columns below have smaller RMSEs than the baseline columns in most cells, which aggregates in Table~\ref{tab_mc_aggregate} to a reduction from $0.0919$ to $0.0902$ for Raw, from $0.0907$ to $0.0889$ for BC, and from $0.0930$ to $0.0912$ for FP, while leaving MAB essentially unchanged. At $\mathrm d=0$, the Monte Carlo means remain close to zero and negative for all methods, with BC and FP giving very similar results.

\begin{table}[H]
\centering
\renewcommand{\arraystretch}{0.85}
\caption{\small Monte Carlo results for the Matrix design using the log-slope estimators. Entries are means with RMSEs in parentheses, based on 1,000 replications.}
\label{tab_mc_psd_ls}
\fontsize{7.3pt}{8.4pt}\selectfont
\setstretch{1.0}
\setlength{\tabcolsep}{9pt}
\begin{tabular}{@{}crrrrrr@{}}
\toprule
& \multicolumn{3}{c}{Baseline} & \multicolumn{3}{c}{Tuning average} \\
\cmidrule(lr){2-4}\cmidrule(lr){5-7}
$\mathrm d$ & Raw & BC & FP & Raw & BC & FP \\
\midrule
\multicolumn{7}{l}{\hspace{-0.1in}{$n=250$}} \\
.00 & -.021 (.129) & -.015 (.136) & -.015 (.137) & -.021 (.124) & -.014 (.131) & -.014 (.132) \\
.10 & .066 (.124) & .086 (.138) & .087 (.141) & .065 (.121) & .085 (.135) & .087 (.138) \\
.20 & .139 (.122) & .179 (.135) & .185 (.141) & .138 (.121) & .179 (.133) & .184 (.139) \\
.30 & .212 (.127) & .279 (.125) & .291 (.132) & .211 (.127) & .279 (.124) & .290 (.131) \\
.40 & .277 (.146) & .372 (.114) & .388 (.115) & .276 (.147) & .371 (.114) & .387 (.115) \\
\addlinespace
\multicolumn{7}{l}{\hspace{-0.1in}{$n=500$}} \\
.00 & -.013 (.114) & -.009 (.118) & -.009 (.118) & -.013 (.108) & -.009 (.113) & -.009 (.113) \\
.10 & .080 (.097) & .093 (.106) & .094 (.107) & .079 (.094) & .093 (.103) & .094 (.104) \\
.20 & .158 (.095) & .191 (.104) & .193 (.108) & .158 (.093) & .191 (.102) & .194 (.106) \\
.30 & .236 (.095) & .293 (.094) & .303 (.102) & .236 (.094) & .294 (.093) & .303 (.101) \\
.40 & .297 (.121) & .376 (.092) & .390 (.094) & .297 (.120) & .377 (.091) & .391 (.093) \\
\addlinespace
\multicolumn{7}{l}{\hspace{-0.1in}{$n=1000$}} \\
.00 & -.013 (.101) & -.011 (.103) & -.011 (.103) & -.012 (.096) & -.010 (.098) & -.010 (.098) \\
.10 & .089 (.078) & .099 (.085) & .099 (.085) & .088 (.077) & .098 (.083) & .099 (.083) \\
.20 & .174 (.069) & .200 (.076) & .201 (.078) & .174 (.068) & .200 (.075) & .201 (.077) \\
.30 & .253 (.074) & .299 (.075) & .305 (.082) & .252 (.074) & .298 (.075) & .305 (.081) \\
.40 & .315 (.098) & .383 (.072) & .395 (.075) & .315 (.098) & .383 (.072) & .395 (.074) \\
\addlinespace
\multicolumn{7}{l}{\hspace{-0.1in}{$n=1500$}} \\
.00 & -.010 (.090) & -.008 (.092) & -.008 (.092) & -.010 (.087) & -.008 (.089) & -.008 (.089) \\
.10 & .096 (.067) & .105 (.073) & .105 (.073) & .096 (.066) & .105 (.071) & .105 (.071) \\
.20 & .186 (.056) & .208 (.064) & .209 (.065) & .186 (.055) & .208 (.063) & .209 (.065) \\
.30 & .263 (.061) & .303 (.064) & .308 (.069) & .262 (.061) & .303 (.063) & .308 (.068) \\
.40 & .325 (.087) & .386 (.064) & .397 (.067) & .325 (.087) & .386 (.064) & .397 (.067) \\
\addlinespace
\multicolumn{7}{l}{\hspace{-0.1in}{$n=2000$}} \\
.00 & -.010 (.089) & -.009 (.090) & -.009 (.090) & -.010 (.085) & -.009 (.086) & -.009 (.086) \\
.10 & .098 (.062) & .106 (.067) & .106 (.067) & .098 (.060) & .106 (.065) & .106 (.066) \\
.20 & .187 (.053) & .208 (.061) & .209 (.062) & .187 (.052) & .208 (.059) & .209 (.060) \\
.30 & .263 (.058) & .301 (.058) & .305 (.063) & .263 (.057) & .301 (.058) & .305 (.062) \\
.40 & .327 (.083) & .384 (.060) & .394 (.062) & .328 (.082) & .385 (.059) & .395 (.062) \\
\bottomrule
\end{tabular}
\end{table}

\subsection{Simulation results for distributional time series}\label{supp subsect::distribution}

\subsubsection*{Dist--Location design}

For the Dist--Location design, Tables~\ref{tab_mc_plwass_ratio} and~\ref{tab_mc_plwass_ls} report the finite-sample performance of the log-ratio and log-slope estimators, respectively.

\begin{table}[H]
\centering
\renewcommand{\arraystretch}{0.85}
\caption{Monte Carlo results for the Dist--Location design using the log-ratio estimators. Entries are means with RMSEs in parentheses, based on 1,000 replications.}
\label{tab_mc_plwass_ratio}
\fontsize{7.3pt}{8.4pt}\selectfont
\setstretch{1.0}
\setlength{\tabcolsep}{9.5pt}
\begin{tabular}{@{}crrrrrr@{}}
\toprule
& \multicolumn{3}{c}{Baseline} & \multicolumn{3}{c}{Tuning average} \\
\cmidrule(lr){2-4}\cmidrule(lr){5-7}
$\mathrm d$ & Raw & BC & FP & Raw & BC & FP \\
\midrule
\multicolumn{7}{l}{\hspace{-0.1in}{$n=250$}} \\
.00 & -.022 (.130) & -.016 (.138) & -.015 (.138) & -.021 (.126) & -.015 (.133) & -.014 (.134) \\
.10 & .068 (.125) & .089 (.140) & .090 (.143) & .067 (.122) & .088 (.137) & .090 (.140) \\
.20 & .143 (.121) & .185 (.135) & .190 (.142) & .142 (.119) & .184 (.134) & .190 (.140) \\
.30 & .216 (.124) & .286 (.125) & .297 (.133) & .215 (.124) & .285 (.124) & .297 (.132) \\
.40 & .281 (.143) & .376 (.113) & .393 (.114) & .279 (.144) & .376 (.112) & .392 (.114) \\
\addlinespace
\multicolumn{7}{l}{\hspace{-0.1in}{$n=500$}} \\
.00 & -.013 (.118) & -.010 (.122) & -.010 (.122) & -.013 (.110) & -.010 (.115) & -.010 (.115) \\
.10 & .082 (.097) & .097 (.107) & .097 (.108) & .082 (.094) & .096 (.104) & .097 (.105) \\
.20 & .162 (.093) & .196 (.104) & .199 (.108) & .162 (.092) & .196 (.103) & .199 (.106) \\
.30 & .240 (.092) & .299 (.094) & .309 (.103) & .240 (.091) & .299 (.093) & .309 (.102) \\
.40 & .300 (.118) & .380 (.090) & .394 (.093) & .301 (.117) & .381 (.090) & .395 (.092) \\
\addlinespace
\multicolumn{7}{l}{\hspace{-0.1in}{$n=1000$}} \\
.00 & -.014 (.103) & -.012 (.105) & -.012 (.105) & -.013 (.099) & -.011 (.101) & -.011 (.101) \\
.10 & .092 (.078) & .103 (.085) & .103 (.085) & .092 (.077) & .103 (.084) & .103 (.084) \\
.20 & .179 (.067) & .205 (.077) & .207 (.079) & .179 (.066) & .205 (.076) & .207 (.078) \\
.30 & .257 (.071) & .304 (.075) & .311 (.083) & .256 (.071) & .304 (.075) & .311 (.082) \\
.40 & .318 (.096) & .387 (.071) & .399 (.074) & .318 (.096) & .387 (.071) & .399 (.074) \\
\addlinespace
\multicolumn{7}{l}{\hspace{-0.1in}{$n=1500$}} \\
.00 & -.012 (.093) & -.010 (.095) & -.010 (.095) & -.011 (.089) & -.009 (.091) & -.009 (.091) \\
.10 & .100 (.067) & .109 (.073) & .109 (.073) & .100 (.066) & .109 (.072) & .109 (.072) \\
.20 & .190 (.055) & .213 (.065) & .214 (.067) & .190 (.054) & .213 (.064) & .214 (.066) \\
.30 & .267 (.059) & .308 (.064) & .314 (.070) & .266 (.058) & .308 (.064) & .314 (.070) \\
.40 & .328 (.084) & .390 (.063) & .401 (.066) & .328 (.084) & .390 (.063) & .401 (.066) \\
\addlinespace
\multicolumn{7}{l}{\hspace{-0.1in}{$n=2000$}} \\
.00 & -.011 (.094) & -.010 (.095) & -.010 (.095) & -.010 (.088) & -.009 (.089) & -.009 (.089) \\
.10 & .101 (.062) & .110 (.068) & .110 (.068) & .102 (.061) & .110 (.067) & .110 (.067) \\
.20 & .191 (.052) & .213 (.062) & .214 (.063) & .192 (.051) & .213 (.060) & .214 (.062) \\
.30 & .267 (.056) & .305 (.059) & .310 (.064) & .267 (.055) & .306 (.058) & .310 (.063) \\
.40 & .330 (.080) & .388 (.059) & .398 (.062) & .331 (.080) & .389 (.058) & .399 (.061) \\
\bottomrule
\end{tabular}
\end{table}

\begin{table}[H]
\centering
\renewcommand{\arraystretch}{0.85}
\caption{Monte Carlo results for the Dist--Location design using the log-slope estimators. Entries are means with RMSEs in parentheses, based on 1,000 replications.}
\label{tab_mc_plwass_ls}
\fontsize{7.3pt}{8.4pt}\selectfont
\setstretch{1.0}
\setlength{\tabcolsep}{9.5pt}
\begin{tabular}{@{}crrrrrr@{}}
\toprule
& \multicolumn{3}{c}{Baseline} & \multicolumn{3}{c}{Tuning average} \\
\cmidrule(lr){2-4}\cmidrule(lr){5-7}
$\mathrm d$ & Raw & BC & FP & Raw & BC & FP \\
\midrule
\multicolumn{7}{l}{\hspace{-0.1in}{$n=250$}} \\
.00 & -.022 (.132) & -.016 (.139) & -.015 (.140) & -.021 (.126) & -.014 (.134) & -.014 (.135) \\
.10 & .068 (.126) & .089 (.142) & .091 (.144) & .068 (.123) & .089 (.138) & .090 (.141) \\
.20 & .143 (.122) & .185 (.137) & .190 (.143) & .142 (.120) & .184 (.134) & .190 (.140) \\
.30 & .216 (.124) & .285 (.125) & .297 (.133) & .215 (.124) & .285 (.125) & .296 (.132) \\
.40 & .280 (.144) & .376 (.113) & .392 (.114) & .279 (.144) & .375 (.113) & .391 (.114) \\
\addlinespace
\multicolumn{7}{l}{\hspace{-0.1in}{$n=500$}} \\
.00 & -.013 (.119) & -.009 (.123) & -.009 (.123) & -.013 (.112) & -.009 (.116) & -.009 (.116) \\
.10 & .083 (.098) & .097 (.108) & .098 (.109) & .082 (.095) & .097 (.105) & .097 (.106) \\
.20 & .162 (.094) & .196 (.105) & .199 (.109) & .162 (.092) & .196 (.103) & .199 (.107) \\
.30 & .240 (.093) & .299 (.095) & .309 (.103) & .240 (.092) & .299 (.094) & .309 (.102) \\
.40 & .300 (.119) & .379 (.091) & .393 (.093) & .300 (.118) & .380 (.090) & .394 (.092) \\
\addlinespace
\multicolumn{7}{l}{\hspace{-0.1in}{$n=1000$}} \\
.00 & -.014 (.105) & -.012 (.107) & -.012 (.107) & -.012 (.100) & -.010 (.102) & -.010 (.102) \\
.10 & .092 (.080) & .103 (.087) & .103 (.087) & .092 (.078) & .103 (.085) & .103 (.085) \\
.20 & .179 (.068) & .205 (.077) & .207 (.080) & .178 (.067) & .205 (.076) & .207 (.079) \\
.30 & .257 (.072) & .304 (.076) & .311 (.083) & .256 (.072) & .303 (.075) & .310 (.082) \\
.40 & .318 (.096) & .386 (.072) & .398 (.074) & .318 (.096) & .386 (.071) & .398 (.074) \\
\addlinespace
\multicolumn{7}{l}{\hspace{-0.1in}{$n=1500$}} \\
.00 & -.011 (.095) & -.010 (.097) & -.010 (.097) & -.010 (.090) & -.008 (.091) & -.008 (.091) \\
.10 & .100 (.068) & .109 (.075) & .109 (.075) & .100 (.067) & .109 (.073) & .109 (.073) \\
.20 & .190 (.055) & .213 (.066) & .214 (.067) & .190 (.055) & .213 (.065) & .214 (.066) \\
.30 & .266 (.059) & .308 (.064) & .313 (.070) & .266 (.059) & .308 (.064) & .313 (.070) \\
.40 & .328 (.084) & .390 (.064) & .400 (.067) & .328 (.084) & .390 (.063) & .400 (.066) \\
\addlinespace
\multicolumn{7}{l}{\hspace{-0.1in}{$n=2000$}} \\
.00 & -.011 (.097) & -.010 (.098) & -.010 (.098) & -.010 (.089) & -.009 (.090) & -.009 (.090) \\
.10 & .101 (.063) & .109 (.069) & .110 (.069) & .102 (.062) & .110 (.068) & .110 (.068) \\
.20 & .191 (.053) & .213 (.062) & .213 (.064) & .192 (.052) & .213 (.061) & .214 (.062) \\
.30 & .267 (.056) & .305 (.059) & .309 (.064) & .267 (.055) & .305 (.058) & .310 (.063) \\
.40 & .330 (.081) & .388 (.059) & .398 (.062) & .330 (.080) & .388 (.059) & .398 (.061) \\
\bottomrule
\end{tabular}
\end{table}

\subsubsection*{Dist--Location\&Scale design}

For the Dist--Location\&Scale design, Tables~\ref{tab_mc_wass_ratio} and~\ref{tab_mc_wass_ls} report the finite-sample performance of the log-ratio and log-slope estimators, respectively.

\begin{table}[H]
\centering
\renewcommand{\arraystretch}{0.85}
\caption{Monte Carlo results for the Dist--Location\&Scale design using the log-ratio estimators. Entries are means with RMSEs in parentheses, based on 1,000 replications.}
\label{tab_mc_wass_ratio}
\fontsize{7.3pt}{8.4pt}\selectfont
\setstretch{1.05}
\setlength{\tabcolsep}{9.5pt}
\begin{tabular}{@{}crrrrrr@{}}
\toprule
& \multicolumn{3}{c}{Baseline} & \multicolumn{3}{c}{Tuning average} \\
\cmidrule(lr){2-4}\cmidrule(lr){5-7}
$\mathrm d$ & Raw & BC & FP & Raw & BC & FP \\
\midrule
\multicolumn{7}{l}{\hspace{-0.1in}{$n=250$}} \\
.00 & -.022 (.130) & -.016 (.138) & -.015 (.138) & -.021 (.125) & -.015 (.133) & -.014 (.134) \\
.10 & .068 (.125) & .088 (.140) & .090 (.143) & .067 (.122) & .088 (.137) & .090 (.140) \\
.20 & .143 (.121) & .184 (.135) & .190 (.142) & .142 (.119) & .184 (.133) & .189 (.139) \\
.30 & .216 (.124) & .285 (.125) & .297 (.133) & .215 (.124) & .284 (.124) & .296 (.132) \\
.40 & .280 (.143) & .376 (.113) & .392 (.114) & .279 (.144) & .376 (.112) & .392 (.114) \\
\addlinespace
\multicolumn{7}{l}{\hspace{-0.1in}{$n=500$}} \\
.00 & -.013 (.117) & -.010 (.122) & -.010 (.122) & -.013 (.110) & -.010 (.114) & -.010 (.115) \\
.10 & .082 (.097) & .097 (.107) & .097 (.108) & .082 (.094) & .096 (.104) & .097 (.105) \\
.20 & .162 (.094) & .196 (.104) & .199 (.108) & .162 (.092) & .196 (.103) & .199 (.106) \\
.30 & .240 (.092) & .299 (.094) & .309 (.103) & .240 (.091) & .299 (.093) & .309 (.102) \\
.40 & .300 (.119) & .380 (.090) & .394 (.093) & .300 (.118) & .381 (.090) & .395 (.092) \\
\addlinespace
\multicolumn{7}{l}{\hspace{-0.1in}{$n=1000$}} \\
.00 & -.014 (.103) & -.012 (.105) & -.012 (.105) & -.013 (.099) & -.011 (.101) & -.011 (.101) \\
.10 & .092 (.078) & .102 (.085) & .103 (.085) & .092 (.077) & .102 (.084) & .102 (.084) \\
.20 & .179 (.067) & .205 (.077) & .207 (.079) & .178 (.066) & .205 (.076) & .207 (.078) \\
.30 & .257 (.072) & .304 (.075) & .311 (.082) & .256 (.071) & .303 (.075) & .310 (.082) \\
.40 & .318 (.096) & .387 (.071) & .399 (.074) & .318 (.096) & .387 (.071) & .399 (.074) \\
\addlinespace
\multicolumn{7}{l}{\hspace{-0.1in}{$n=1500$}} \\
.00 & -.012 (.093) & -.010 (.095) & -.010 (.095) & -.011 (.089) & -.009 (.091) & -.009 (.091) \\
.10 & .100 (.067) & .109 (.073) & .109 (.073) & .100 (.066) & .109 (.072) & .109 (.072) \\
.20 & .190 (.055) & .213 (.065) & .214 (.066) & .190 (.054) & .213 (.064) & .214 (.066) \\
.30 & .266 (.059) & .308 (.064) & .313 (.070) & .266 (.058) & .308 (.064) & .313 (.070) \\
.40 & .328 (.084) & .390 (.063) & .400 (.066) & .328 (.084) & .390 (.063) & .400 (.066) \\
\addlinespace
\multicolumn{7}{l}{\hspace{-0.1in}{$n=2000$}} \\
.00 & -.011 (.093) & -.010 (.094) & -.010 (.094) & -.010 (.087) & -.010 (.088) & -.010 (.088) \\
.10 & .101 (.062) & .109 (.068) & .109 (.068) & .102 (.061) & .110 (.067) & .110 (.067) \\
.20 & .191 (.052) & .213 (.061) & .214 (.063) & .192 (.051) & .213 (.060) & .214 (.061) \\
.30 & .267 (.056) & .305 (.059) & .310 (.064) & .267 (.055) & .305 (.058) & .310 (.063) \\
.40 & .330 (.081) & .388 (.059) & .398 (.062) & .331 (.080) & .389 (.058) & .399 (.061) \\
\bottomrule
\end{tabular}
\end{table}

\begin{table}[H]
\centering
\renewcommand{\arraystretch}{0.85}
\caption{Monte Carlo results for the Dist--Location\&Scale design using the log-slope estimators. Entries are means with RMSEs in parentheses, based on 1,000 replications.}
\label{tab_mc_wass_ls}
\fontsize{7.3pt}{8.4pt}\selectfont
\setstretch{1.0}
\setlength{\tabcolsep}{9.5pt}
\begin{tabular}{@{}crrrrrr@{}}
\toprule
& \multicolumn{3}{c}{Baseline} & \multicolumn{3}{c}{Tuning average} \\
\cmidrule(lr){2-4}\cmidrule(lr){5-7}
$\mathrm d$ & Raw & BC & FP & Raw & BC & FP \\
\midrule
\multicolumn{7}{l}{\hspace{-0.1in}{$n=250$}} \\
.00 & -.022 (.132) & -.016 (.139) & -.015 (.140) & -.021 (.126) & -.014 (.134) & -.014 (.134) \\
.10 & .068 (.126) & .089 (.141) & .091 (.144) & .068 (.123) & .089 (.138) & .090 (.140) \\
.20 & .143 (.122) & .184 (.137) & .190 (.143) & .142 (.120) & .184 (.134) & .190 (.140) \\
.30 & .216 (.124) & .285 (.125) & .297 (.133) & .215 (.125) & .284 (.125) & .296 (.132) \\
.40 & .280 (.144) & .376 (.113) & .392 (.114) & .279 (.144) & .375 (.113) & .391 (.114) \\
\addlinespace
\multicolumn{7}{l}{\hspace{-0.1in}{$n=500$}} \\
.00 & -.013 (.119) & -.009 (.123) & -.009 (.123) & -.013 (.111) & -.009 (.116) & -.009 (.116) \\
.10 & .082 (.098) & .097 (.108) & .097 (.109) & .082 (.095) & .097 (.105) & .097 (.106) \\
.20 & .162 (.094) & .196 (.105) & .199 (.109) & .162 (.092) & .196 (.103) & .199 (.107) \\
.30 & .240 (.093) & .299 (.095) & .309 (.103) & .240 (.092) & .299 (.094) & .309 (.102) \\
.40 & .299 (.119) & .379 (.091) & .393 (.093) & .300 (.118) & .380 (.090) & .394 (.092) \\
\addlinespace
\multicolumn{7}{l}{\hspace{-0.1in}{$n=1000$}} \\
.00 & -.014 (.105) & -.012 (.107) & -.012 (.107) & -.012 (.099) & -.011 (.101) & -.011 (.102) \\
.10 & .092 (.080) & .103 (.087) & .103 (.087) & .092 (.078) & .102 (.085) & .102 (.085) \\
.20 & .178 (.068) & .205 (.077) & .207 (.080) & .178 (.067) & .205 (.076) & .206 (.079) \\
.30 & .256 (.072) & .303 (.076) & .310 (.083) & .256 (.072) & .303 (.075) & .310 (.082) \\
.40 & .318 (.096) & .386 (.072) & .398 (.074) & .318 (.096) & .386 (.071) & .398 (.074) \\
\addlinespace
\multicolumn{7}{l}{\hspace{-0.1in}{$n=1500$}} \\
.00 & -.011 (.095) & -.010 (.096) & -.010 (.096) & -.010 (.090) & -.008 (.091) & -.008 (.091) \\
.10 & .100 (.068) & .109 (.075) & .109 (.075) & .100 (.067) & .109 (.073) & .109 (.073) \\
.20 & .190 (.055) & .213 (.066) & .214 (.067) & .190 (.055) & .213 (.065) & .214 (.066) \\
.30 & .266 (.059) & .308 (.064) & .313 (.070) & .266 (.059) & .308 (.064) & .313 (.070) \\
.40 & .328 (.085) & .389 (.064) & .400 (.067) & .328 (.084) & .390 (.063) & .400 (.066) \\
\addlinespace
\multicolumn{7}{l}{\hspace{-0.1in}{$n=2000$}} \\
.00 & -.011 (.097) & -.010 (.098) & -.010 (.098) & -.010 (.089) & -.009 (.090) & -.009 (.090) \\
.10 & .101 (.063) & .109 (.069) & .109 (.069) & .102 (.062) & .110 (.067) & .110 (.067) \\
.20 & .191 (.053) & .212 (.062) & .213 (.063) & .191 (.052) & .213 (.061) & .214 (.062) \\
.30 & .266 (.056) & .305 (.059) & .309 (.064) & .267 (.055) & .305 (.058) & .310 (.063) \\
.40 & .330 (.081) & .388 (.059) & .398 (.062) & .330 (.080) & .388 (.059) & .398 (.061) \\
\bottomrule
\end{tabular}
\end{table}

Tables~\ref{tab_mc_plwass_ratio}, \ref{tab_mc_plwass_ls}, \ref{tab_mc_wass_ratio}, and~\ref{tab_mc_wass_ls} show patterns similar to those in Tables~\ref{tab_mc_psd_ratio} and~\ref{tab_mc_psd_ls}.

\subsection{Finite-sample behavior of the bias-corrected estimators}\label{supp-bias-behavior}
The corrected estimators exhibit a transition from underestimation to overestimation at some memory levels. This is anticipated by the diagnostics of Section~\ref{sec_bias} of the main paper: the raw estimators carry a negative common-centering effect, the population slope itself carries a positive finite-bandwidth Bartlett curvature, and reducing the former can leave the latter visible. Both components are finite-sample in origin and vanish as $n$ grows, at the rates $n^{-(1-\kappa)(1-2\mathrm d)}$ and $n^{-2\mathrm d\kappa}$ with $m\sim c_mn^\kappa$, so the sign of the residual bias reflects which component is currently dominant rather than any asymptotic distortion. The two rates agree at $\mathrm d^\star=(1-\kappa)/2$, equal to $1/3$ for the cube-root bandwidth.

For example, in the baseline log-slope results in Table~\ref{tab_mc_wass_ls}, BC and FP first become positively biased at $n=1000$ for $\mathrm d=.1$ and $.2$. At $n=2000$, their biases at these two memory levels are $.009$ and $.012$ for BC and $.009$ and $.013$ for FP, and at $\mathrm d=.3$ they are $.005$ for BC and $.009$ for FP. The positive bias is therefore largest at $\mathrm d=.2$, below $\mathrm d^\star$, where the negative component decays faster. That it persists at $n=2000$ is expected rather than anomalous: with $\kappa=1/3$ and $\mathrm d=.2$, the curvature term is of order $n^{-2/15}$, which falls only from about $.51$ to $.39$ as $n$ increases from $250$ to $2000$. Removing the dominant negative bias therefore leaves a small positive remainder over the sample sizes considered, and both components vanish under the rate conditions of Proposition~\ref{prop_rates}. The magnitudes remain small relative to the corresponding reduction in bias reported in Table~\ref{tab_mc_aggregate}, and the RMSEs decrease with $n$ in nearly every cell. As noted in the main paper, however, these are deterministic comparisons and do not imply a universal crossing or monotonicity in $n$.
\newpage 
\section{Additional real data figure}\label{supp sect::additional real data}

\begin{figure}[!htb]
\centering
\includegraphics[width=0.8\linewidth]{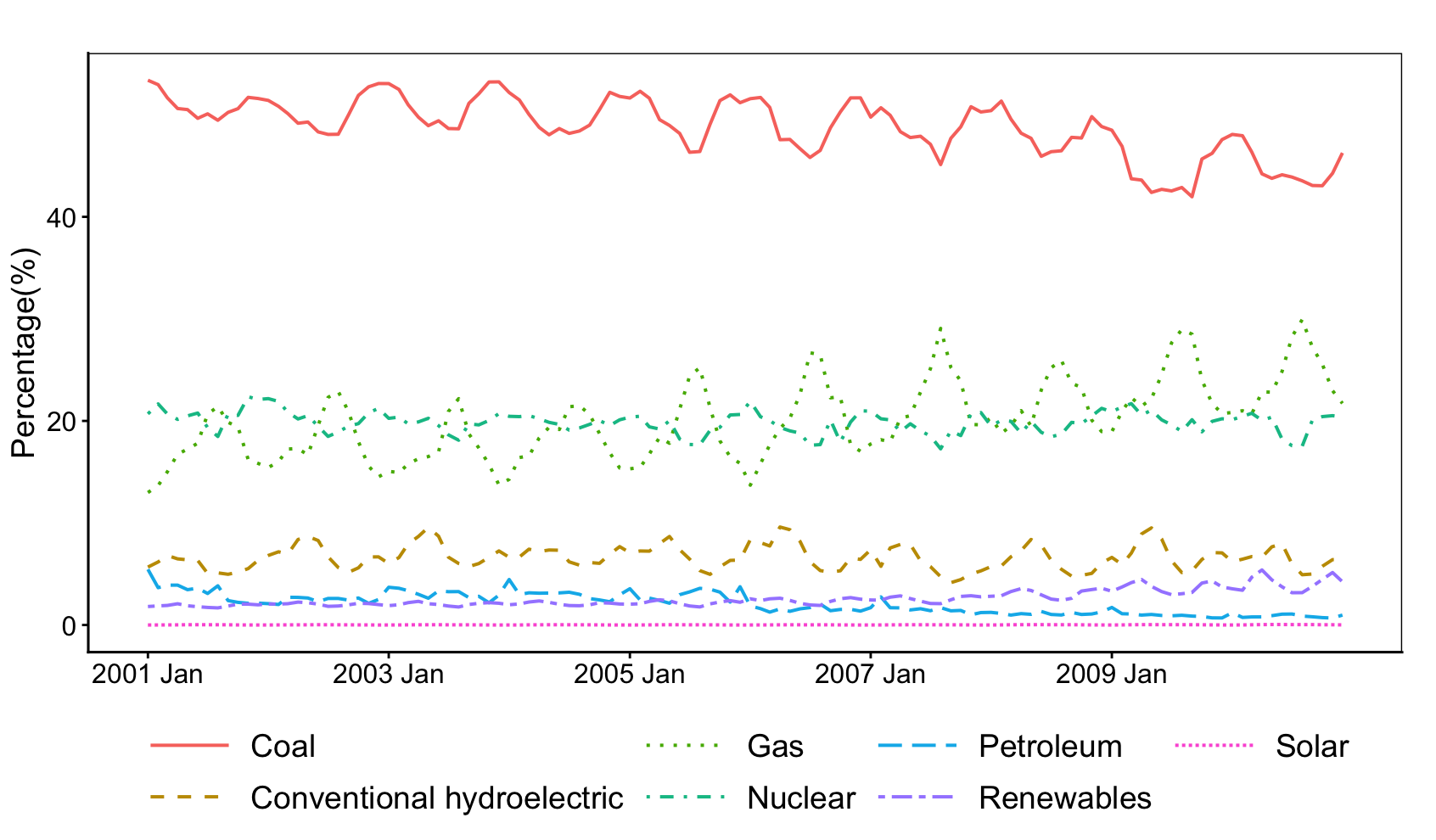} 
\caption{\small Monthly U.S. electricity-generation compositions from January 2001 to December 2010. Each line represents the percentage of one of the seven fuel categories: coal, petroleum, gas, nuclear, conventional hydroelectric, renewables, and~solar.}\label{fig::US energy}
\end{figure}

\newpage
\bibliography{memory}
\end{document}